\newtheorem{lem}{Lemma}
\newtheorem{rem}{Remark}
\newtheorem{theo}{Theorem}
\newtheorem{pro}{Proposition}
\newtheorem{ex}{Example}
\newcommand{\algmargin}{\the\ALG@thistlm}
\newlength{\forwidth}
\algnewcommand{\parState}[1]{\State
  \parbox[t]{\dimexpr\linewidth-\algmargin}{\strut #1\strut}}
\newlength{\ifwidth}
\begin{document}

\title{New Characterization and Efficient Exhaustive Search Algorithm for Elementary Trapping Sets of Variable-Regular LDPC Codes}
\author{Yoones Hashemi,\IEEEmembership{ Student Member, IEEE}, and Amir H. Banihashemi,\IEEEmembership{ Senior Member, IEEE}}
\maketitle
\IEEEpeerreviewmaketitle

\begin{abstract}
In this paper, we propose a new characterization for elementary trapping sets (ETSs) of variable-regular low-density parity-check (LDPC) codes. 
Recently, Karimi and Banihashemi proposed a characterization of ETSs, which was based on viewing an ETS as a layered superset (LSS) of a short cycle in the code's Tanner graph.
A notable advantage of LSS characterization is that it corresponds to a simple LSS-based search algorithm (expansion technique) that starts from short cycles of the graph and finds
the ETSs with LSS structure efficiently. Compared to the LSS-based characterization of Karimi and Banihashemi, which is based on a single LSS expansion technique,  
the new characterization involves two additional expansion techniques. The introduction of the new techniques mitigates two 
problems that LSS-based characterization/search suffers from:  (1) exhaustiveness: not every ETS structure is an LSS of a cycle, (2) search efficiency: 
LSS-based search algorithm often requires the enumeration of cycles with length much larger than the girth of  the graph, where the multiplicity
of such cycles increases rapidly with their length. We prove that using the three expansion techniques, any ETS structure can be obtained starting from
a simple cycle, no matter how large the size of the structure $a$ or the number of its unsatisfied check nodes $b$ are, i.e., the characterization is exhaustive.
We also demonstrate that for the proposed characterization/search to exhaustively cover all the ETS structures within the $(a,b)$ classes with
$a \leq a_{max}, b \leq b_{max}$, for any value of $a_{max}$ and $b_{max}$, the length of the short cycles required to be enumerated is 
less than that of the LSS-based characterization/search. We, in fact, show that such a length for the proposed search algorithm is minimal.
We also prove that the three expansion techniques, proposed here, are the only expansions needed for characterization of ETS structures starting from 
simple cycles in the graph, if one requires each and every intermediate sub-structure to be an ETS as well.   
Extensive simulation results are provided to show that, compared to LSS-based search, 
significant improvement in search speed and memory requirements can be achieved.     
\end{abstract}

\section{introduction}

Finite-length low-density parity-check (LDPC) codes under iterative decoding algorithms suffer from the \textit{error floor} phenomenon. In the error floor region, 
the slope of error rate curves decreases when the channel quality improves beyond a certain point. It is well-known that error-floor performance of LDPC codes is 
related to the presence of certain problematic graphical structures in the Tanner graph of the code, commonly referred to as \textit{trapping sets} \cite{richardson}.
Graphically, a trapping set (TS) is often seen as the induced subgraph of some variable nodes in the code's Tanner graph.
In such a representation,  the TS is commonly characterized by the number of its variable nodes $a$, and the number
of its odd-degree (unsatisfied) check nodes $b$, and is said to belong to the class of $(a,b)$ trapping sets.
Among TSs, the most harmful ones are known to be the \textit{elementary trapping sets (ETSs)},\cite{zhang2009toward},\cite{laendner2009},\cite{milenkovic2007asymptotic},
whose induced subgraphs contain only degree-1 and degree-2 check nodes.  

The knowledge of trapping sets and their structure has been used in estimating the error floor of LDPC codes~\cite{cole},~\cite{sina1}, in modifying iterative decoders to have better error-floor performance \cite{cavus}, \cite{Han}, \cite{kyung2012finding}, \cite{zhang2013controlling}, and in constructing LDPC codes with low error floors \cite{Ivkovic}, \cite{Asvadi}, \cite{nguyen}, \cite{KAB-2012}, \cite{ABA-2012}. 
There has also been a flurry of activity in characterization of trapping sets  \cite{vasic2009trapping}, \cite{Dolecek}, \cite{Diao}, \cite{dolecek2010analysis}, \cite{schlegel2010dynamics}, \cite{laendner2010characterization}, \cite{nguyen}, \cite{mehdi2014}, \cite{yoones2015}, 
and in developing search algorithms to find them \cite{Rosnes},  \cite{wang2009finding}, \cite{abu2010trapping}, \cite{Wang}, \cite{mehdi2012}, \cite{kyung2012finding}. 
Most of the works on TSs in the literature are non-exhaustive \cite{abu2010trapping},\cite{mehdi2012}, are limited to structured codes \cite{zhang2013controlling},\cite{schlegel2010dynamics}, are concerned with relatively small TSs \cite{laendner2010characterization}, only deal with a certain variable node degree \cite{vasic2009trapping},\cite{nguyen}, or are applicable to relatively short block lengths \cite{wang2009finding},\cite{kyung2012finding}. 

Recently, Karimi and Banihashemi \cite{mehdi2014} characterized the ETSs of variable-regular LDPC codes. They studied the graphical structure of the ETSs, and 
demonstrated that a vast majority of the non-isomorphic structures of ETSs are layered super sets (LSS) of short cycles, i.e., each structure $\mathcal{S}$ corresponds to a sequence of ETS structures that starts from a short cycle and grows one variable node at a time to reach $\mathcal{S}$.
This characterization corresponds to a simple search algorithm that starts from the instances of short cycles in the Tanner graph and can find the ETSs with the LSS property. 
A shortcoming of this characterization/search, however, is that some of the ETS structures do not lend themselves to the above characterization, i.e., they are not LSSs of any of their cycle subsets. Such structures are labeled as ``NA" in \cite{mehdi2014}.

 In an earlier paper \cite{yoones2015}, we complemented the results of \cite{mehdi2014}. By careful examination of NA structures, we demonstrated that they are all LSSs of some basic graphical structures which are slightly more complex than cycles.  These basic structures were called \textit{prime} with respect to the LSS property, or ``LSS-prime'' in brief, implying that they are not layered super sets of smaller ETSs. 
Results of  \cite{mehdi2014} (and its corrections in \cite{hashemi2015corrections}) along with those of \cite{yoones2015} provided a simple LSS-based search algorithm that can 
find all the instances of  ETSs of variable-regular LDPC codes, for any range of $a$ and $b$ values, in a guaranteed fashion, starting from the instances of LSS-prime structures. 

The LSS-based search algorithm, however, suffers from a major drawback. Although, the search algorithm itself is simple, to find the dominant ETSs in a guaranteed fashion, one may need to enumerate 
cycles of length much larger than the girth in the Tanner graph of the code, as the input to the search algorithm. The multiplicity of such cycles increases rapidly with the cycle length and thus 
the enumeration, storage and processing of these cycles pose a practical hurdle in implementing the search algorithm. The same issue applies to non-cycle LSS-prime structures. This becomes particularly
problematic for larger variable and check degrees and cases where ETSs with larger $a$ and $b$ values are of interest. In fact, our experimental results show that by increasing the variable and average check degrees, the range of $a$ and $b$ values, the code rate and the block length, the LSS-based search algorithm becomes quickly infeasible to implement.  
  
To the best of our knowledge, the LSS-based search algorithm is the most efficient exhaustive search algorithm for ETSs of a general variable-regular LDPC code.
Yet, the complexity of the algorithm is still too high for many practical codes. Motivated by this, in this paper, we develop a new characterization
of ETS structures of variable-regular LDPC codes. The new characterization is based on a hierarchical graph-based expansion approach which involves 
LSS expansion plus two additional expansion techniques. It characterizes each ETS structure with an embedded sequence of ETS structures
that starts from a simple cycle and expands, at each step, to a larger ETS using one of the three expansions, until it reaches the ETS structure of interest. 
The characterization is {\em minimal}, in the sense that, none of the expansion steps can be divided into smaller expansions such that the resulting new sub-structures are 
still ETSs. The new characterization corresponds to an efficient exhaustive search algorithm for ETSs 
that requires only short-length simple cycles of the graph as the input. The maximum length of the input cycles is often smaller than that of the LSS-based search algorithm, and is, in fact, provably minimal.  
We develop a general approach to characterize all the non-isomorphic structures of ETSs within a certain range of $a \leq a_{max}, b \leq b_{max}$, based on the three expansions, 
for LDPC codes with different variable degrees and girths. We then provide ETS characterization tables for LDPC codes with different variable degrees, girths, 
and with different rate and block length ranges. These tables are used to devise efficient search algorithms, based on the three expansion techniques, 
for exhaustively finding ETSs. The search algorithms are then applied to a large number of LDPC codes, both random and structured, and with a variety of degree distributions, 
girths, rates and block lengths, to demonstrate the strength and versatility of the proposed scheme. Compared to LSS-based search, improvements of up to three orders of magnitude 
are achieved in the run-time and memory requirements.     

The remainder of this paper is organized as follows. Basic definitions and notations are provided in Section \ref{sec:pre}. In this section, we also 
describe a software that is used to find non-isomorphic structures of ETSs in a given class. In Section \ref{sec:dot-based}, the LSS-based characterization/search is revisited 
and its shortcomings are discussed. In Section \ref{sec:dpl-based}, the new expansion techniques are introduced, the new characterization is developed,
and characterization tables are provided. We also present the new search algorithm, and discuss its complexity in this section. 
 Numerical results are presented in Section \ref{sec:numerical}, and Section \ref{conclude} is devoted to conclusions.

\section{Preliminaries }
\label{sec:pre}
\subsection{Definitions and Notations}
Consider an undirected graph $G=(F, E)$, where the two sets $F=\{f_1,\dots,f_k\}$ and $E=\{e_1,\dots,e_m\}$, are the sets of \textit{nodes} and \textit{edges} of $G$,
respectively. We say that an edge $e$ is \textit{incident} to a node $f$ if $e$ is connected to $f$.  If there exists an edge $e_k$ which is incident to 
two distinct nodes $f_i$ and $f_j$, we call these nodes \textit{adjacent}. In this case, we represent $e_k$ by $f_i f_j$ or $f_j f_i$. 
The \textit{neighborhood} of a node $f$, denoted by $\mathcal{N}(f)$, is the set of nodes adjacent to $f$. 
The degree of a node $f$ is denoted by \textit{deg(f)}, and is defined as the number of edges incident to $f$. 
The \textit{maximum degree} and the \textit{minimum degree} of a graph $G$,  denoted by $\Delta (G)$ and $\delta (G)$, respectively, are defined to be the maximum and minimum 
degree of its nodes, respectively. 

Given an undirected graph $G=(F,E)$,
 a \textit{walk} between two nodes $f_1$ and $f_{k+1}$ is a sequence of nodes and edges
 $f_1$, $e_1$, $f_2$, $e_2$, $\dots$, $f_k$, $e_k$, $f_{k+1}$, where $e_i=f_i f_{i+1}$, $\forall i \in [1,k]$. 
In this definition, the nodes $f_1,f_2,\dots,f_{k+1}$ are not necessarily distinct. The same applies to the edges $e_1,e_2,\dots,e_k$.
 A {\em path} is a walk  with no repeated nodes or edges, except the first and the last nodes that can be the same. If the first and the last nodes are distinct, we call the path an {\em open path}.
Otherwise, we call it a {\em cycle}. The \textit{length} of a walk, a  path, or a cycle is the number of its edges. 
 A {\em lollipop walk} is a walk $f_1$, $e_1$, $f_2$, $e_2$, $\dots$, $f_k$, $e_k$, $f_{k+1}$, such that all the edges and all the nodes are distinct, except that $f_{k+1}=f_m$, for some $m \in (1,k)$.
A \textit{chord} of a cycle is an edge which is not part of the cycle but is incident to two distinct nodes in the cycle.
A \textit{simple cycle} or a {\em chordless cycle} is a cycle which does not have any chord. 
Throughout this paper, we use the notation $s_k$ for a simple cycle of length $k$. Notation $c_k$ is used for cycles of length $k$ that do have at least one chord.

A graph is called {\em connected} when there is a {\em path} between every pair of nodes. A \textit{tree} is a connected graph that contains no cycles. 
A \textit{rooted tree} is a tree in which one specific node is assigned as the \textit{root}. The \textit{depth of a node} in a rooted tree is the length of the
path from the node to the root. The \textit{depth of a tree} is the maximum depth of any node in the tree. 
\textit{Depth-one tree} is a tree with depth one. 
A node $f$ is called \textit{leaf} if $deg(f)=1$. Although this terminology is commonly used for trees, in this paper, we use it for a general graph that may contain cycles. 
A \textit{leafless graph} is a connected graph $G$ with $\delta (G)=2$.

The graphs $G_1 =(F_1,E_1)$ and $G_2 =(F_2,E_2)$ are
\textit{isomorphic} if there is a bijection $p : F_1 \rightarrow F_2$ such that
nodes $f_1, f_2 \in F_1$ are joined by an edge if and only if $p(f_1)$
and $p(f_2)$ are joined by an edge. Otherwise, the graphs are \textit{non-isomorphic}.

Any $m \times n$ parity check matrix $H$ of an $(n,k)$ LDPC code $\mathcal{C}$ can be represented by its bipartite Tanner graph $G=(V \cup C, E)$, 
where $V=\{ v_1,v_2,\dots,v_n \}$ is the set of variable nodes and $C=\{ c_1,c_2,\dots,c_m \}$ is the set of check nodes. 
The degrees of nodes $v_i$ and $c_i$  are denoted by $d_{v_{i}}$ and $d_{\mathrm{c}_{i}}$, respectively. A Tanner graph is 
called {\em variable-regular} with variable degree $d_{\mathrm{v}}$ if $d_{v_i} = d_{\mathrm{v}}$, $\forall~{v}_{i} \in V$. A $(d_{\mathrm{v}},d_{\mathrm{c}})$-regular Tanner graph
is a variable-regular graph in which $d_{c_i} = d_{\mathrm{c}}$, $\forall~{c}_{i} \in C$.
For a subset $\mathcal{S}$ of $V$, the subset $\Gamma{(\mathcal{S})}$ of $C$ denotes the set of neighbors of $\mathcal{S}$ in $G$.
The \textit{induced subgraph} of $\mathcal{S}$ in $G$, denoted by $G(\mathcal{S})$, is the graph with the set of nodes $\mathcal{S} \cup \Gamma{(\mathcal{S})}$ and 
the set of edges $\{f_i f_j \in E : f_i \in \mathcal{S}, f_j \in \Gamma{(\mathcal{S})}\}$. The set of check nodes with odd and even degrees in $G(\mathcal{S})$ are denoted by $\Gamma_{o}{(\mathcal{S})}$ and $\Gamma_{e}{(\mathcal{S})}$, respectively. In this paper, the terms \textit{unsatisfied check nodes} and \textit{satisfied check nodes} are used to refer to the check nodes in
$\Gamma_{o}{(\mathcal{S})}$ and $\Gamma_{e}{(\mathcal{S})}$, respectively. 
The \textit{size} of an induced subgraph $G(\mathcal{S})$ is defined to be the number of its variable nodes.  
We assume that an induced subgraph is connected. Disconnected subgraphs can be considered as the union of connected ones. 
The length of the shortest cycle in a Tanner graph is called \textit{girth}, and is denoted by $g$. 
We study the variable-regular graphs with $d_{\mathrm{v}} \geq 3$, that are free of 4-cycles ($g > 4$). All the induced subgraphs with the same size $a$, 
and the same number of unsatisfied check nodes $b$, are considered to belong to the same \textit{$(a,b)$ class}.

Given a Tanner graph G,
a set $\mathcal{S}\subset V$ is called an \textit{(a,b) trapping set (TS)} if $|\mathcal{S}| = a$ and $|\Gamma_{o}{(\mathcal{S})}| = b$.
Alternatively, $\mathcal{S}$ is said to belong to the {\em class of (a,b) TSs}. Parameter $a$ is referred to as the {\em size} of the TS.
In the rest of the paper, depending on the context, the term ``trapping set'' may be used to refer to the set of variable nodes $\mathcal{S}$, 
or to the induced subgraph $G(\mathcal{S})$ of $\mathcal{S}$ in the Tanner graph $G$. Similarly, we may use ${\cal S}$ to mean $G(\mathcal{S})$.
 An \textit{elementary trapping set (ETS)} is a trapping set for which all the check nodes in $G(\mathcal{S})$ have degree 1 or 2.
 A set $\mathcal{S}\subset V$ is called an \textit{(a,b) absorbing set (AS)} if $\mathcal{S}$ is an $(a,b)$ trapping set and all
 the variable nodes in $\mathcal{S}$  are connected to more nodes in $\Gamma_{e}{(\mathcal{S})}$ than in $\Gamma_{o}{(\mathcal{S})}$.
 An {\em elementary absorbing set (EAS)} $\mathcal{S}$ is an absorbing set for which all the check nodes in $G(\mathcal{S})$ have degree 1 or 2.
 A \textit{fully absorbing set (FAS)} $\mathcal{S}\subset V$ is an absorbing set for which all the nodes in $V \backslash \mathcal{S}$ have strictly more neighbors in $C \backslash \Gamma_{o}{(\mathcal{S})}$ 
than in $\Gamma_{o}{(\mathcal{S})}$.
A set $\mathcal{S}\subset V$ is called an \textit{(a,b) fully elementary absorbing set (FEAS)} if $\mathcal{S}$ is an $(a,b)$
EAS and if all the nodes in $V\backslash \mathcal{S}$ have strictly more neighbors 
 in $C\backslash \Gamma_{o}{(\mathcal{S})}$ than in $\Gamma_{o}{(\mathcal{S})}$.

Elementary trapping sets are the subject of this paper. To simplify the representation of ETSs, similar to \cite{mehdi2014},~\cite{yoones2015}, we use an 
alternate graph representation of ETSs, called \textit{normal graph}.
The normal graph of an ETS $\mathcal{S}$ is obtained from $G(\mathcal{S})$ by removing all the check nodes of degree one and their incident edges, and by replacing all the 
degree-2 check nodes and their two incident edges by a single edge. It is easy to see that there is a one-to-one correspondence between the bipartite graph
$G(\mathcal{S})$ and the normal graph of $\mathcal{S}$. 
\begin{lem}
Consider the normal graph of an $(a,b)$ ETS structure of a variable-regular Tanner graph with variable degree $d_{\mathrm{v}}$.
The number of nodes and edges of this normal graph are $a$ and $(a d_{\mathrm{v}} - b)/2$, respectively. We thus have $b=a d_{\mathrm{v}} - 2 e$,
where $e$ is the number of edges of the normal graph. 
\label{lem1}
\end{lem}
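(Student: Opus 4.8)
The plan is to prove both assertions by a straightforward double-counting (handshaking) argument on the bipartite induced subgraph $G(\mathcal{S})$, and then to transfer the result to the normal graph using the one-to-one correspondence between $G(\mathcal{S})$ and the normal graph noted just before the lemma.

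First I would record what the normal graph looks like, node-wise and edge-wise. By its very construction, the normal graph is obtained from $G(\mathcal{S})$ by deleting every degree-1 check node together with its unique incident edge, and by contracting every degree-2 check node and its two incident edges into a single edge joining the two variable nodes it was attached to. Hence the node set of the normal graph is exactly $\mathcal{S}$, so the normal graph has $|\mathcal{S}| = a$ nodes; and its edges are in bijection with the degree-2 check nodes of $G(\mathcal{S})$, i.e., with $\Gamma_{e}(\mathcal{S})$, so $e = |\Gamma_{e}(\mathcal{S})|$. (The argument is insensitive to whether the normal graph is a multigraph; in the setting of this paper $g>4$ anyway precludes two distinct degree-2 checks sharing the same pair of variable-node neighbours.)

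Next I would count the edges of $G(\mathcal{S})$ in two ways. Summing degrees over the variable side: the Tanner graph is variable-regular with degree $d_{\mathrm{v}}$, and every variable node in $\mathcal{S}$ retains all $d_{\mathrm{v}}$ of its edges in $G(\mathcal{S})$, so $|E(G(\mathcal{S}))| = a\,d_{\mathrm{v}}$. Summing degrees over the check side: since $\mathcal{S}$ is an ETS, every check node in $G(\mathcal{S})$ has degree $1$ or $2$, hence the odd-degree (unsatisfied) check nodes are precisely the degree-1 ones and the even-degree (satisfied) check nodes are precisely the degree-2 ones; the check side therefore contributes $1\cdot|\Gamma_{o}(\mathcal{S})| + 2\cdot|\Gamma_{e}(\mathcal{S})| = b + 2e$ edges. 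Equating the two counts gives $a\,d_{\mathrm{v}} = b + 2e$, which rearranges to $e = (a\,d_{\mathrm{v}} - b)/2$ and $b = a\,d_{\mathrm{v}} - 2e$.

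I do not expect a genuine obstacle here: the only point that needs a moment's care is the equivalences ``odd degree $\Leftrightarrow$ degree one'' and ``even degree $\Leftrightarrow$ degree two'', which is exactly where the ETS hypothesis (all check-node degrees lie in $\{1,2\}$) enters, together with the harmless remark that the count is unaffected by possible parallel edges in the normal graph. Everything else is just the handshake lemma applied to $G(\mathcal{S})$.
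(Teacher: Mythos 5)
Your proof is correct, and since the paper states Lemma~\ref{lem1} without providing a proof, your double-counting (handshaking) argument on $G(\mathcal{S})$ is exactly the standard justification the authors implicitly rely on: the normal graph's nodes are the $a$ variable nodes, its edges correspond bijectively to the degree-2 check nodes, and equating the variable-side edge count $a\,d_{\mathrm{v}}$ with the check-side count $b+2e$ gives the claim. The one place where care is needed --- that the ETS hypothesis forces odd-degree checks to have degree exactly $1$ and even-degree checks degree exactly $2$ --- is handled correctly.
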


 We call a set $\mathcal{S}\subset V$ an \textit{(a,b) leafless ETS (LETS)} if $\mathcal{S}$ is an $(a,b)$
 ETS and if the normal graph of $\mathcal{S}$ is leafless. 
\begin{ex}
 Fig. \ref{fig:leafless}(a) represents a LETS in the $(4,2)$ class and its leafless normal graph, while Fig. \ref{fig:leafless}(b) represents an ETS in the $(4,4)$ class and its normal graph with a leaf. 
Symbols \scalebox{0.7}{$\square$} and \scalebox{0.7}{$\blacksquare$} are used to represent satisfied and unsatisfied check nodes in the induced bipartite subgraphs, respectively, and the symbol 
\scalebox{1.2}{$\circ$} is used to represent variable nodes in both the induced subgraphs and normal graphs. 
\begin{figure}[] 
\centering
\includegraphics [width=0.38\textwidth]{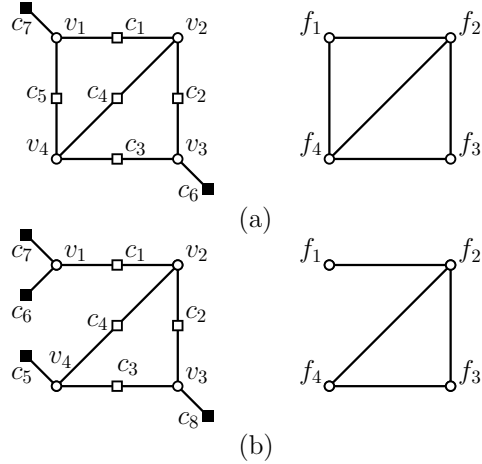}
\caption{(a) A LETS in the $(4,2)$ class and its leafless normal graph, (b) An ETS in the $(4,4)$ class and its normal graph which has a leaf $(f_1)$.}
\label{fig:leafless}
\end{figure}
\end{ex}

In general, EASs are a subset of LETSs. For the special case of $d_{\mathrm{v}}=3$, LETSs and EASs are equal sets. 
To the best of our knowledge, almost all the structures reported in the literature as error-prone structures of variable-regular LDPC codes are LETSs. 
LETSs have thus been the subject of many studies including~\cite{butler2014},\cite{mehdi2012},\cite{mehdi2014},\cite{yoones2015} and \cite{nguyen}.
In the rest of this paper, also, we focus on LETSs.

In this paper, we characterize the non-isomorphic LETS structures, and also devise search algorithms to find all the instances of LETS structures of a given
Tanner graph in a guaranteed fashion. In the part related to characterization, we use the normal graph representation of LETS structures.
In the part on search algorithms, on the other hand, we perform the search on a Tanner graph, and are thus concerned with the bipartite graph 
representation of a LETS structure. Nevertheless, for consistency and to prevent confusion, we still use terminologies corresponding to normal graphs.
For example, we use ``all the instances of $s_k$'' to mean ``all the instances of the structure whose normal graph is $s_k$."

\subsection{Nauty Program}
\label{nau}

\textit{Nauty} \cite{mckay2014practical}, is a package of programs for computing automorphism groups of graphs. There is a program called \textit{geng} in this package 
that generates all the non-isomorphic graphical structures with a given number of nodes and edges very efficiently. 
This program has many input options to generate connected graphs, triangle-free graphs, or graphs with certain minimum and maximum node degrees.
In this work, we use {\em geng} to generate all the non-isomorphic graphical structures of LETSs in different classes. To simplify this task, we 
use the normal graph representation of LETSs. 
Since we are interested in connected LETS structures in variable-regular Tanner graphs with variable degree $d_{\mathrm{v}}$ that are 4-cycle free,
we look for normal graphs with no parallel edges and with minimum and maximum node degrees equal to $\delta (G)=2$ and $\Delta(G) \leq d_{\mathrm{v}}$, respectively. 
By setting the range of node degrees to $[2, d_{\mathrm{v}}]$, the number of nodes to $a$, the number of edges to $(ad_{\mathrm{v}}-b)/2$, and by choosing the simple connected graph 
in {\em geng}'s input options, this program generates all the non-isomorphic graphical structures of connected LETSs in the $(a,b)$ class for variable-regular Tanner graphs with variable degree $d_{\mathrm{v}}$. 
If we are interested in Tanner graphs with $g \geq 8$, we also add the option of ``triangle-free.'' 

\section{ LSS-based (dot-based) Characterization/Search Approach}
\label{sec:dot-based}

In this section, by providing a new perspective, we revisit the LSS-based characterization/search approach, proposed in~\cite{mehdi2014}. The new
perspective is based on the normal graph representation of LETS structures, and a graph expansion technique applied to normal graphs, which we refer to as ``depth-one tree,'' or ``$dot$,'' in brief. 
Characterization and search algorithms based on $dot$ expansion are then provided followed by a discussion on shortcomings of the dot-based approach.  All the discussions are in the context of variable-regular
Tanner graphs with variable degree $d_{\mathrm{v}}$, even if this is not explicitly mentioned sometimes.  

\subsection{LSS-based Approach}
\label{sec:lss }
To characterize LETSs of variable-regular LDPC codes, 
in \cite{mehdi2014},  Karimi and Banihashemi first generated all the graphical structures of LETSs in each $(a,b)$ class. They then examined each structure $\mathcal{S}$ 
to find out whether $\mathcal{S}$ is a layered super set (LSS) of any of its cycles, i.e., whether $\mathcal{S}$ corresponds to a sequence of LETS structures that starts 
from a cycle and grows one variable node at a time to reach $\mathcal{S}$. 
It was shown in \cite{mehdi2014} that a vast majority of non-isomorphic LETS structures are LSS of cycles. The relatively small number of structures
that do not lend themselves to LSS characterization (starting from cycles) are labeled as ``NA'' in  \cite{mehdi2014}.
Subsequently, it was shown in \cite{yoones2015} that NA structures, themselves, are also LSSs of some basic graphical structures which are only slightly more complex 
than cycles. These basic structures are called \textit{prime} with respect to the LSS property, or \textit{LSS-prime} in brief, implying that they are not layered super sets of smaller structures. 
By this definition, cycles that are not LSS of any of their subsets are included in the set of LSS-prime structures, and are called \textit{cycle LSS-prime} structures. 
The rest of the prime structures are referred to as \textit{non-cycle LSS-prime} structures. 

\subsection{Dot-based Characterization}
\label{sec: dot-based}

A careful examination of the LSS property in the space of normal graphs reveals that this property corresponds to a simple graph-based expansion technique, referred to, here, as \textit{depth-one tree ($dot$)} expansion.
The notation $dot_m$ is used for a {\em dot} expansion with $m$ edges, as shown in Fig. \ref{fig:doten}.
In this figure, symbol
\scalebox{1.2}{$\circ$}  is used to represent the $m$ common nodes between the parent structure $\mathcal{S}$ and the $dot_m$ structure. Symbol \scalebox{1.2}{$\bullet$}, on the other hand, 
is used to represent the root node of $dot_m$ structure. For the $dot_m$ expansion to result in a valid normal graph for a LETS structure of 
a variable-regular Tanner graph with variable degree $d_\mathrm{v}$, the $m$ edges of the tree will have to be connected to nodes
of $\mathcal{S}$ with degree less than $d_\mathrm{v}$. In addition, the degree $m$ of the root node must be at least 2 and 
at most $d_\mathrm{v}$. 
\begin{figure}[] 
\centering
\includegraphics [width=0.26\textwidth]{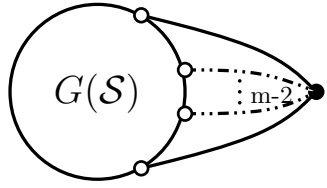}
\caption{Expanding a LETS structure $\mathcal{S}$ by applying depth-one tree expansion with $m$ edges ($dot_m$).}
\label{fig:doten}
\end{figure}
 
\begin{pro}
\label{pro:dot}
Suppose that $\mathcal{S}$ is an $(a,b)$ LETS structure of variable-regular Tanner graphs with variable degree $d_\mathrm{v}$,
where $b \geq 2$.
Expansion of $\mathcal{S}$ using $dot_m$, $ 2 \leq m \leq min\{d_\mathrm{v},b\}$, will result in 
LETS structure(s) in the  $(a+1,b+d_\mathrm{v}-2m)$ class. 
\end{pro}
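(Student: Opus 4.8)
The plan is to establish the claim by a short counting argument anchored in Lemma~\ref{lem1}, after first checking that the graph produced by a $dot_m$ expansion is again the normal graph of a LETS of a variable-regular Tanner graph with variable degree $d_\mathrm{v}$. First I would record the bookkeeping of a single $dot_m$ expansion on a normal graph: it adds exactly one node --- the root, drawn $\bullet$ --- joins that root by exactly $m$ new edges to $m$ \emph{distinct} existing nodes of $\mathcal{S}$ (the common nodes, drawn $\circ$), and modifies nothing else. Thus, writing $e$ for the number of edges of $\mathcal{S}$, the expanded graph $\mathcal{S}'$ has $a'=a+1$ nodes and $e'=e+m$ edges. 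It is worth noting here that, by Lemma~\ref{lem1}, the total spare degree of $\mathcal{S}$, namely $\sum_{v}(d_\mathrm{v}-\deg(v))=a\,d_\mathrm{v}-2e$, is exactly $b$; this is the source of the bound $m\le b$, since one needs $m$ distinct nodes of $\mathcal{S}$ of degree $<d_\mathrm{v}$ to attach the tree to and there are only $b$ ``free'' half-edges available in total, while $2\le m$ and $m\le d_\mathrm{v}$ come from the root itself.

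The substantive step is to verify that $\mathcal{S}'$ is a legitimate normal graph of a LETS with variable degree $d_\mathrm{v}$, so that Lemma~\ref{lem1} applies to it as well. This reduces to the properties enforced by \emph{geng} in Section~\ref{nau}: (i) every node degree stays in $[2,d_\mathrm{v}]$, because the common nodes only gain degree (so they remain $\ge 2$) and were selected with degree $<d_\mathrm{v}$ (so they remain $\le d_\mathrm{v}$), while the root has degree $m$ with $2\le m\le d_\mathrm{v}$; (ii) no parallel edge appears, since the root is joined to $m$ distinct nodes and no edge interior to $\mathcal{S}$ is touched; (iii) $\mathcal{S}'$ is connected and leafless, because $\mathcal{S}$ is connected and leafless and the root is attached with degree $\ge 2$; and (iv) reading the normal graph back as a bipartite subgraph, the expansion merely promotes $m$ formerly degree-$1$ check nodes (one at each common node) to degree $2$ and introduces $d_\mathrm{v}-m$ fresh degree-$1$ checks incident only to the root, so no check of degree $\ge 3$ is ever created and the ETS property is preserved. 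Hence $\mathcal{S}'$ is an $(a',b')$ LETS of a variable-regular Tanner graph with variable degree $d_\mathrm{v}$.

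Finally I would apply Lemma~\ref{lem1} to $\mathcal{S}'$: $b'=a'd_\mathrm{v}-2e'=(a+1)d_\mathrm{v}-2(e+m)=(a\,d_\mathrm{v}-2e)+(d_\mathrm{v}-2m)=b+d_\mathrm{v}-2m$, which, together with $a'=a+1$, places $\mathcal{S}'$ in the $(a+1,\,b+d_\mathrm{v}-2m)$ class; since this computation does not depend on \emph{which} admissible $m$ nodes the root is attached to, every (possibly non-isomorphic) outcome of a $dot_m$ expansion of $\mathcal{S}$ lies in this single class, which is the assertion. The only part calling for genuine care is the translation in step~(iv) together with the degree bookkeeping at the common nodes; the remainder is arithmetic. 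I would also remark that $b\ge2$ is precisely what makes the admissible range of $m$ nonempty (it licenses the minimal case $m=2$), and that whenever $2m>d_\mathrm{v}$ the expansion strictly decreases $b$ --- the feature that makes $dot$ expansions valuable for reaching ETSs with small $b$ from cycles.
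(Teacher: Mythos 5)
Your proof is correct and follows essentially the same route as the paper's: count nodes and edges after the expansion ($a'=a+1$, $e'=e+m$) and apply Lemma~\ref{lem1} to get $b'=b+d_\mathrm{v}-2m$, with the bound $m\le\min\{d_\mathrm{v},b\}$ justified by the available spare degree. Your additional verification that the expanded graph is again a valid LETS normal graph (degree bounds, no parallel edges, leaflessness, and the bipartite reading) is more explicit than the paper's proof, which takes these points for granted, but it does not change the underlying argument.
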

\begin{proof}
In addition to being limited by the upper bound of $d_\mathrm{v}$, the maximum value of $m$ is also upper bounded by $b$. 
Therefore, $m \leq min\{d_\mathrm{v},b\}$.
In the $dot_m$ expansion, only one node is added to $\mathcal{S}$, and thus the size of the new structure is $a'=a+1$. 
Based on Lemma~\ref{lem1}, the number of edges in $\mathcal{S}$ is $|E_\mathcal{S}|=(a d_\mathrm{v}-b)/2$. 
By the nature of $dot_m$ expansion, the number of edges in the expanded structure $\mathcal{S}'$ is $|E_{\mathcal{S}'}|=|E_\mathcal{S}|+m$. 
Therefore, based on Lemma~\ref{lem1}, the number of unsatisfied check nodes $b'$ in the new structure is
$b'=(a+1)d_\mathrm{v}-2|E_{\mathcal{S}'}|=a d_\mathrm{v}-2|E_\mathcal{S}|+d_\mathrm{v}-2m=b+d_\mathrm{v}-2m$. 
\end{proof}

\begin{ex}
Consider the LETS structure $\mathcal{S}$ of Fig. \ref{fig:48}(a) in a variable-regular Tanner graph with $d_\mathrm{v}=4$ and $g=6$. This structure is in the $(4,8)$ class.
The application of $dot_2$ expansion to $\mathcal{S}$ generates two non-isomorphic structures shown in Figs. \ref{fig:48}(b) and (c) in the $(5,8)$ class. 
For each of the two expansions  $dot_3$ and $dot_4$ applied to $\mathcal{S}$, only one new structure is generated. This structure belongs to the $(5,6)$ and the $(5,4)$ class, respectively,
and is shown in Figs. \ref{fig:48}(d) and (e), respectively.
\end{ex}
\begin{figure}[] 
\centering
\includegraphics [width=0.35\textwidth]{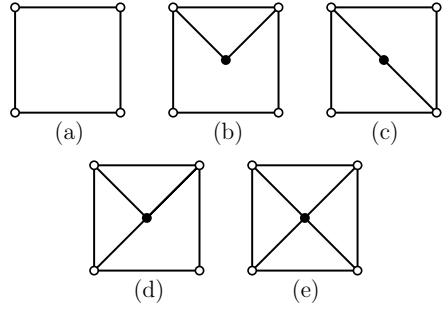}
\caption{LETS structures with $\Delta(S) \leq d_\mathrm{v}=4$: (a) The LETS structure $\mathcal{S}$ is in the $(4,8)$ class, (b)-(c) LETS structures in the $(5,8)$ class generated from $\mathcal{S}$ by $dot_2$ expansion, (d) 
LETS structure in the $(5,6)$ class generated from $\mathcal{S}$ by $dot_3$ expansion, and (e) LETS structure in the $(5,4)$ class generated from $\mathcal{S}$ by $dot_4$ expansion.}
\label{fig:48}
\end{figure}

\begin{pro}
Among the LETS structures with the same size $a$, the simple cycle of size $a$ in the $(a, a(d_\mathrm{v}-2))$ class has the largest $b$.
\label{progh}
\end{pro}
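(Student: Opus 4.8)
The plan is to convert the problem of maximizing $b$ into one of minimizing the number of edges of the normal graph, and then to exploit the leafless property that every LETS normal graph enjoys. First I would apply Lemma~\ref{lem1}: for any size-$a$ LETS structure we have $b = a d_\mathrm{v} - 2e$, where $e$ is the number of edges of its normal graph. With $a$ and $d_\mathrm{v}$ fixed, $b$ is a strictly decreasing affine function of $e$, so the size-$a$ structures with the largest $b$ are exactly those whose normal graph has as few edges as possible.

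Next I would pin down that minimum. The normal graph of a LETS is, by definition, leafless, hence connected with minimum degree at least $2$. Summing degrees over its $a$ nodes gives $2e = \sum_{f} \deg(f) \geq 2a$, so $e \geq a$, and equality can hold only if $\deg(f) = 2$ for every node $f$. A connected graph in which every node has degree exactly $2$ is a single cycle through all $a$ of its nodes; since such a graph has only $a$ edges, all of them on the cycle, it has no chord and is therefore the simple cycle $s_a$. One also checks that $s_a$ is an admissible normal graph for every $d_\mathrm{v} \geq 3$, because its maximum node degree, $2$, does not exceed $d_\mathrm{v}$. Substituting $e = a$ into Lemma~\ref{lem1} places $s_a$ in the $(a, a(d_\mathrm{v}-2))$ class, while any other size-$a$ structure has $e \geq a+1$ and therefore $b \leq a(d_\mathrm{v}-2) - 2 < a(d_\mathrm{v}-2)$; this also yields uniqueness of the maximizer.

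I do not anticipate a real obstacle here. The only point requiring a little care is the equality analysis, namely ruling out a $2$-regular graph that is a disjoint union of shorter cycles; this is settled by the standing assumption in Section~\ref{sec:pre} that the induced subgraph, and hence its normal graph, is connected. (Alternatively, one notes directly that the normal graph of a connected ETS is connected, since a degree-$1$ check node attaches to a single variable node while a degree-$2$ check node becomes an edge, so neither can join otherwise-separate parts of the graph.)
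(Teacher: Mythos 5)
Your proof is correct and follows essentially the same route as the paper's: both use Lemma~\ref{lem1} to reduce maximizing $b$ to minimizing the edge count of the normal graph, and both invoke the leafless (minimum degree $2$) property to conclude $e \geq a$ with equality only for the simple cycle. Your version is simply more careful about the equality case and the connectedness assumption, which the paper's shorter argument leaves implicit.
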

\begin{proof}
Since $b=a d_\mathrm{v}-2|E|$, the LETS structure with the largest $b$ is the one with the minimum number of edges $|E|$. Also, in leafless structures, each node has at least two incident edges. 
Therefore, the simple cycle of size $a$, in which each node has exactly two incident edges, has the minimum number of edges, $a$, among all the LETS structures with the same size. The simple cycle of size $a$ is, in fact, 
the only LETS structure in the $(a, a d_\mathrm{v}-2a=a (d_\mathrm{v}-2))$ class.
\end{proof}

\begin{rem}
In a Tanner graph with girth $g$, the smallest LETS structure is the simple cycle of size $g/2$, denoted by $s_{g/2}$. The $s_{g/2}$ structure
is the only LETS structure of size $g/2$.
\label{rem1}
\end{rem}

Given all the non-isomorphic LETS structures of size $k$, the {\em dot} expansion technique is able to generate most of the non-isomorphic LETS structures of size $k+1$. 
The LETS structures which cannot be obtained by the {\em dot} expansion from any of their subsets are called \textit{dot-prime structures}. For brevity, we refer to these structures
as being {\em prime}, in the rest of the paper. Prime structures can be categorized into: (i) \textit{cycle prime structures}, and  (ii) \textit{non-cycle prime structures}. In the first category, we have all the simple cycles, and cycles with chord that are out of the reach of the {\em dot} expansion. In the second category, we have structures that are slightly more complex than cycles. Non-cycle prime structures are discussed in details in \cite{yoones2015}. 

{\em Dot} expansion is a simple recursive technique to generate larger LETS structures from smaller ones. Based on Remark~\ref{rem1}, one needs to start from
the simple cycle of length $g/2$ and apply the {\em dot} expansion recursively, to obtain non-isomorphic LETS structures in different $(a,b)$ classes with $a > g/2$.
This technique, however, has the limitation of leaving out all the prime structures. Therefore, for a full characterization of LETS structures based on {\em dot} expansion, these
prime structures will have to be identified and added as new initial building blocks for the {\em dot} expansion.
The pseudo-code for the dot-based characterization algorithm is provided in Algorithm \ref{alg1}.

\begin{algorithm}
 \caption{{\bf (Dot-based Characterization Algorithm)} Finds all the non-isomorphic LETS structures in $(a,b)$ classes with $a \leq a_{max}$ and $b \leq b_{max}=a_{max} (d_\mathrm{v}-2)$, for a variable-regular Tanner graph with variable degree $d_\mathrm{v}$ and girth $g$. The prime and non-prime structures of size $k$ are stored in sets $\mathcal{P}_k$ and $\mathcal{L}_k, k = g/2, \ldots, a_{max}$, respectively.}
\label{alg1}
 \begin{algorithmic} [1] 
\State \textbf{Inputs:} $a_{max}, g, d_\mathrm{v}$.
\State \textbf{Initializations:} $\mathcal{L} \gets \emptyset$, $k=g/2$.
\While {$k \leq a_{max}$}
\parState {Find the prime LETS structures $\mathcal{P}_{k}$ of size $k$ which are not found in the algorithm (by comparing the structures in $\mathcal{L}$ with those generated by {\em geng}). }\label{nautmis}
\vspace{-6pt}
\State $\mathcal{L}_k$= \textbf{DotExpansion}$(\mathcal{P}_k, \mathcal{L}, a_{max})$. \label{shor}
\State $\mathcal{L}= \mathcal{L}\cup \mathcal{L}_k.$
\State $k=k+1$. \label{k=k+1}
\EndWhile
\State \textbf{Outputs:} $\{\mathcal{P}_{g/2},\dots,\mathcal{P}_{a_{max}}\}$, $\{\mathcal{L}_{g/2},\dots,\mathcal{L}_{a_{max}}\}$.
\end{algorithmic}
 \end{algorithm}
In this algorithm, all the non-isomorphic LETS structures in different $(a,b)$ classes in the range of $a \leq a_{max}$ and $b \leq b_{max}=a_{max} (d_\mathrm{v}-2)$ are generated,
and identified as being either prime or non-prime. The set $\mathcal{L}$ in Algorithm~\ref{alg1} is used to store all the LETS structures that are found so far.
In the while loop, first,  in Line \ref{nautmis}, the non-isomorphic LETS structures in different classes with size $k$ in $\mathcal{L}$ are compared with those obtained by the {\em geng} program. 
If there is a discrepancy, the LETS structures that were not generated by the algorithm (missing in $\mathcal{L}$) are chosen as the set of prime structures of size $k$, $\mathcal{P}_{k}$. 
Then, using the pseudo-code of Routine \ref{dotexp}, the set of all non-isomorphic LETS structures up to size $a_{max}$, $\mathcal{L}_k$, is generated using the {\em dot} expansion, starting from the structures in $\mathcal{P}_{k}$. 

In Routine \ref{dotexp}, set $\mathcal{L}_k^a$ is the set of non-isomorphic LETS structures of size $a$ generated by the recursive application of the {\em dot} expansion starting from the structures in $\mathcal{P}_k$.  
All the non-isomorphic LETS structures of size $a+1$ that are generated by expanding the non-isomorphic LETS structures in $\mathcal{L}_k^a$ using $dot_m$ expansions with different values of $m$
are stored in $\mathcal{L}_{tem}$. A subset of $\mathcal{L}_{tem}$ that is not generated by smaller prime structures will then be identified as the set $\mathcal{L}_k^{a+1}$.

\begin{routine}
 \caption{{\bf (DotExpansion)} Recursive expansion of a set of structures $\mathcal{P}_k$ of size $k$, up to size $a_{max}$, using {\em dot} expansion, excluding the already found
structures $\cal{L}$, and storing the rest in $\mathcal{L}_k$. $\mathcal{L}_k$= DotExpansion $(\mathcal{P}_k, \mathcal{L}, a_{max}$) }
\label{dotexp}
 \begin{algorithmic} [1] 
 \State \textbf{Initialization:}  $\mathcal{L}_k^k \gets \mathcal{P}_k$.
\For {$a=k,\dots, a_{max}-1$}
\State $\mathcal{L}_{tem} \gets \emptyset$.
\For {each LETS structure $\mathcal{S}$ in $\mathcal{L}_k^a$}
\parState {Find the subset $F'$ of nodes in $\mathcal{S}$ with degree strictly less than $d_\mathrm{v}$. }
\For {$m= 2,~\dots, d_\mathrm{v}$}
\parState {Construct $\mathcal{N}_m(\mathcal{S})$ to include all the possible subsets of $F'$ with size $m$.}
\For {each set of $m$ nodes in $\mathcal{N}_m(\mathcal{S})$ }
\parState {Generate a new structure, $\mathcal{S}'$ using $dot_m$ expansion by connecting a new node
 to the set of $m$ nodes.}
\State $\mathcal{L}_{tem}= \mathcal{L}_{tem} \cup \mathcal{S}'$. \label{nonism1}
\EndFor
\EndFor
\EndFor
\State $\mathcal{L}_k^{a+1}=\mathcal{L}_{tem} \setminus \mathcal{L}$. \label{nonism2}
\EndFor
\State $\mathcal{L}_k=\bigcup\limits _{a=k}^{a_{max}} \mathcal{L}_k^a$.
\State \textbf{Output:} $\mathcal{L}_k$.
\end{algorithmic}
\end{routine}

\begin{rem}
Expanding LETS structures with depth-one tree is the same as the LSS-based algorithm proposed in \cite{mehdi2014}. 
For the characterization of LETS structures, however, rather than examining each structure in a class to find out its LSS properties, as performed in \cite{mehdi2014},
in this work, the non-isomorphic LETS structures are generated by expanding the prime structures using the {\em dot} expansion technique. 
\end{rem}

\noindent
 {\bf Case Study}-{\em Characterization of non-isomorphic  LETS structures of $(a,b)$ classes for variable-regular graphs with $d_\mathrm{v} = 3$, $g=6$ and $a \leq 9, b \leq 9$}: 

This characterization is summarized in Table \ref{tab:3,6gh1}, where the prime structures  are boldfaced.
\begin{table}[]
\centering
\caption{dot-based characterization of non-isomorphic LETS structures of $(a,b)$ classes for variable-regular graphs with $d_\mathrm{v} = 3$, $g=6$ and $a \leq 9, b \leq 9$}
\label{tab:3,6gh1}
\begin{tabular}{||c|c| c|c|c| c| c|c|| }
\cline{1-8}
&$a = 3$&$a = 4$&$a = 5$&$a = 6$&$a = 7$&$a = 8$&$a = 9$\\
\cline{1-8}
b = 0&-&$  \begin{array}{@{}c@{}}s_3(1)\end{array} $&-&$  \begin{array}{@{}c@{}}s_4 (2)\end{array} $&-&$  \begin{array}{@{}c@{}}s_5 (3),s_6 (1),c_6 (1)\end{array} $&-\\
\cline{1-8}
b = 1&-&-&$  \begin{array}{@{}c@{}}s_3(1)\end{array} $&-&$  \begin{array}{@{}c@{}}s_4 (3),s_5 (1)\end{array} $&-&$  \begin{array}{@{}c@{}}s_5 (9),s_6 (5),c_6 (2)\\n_6 (1),c_7 (1),n_7 (1)\end{array} $\\ 
\cline{1-8}
b = 2&-&$  \begin{array}{@{}c@{}}s_3(1)\end{array} $&-&$  \begin{array}{@{}c@{}}s_4 (3),s_5 (1)\end{array} $&-&$  \begin{array}{@{}c@{}}s_5 (9),s_6 (5),c_6 (2)\\n_6 (1),c_7 (1),n_7 (1)\end{array} $&-\\
\cline{1-8}
b = 3&$  \begin{array}{@{}c@{}}\boldsymbol{s_3(1)}\end{array} $&-&$  \begin{array}{@{}c@{}}s_4(2)\end{array} $&-&$  \begin{array}{@{}c@{}}s_5 (6),s_6 (1)\\c_6 (2),n_6 (1)\end{array} $&-&$  \begin{array}{@{}c@{}}s_6 (31),s_7 (9),c_7 (9)\\n_7 (7),c_8 (3),n_8 (4)\end{array} $\\
\cline{1-8}
b = 4&-&$  \begin{array}{@{}c@{}}\boldsymbol{s_4(1)}\end{array} $&-&$  \begin{array}{@{}c@{}}s_5 (2),\boldsymbol{c_6 (1)}\\\boldsymbol{n_6 (1)}\end{array} $&-&$  \begin{array}{@{}c@{}}s_6 (12),s_7 (2),c_7 (4)\\n_7 (4),\boldsymbol{c_8 (2)},\boldsymbol{n_8 (1)}\end{array} $&-\\
\cline{1-8}
b = 5&-&-&$  \begin{array}{@{}c@{}}\boldsymbol{s_5(1)}\end{array} $&-&$  \begin{array}{@{}c@{}}s_6 (3),\boldsymbol{c_7 (1)}\\\boldsymbol{n_7 (2)}\end{array} $&-&$  \begin{array}{@{}c@{}}s_7 (19),s_8 (2),c_8 (11)\\n_8 (15),\boldsymbol{c_9 (2)},\boldsymbol{n_9 (3)}\end{array} $\\
\cline{1-8}
b = 6&-&-&-&$  \begin{array}{@{}c@{}}\boldsymbol{s_6(1)}\end{array} $&-&$  \begin{array}{@{}c@{}}s_7 (3),\boldsymbol{c_8 (2)},\boldsymbol{n_8 (5)}\end{array} $&-\\
\cline{1-8}
b = 7&-&-&-&-&$  \begin{array}{@{}c@{}}\boldsymbol{s_7(1)}\end{array} $&-&$  \begin{array}{@{}c@{}}s_8 (4),\boldsymbol{c_9 (2)},\boldsymbol{n_9 (7)}\end{array} $\\
\cline{1-8}
b = 8&-&-&-&-&-&$  \begin{array}{@{}c@{}}\boldsymbol{s_8 (1)}\end{array} $&-\\
\cline{1-8}
b = 9&-&-&-&-&-&-&$  \begin{array}{@{}c@{}}\boldsymbol{s_9 (1)}\end{array} $\\
\cline{1-8}
\end{tabular}
\end{table}
The notations $s_k$, $c_k$, and $n_k$ are used to denote the simple cycle, a prime cycle with chord, and  a non-cycle prime structure of size $k$, respectively.  
As an example of how the entries in Table~\ref{tab:3,6gh1} should be interpreted, we consider the $(7,5)$ class.
The entries for this class are: $s_6(3), c_7(1),$ and $n_7(2)$. This means that there are a total of 6 non-isomorphic LETS structures in this class, three of these structures are generated, using {\em dot} expansions, starting from $s_6$, 
one of them is a prime cycle with chord of size 7, and two of them are non-cycle prime structures of size 7. 
Having the symbol ``-" for an $(a,b)$ class means that it is impossible to have any LETS structure in that class. 
In Table \ref{tab:3,6gh1}, the smallest possible LETS structure is the cycle of size 3, $s_3$. The two non-isomorphic LETS structures of size 4 generated in Algorithm \ref{alg1} 
by the application of {\em dot} expansions to $s_3$ are shown in Fig. \ref {fig:graphex4} ($f_4$ is the new node added to $s_3$). These structures, which belong to  $(4,2)$ and $(4,0)$ classes, respectively, 
are the only non-isomorphic LETS structures of size 4 that are generated by the {\em dot} expansion. There, however, remains another LETS structure of size 4 in the $(4,4)$ class that is not generated by the {\em dot} expansion. 
This $(4,4)$ LETS structure, which is in fact $s_4$, is thus chosen as a prime structure in Step \ref{nautmis} of Algorithm \ref{alg1}. 
\begin{figure}[t] 
\centering
\includegraphics [width=0.33\textwidth]{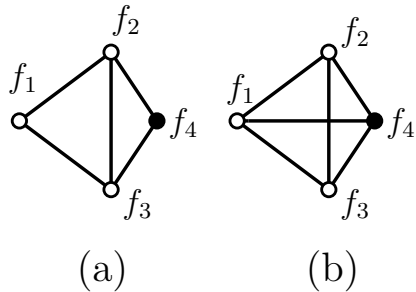}
\caption{LETS structures in (a)  $(4,2)$  and (b) $(4,0)$ classes in variable-regular graphs with $d_\mathrm{v}=3$ and $g=6$.}
\label{fig:graphex4}
\end{figure}

Fig. \ref{fig:cyc6}(a) shows two prime structures in the $(6,6)$ and $(6,4)$ classes, respectively, and Fig. \ref{fig:cyc6}(b) depicts two prime structures in the $(7,5)$ class.
\begin{figure}[t] 
\centering
\includegraphics [width=0.35\textwidth]{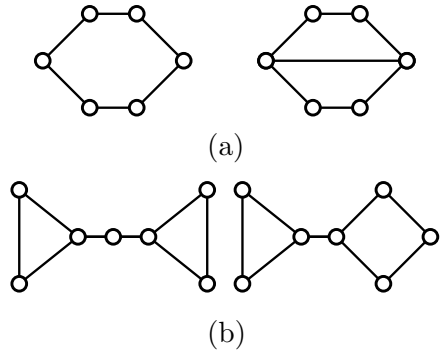}
\caption{Prime LETS structures in variable-regular graphs with $d_\mathrm{v}=3$ and $g=6$ (a) a simple cycle in the $(6,6)$ class and a cycle with chord in the $(6,4)$ class (b) non-cycle prime structures in the $(7,5)$ class.}
\label{fig:cyc6}
\end{figure}

It is worth noting that unlike \cite{mehdi2014}, the simple cycles and  prime cycles with chord are distinguished in this paper.

\subsection{Dot-based Search Algorithm}

Dot-based characterization of LETS structures corresponds to a search algorithm to find all  the instances of LETS structures in the Tanner graph of LDPC codes with different values of $d_\mathrm{v}$, $g$, $a_{max}$ and $b_{max}$.
First, based on the characterization table, the set of prime structures in the interest range of $a \leq a_{max}$ and $b \leq b_{max}$ are identified. For example, for variable-regular graphs with $d_\mathrm{v} = 3$, $g=6$ and $b_{max}=a_{max}=9$, all the prime structures of Table \ref{tab:3,6gh1} are needed. As another example, if, for a variable-regular graph with $d_{\mathrm{v}} =3$ and $g=6$, one is interested in $b_{max}=3$ and $a_{max}=8$, only 8 prime structures in the $(3,3)$, $(4,4)$, $(5,5)$, $(6,4)$, $(6,6)$ and $(7,5)$ classes are needed.
Suppose that $T$ prime structures are needed for $a \leq a_{max}$ and $b \leq b_{max}$. The instances of each prime structure are then enumerated in the Tanner graph of the code. 
Consider $\mathcal{IP}_1,\dots , \mathcal{IP}_T$, to be the sets of all the instances of $T$ prime structures in the graph. These sets are used as the input to the dot-based search algorithm to find 
all the instances of LETS structures in the interest range of $a \leq a_{max}$ and $b \leq b_{max}$.
The pseudo-code of the dot-based search algorithm is given in Algorithm \ref{tab:search al1}. 
\begin{algorithm}
\centering
\caption{{\bf (Dot-based Search Algorithm)} Finds all the instances of $(a,b)$ LETS structures of a variable-regular Tanner graph $G$ with girth $g$, for $a \leq a_{max}$, starting from
all the instances $\mathcal{IP}_1,\dots , \mathcal{IP}_T$, of the $T$ prime structures identified in the characterization table. The set $\mathcal{I}_t$ in the output
contains all the instances of LETS in $G$ generated from $\mathcal{IP}_t$ for $ t = 1, \ldots, T$.}
\label{tab:search al1}
\begin{algorithmic}[1]
\State  \textbf{Inputs:} $G$, $\{\mathcal{IP}_1,\dots , \mathcal{IP}_T\}$, $g$, $a_{max}$. 
\State  \textbf{Initializations:} $\mathcal{I}_t^a \gets \emptyset$, for $1 \leq t \leq T$, $g/2 \leq a \leq a_{max}$, $\mathcal{I}\gets \emptyset$.
\For {$t = 1, \dots, T$}
\parState {$\mathcal{I}_t^k \gets \mathcal{IP}_t$, where $k$ is the size of the instances in $\mathcal{IP}_t$. }
\State $a=k$. 
\While {$a < a_{max}$}
\State $\mathcal{I}_t^{a+1}$= \textbf{DotSrch}$(\mathcal{I}_t^{a},\mathcal{I})$. \label{reftoOneDotSrch}
\State $\mathcal{I}=\mathcal{I} \cup\mathcal{I}_t^{a+1}$.
\State $a=a+1$.
\EndWhile
\State $\mathcal{I}_t=\bigcup\limits _{a=k}^{a_{max}} \mathcal{I}_t^{a}$
\EndFor
\State \textbf{Outputs:} $\{\mathcal{I}_1,\dots ,\mathcal{I}_T\}$.
\end{algorithmic}
\end{algorithm}

In Algorithm~\ref{tab:search al1}, the set $\mathcal{I}_t^a $ is the set of instances of LETSs of size $a$ which are found starting from the instances $\mathcal{IP}_t$ of the $t$th prime structure, 
the set $\mathcal{I}_t $ is the union of sets $\mathcal{I}_t^a $ with different sizes $a$, and $\mathcal{I}$ is the set of instances of all LETSs which are found so far in the algorithm. 
The outputs of the search algorithm are the sets of instances of LETSs in all classes starting from  instances of $T$ prime structures.
Given a set of instances of LETS structures of size $a$, in Line \ref{reftoOneDotSrch} of Algorithm~\ref{tab:search al1}, Routine \ref{tab:search $dot$} finds a set of instances of LETS structures of size $a+1$, using {\em dot} expansions. 
For each instance $\mathcal{S}$ of a LETS structure in $\mathcal{I}_t^a$, some instances $\mathcal{I}_{tem}$ of LETS structures with size $a+1$ are found. 
The instances $\mathcal{I}_t^{a+1}$ in $\mathcal{I}_{tem}$, which were not already found and included in $\mathcal{I}$, are the output of this routine.
\begin{routine}
\centering
\caption{{\bf (DotSrch)} Expansion of a set of instances $\mathcal{I}_t^a$ of LETS structures of size $a$ using {\em dot} expansions to find a set of instances  of LETS structures of size $a+1$,
excluding the already found structures $\mathcal{I}$, and storing the rest in $\mathcal{I}_t^{a+1}$. 
$\mathcal{I}_t^{a+1}$= DotSrch $( \mathcal{I}_t^{a},\mathcal{I})$ }
\label{tab:search $dot$}
\begin{algorithmic}[1]
\State  \textbf{Initializations:} $\mathcal{I}_{tem} \gets \emptyset$.
\For{each instance of LETS structure $\mathcal{S}$ in $\mathcal{I}_t^a$}
\parState {Consider $\mathcal{V}$ to be the set of variable nodes in $V \setminus \mathcal{S}$, which have at least two connections with the check nodes in $\Gamma_{o}{(\mathcal{S})}$ and have no connection with the check nodes in $\Gamma_{e}{(\mathcal{S})}$. }
\vspace{-6pt}
\For {each variable node $v \in \mathcal{V}$}
\State $\mathcal{I}_{tem}=\mathcal{I}_{tem}\cup \{\mathcal{S} \cup v\}$.
\EndFor
\EndFor
\State $\mathcal{I}_t^{a+1} \gets \mathcal{I}_{tem}\setminus \mathcal{I}$.
\State \textbf{Output:} $ \mathcal{I}_t^{a+1} $.
\end{algorithmic}
\end{routine}

The dot-based search algorithm presented here is essentially the same as the LSS-based search algorithm of \cite{mehdi2014},\cite{yoones2015}.

\subsection{Shortcomings of Dot-based Search Algorithm}
\label{sec:problems}
To motivate the new characterization/search technique, we start by explaining the main practical issues in implementing a dot-based search algorithm
through a number of examples. In summary, in a given Tanner graph, to exhaustively search for $(a,b)$ LETS instances within the range of interest $a \leq a_{max}$ and $b \leq b_{max}$, 
one needs to initially enumerate instances of prime structures that are needed as parents of the LETS structures under consideration. Some of these prime structures
have relatively large $a$ and $b$ values. For practical codes, the multiplicity of instances of such prime structures
can be easily in the range of tens of millions. This imposes a huge computational burden and memory requirement on the dot-based search algorithm.
The dot-based search algorithm in this case is particularly inefficient, when such prime structures, themselves, are not of direct interest (as they have relatively large $a$ and $b$ values), and when 
the multiplicity of the instances of the LETS structures, that are children of such prime structures and of interest, is relatively low, or in some cases even zero. 

We start by the case of variable-regular graphs with $d_\mathrm{v} = 3$ and $g=6$. For such graphs, the dot-based characterization for the range of $a \leq 9$ and $b \leq 9$, was
presented in Table \ref{tab:3,6gh1}. In Table \ref{tab:3,6gh2}, we have extended the results of Table \ref{tab:3,6gh1} to cover LETS structures with 
$a=10, 11, 12$,  and $b \leq 5$. All the (non-cycle and cycle) prime structures of size less than 10 (those identified in Table~\ref{tab:3,6gh1} by $s_k, c_k, n_k$, with $k \leq 9$) are used to generate the LETS structures in Table \ref{tab:3,6gh2}. 
\begin{table}[]
\centering
\caption{dot-based characterization of non-isomorphic LETS structures of $(a,b)$ classes for variable-regular graphs with $d_\mathrm{v} = 3$, $g=6$, $10 \leq a \leq 12$ and $b \leq 5$ }
\label{tab:3,6gh2}
\begin{tabular}{||c|c| c| c|| }
\cline{1-4}
&$a = 10$&$a = 11$&$a = 12$\\
\cline{1-4}
b = 0&$  \begin{array}{@{}c@{}}s_6(12),s_7(3),c_7(2)\\ n_7(1), n_8(1)\end{array} $&-&$  \begin{array}{@{}c@{}}c_7 (43),s_8(15),c_8(12)\\n_8(9),s_9(1),c_9(1),n_9(4)\end{array} $\\
\cline{1-4}
b = 1&-&$  \begin{array}{@{}c@{}}s_6 (50),s_7 (26),c_7 (10),n_7 (7)\\s_8(4),c_8(6), n_8(9), n_9(2)\end{array} $&-\\ 
\cline{1-4}
b = 2&$  \begin{array}{@{}c@{}}s_6 (50),s_7 (24),c_7 (10),n_7 (7)\\ s_8(3), c_8(8), n_8(10), n_9(1)\end{array} $&-&$  \begin{array}{@{}c@{}}s_7 (350),s_8 (156),c_8 (93), n_8 (73) \\s_9(22),c_9(44), n_9(77),\textbf{NA(20)}\end{array} $\\
\cline{1-4}
b = 3&-&$  \begin{array}{@{}c@{}}s_7 (211),s_8 (65),c_8 (72),n_8(65)\\s_9(4), c_9(21), n_9(41),\textbf{NA(3)}\end{array} $&-\\
\cline{1-4}
b = 4&$  \begin{array}{@{}c@{}}s_7 (75),s_8 (20),c_8 (37),n_8(37),\\c_9(10), n_9(18),\textbf{NA(1)}\end{array} $&-&$  \begin{array}{@{}c@{}}s_8 (711),s_9(202),c_9(322) \\n_9(346),\textbf{NA(311)}\end{array} $\\
\cline{1-4}
b = 5&-&$  \begin{array}{@{}c@{}}s_8 (172),s_9(40),c_9(104) \\n_9(139),\textbf{NA(81)}\end{array} $&-\\
\cline{1-4}
\end{tabular}
\end{table}
For the LETS structures which are not generated using the {\em dot} expansion technique starting from any of these prime structures, the notation ``NA'' is used in Table~\ref{tab:3,6gh2}. 
Starting from the prime structures of size 10 and 11, one can also generate (characterize) NA LETS structures of Table~\ref{tab:3,6gh2}.

\begin{ex} The $(11,1)$ class is a potentially dominant LETS class of variable-regular graphs with $d_\mathrm{v}=3$ and $g=6$. Table  \ref{tab:3,6gh2} shows that among 114 non-isomorphic LETS structures in this class, 50, 26, 10, 7, 4, 6, 9 and 2 structures are generated starting from prime structures $s_6$, $s_7$, $c_7$, $n_7$, $s_8$, $c_8$, $n_8$ and $n_9$, respectively. This means prime structures with size up to $9$ need to be enumerated 
in the Tanner graph to find the instances of all the LETS structures of this class (if any exists). 
Fig. \ref{fig:example(11,1)} represents $\mathcal{S}=(\{f_1,f_2,\dots,f_{11}\}, E)$, a LETS structure in the $(11,1)$ class which is generated starting from the simple cycle prime structure, $s_8$, in the $(8,8)$ class $(\mathcal{S}'=(\{f_1,f_2,\dots,f_{8}\}, E'))$.
 \begin{figure}[] 
\centering
\includegraphics [width=0.22\textwidth]{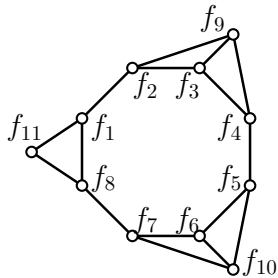}
\caption{ A LETS structure in the $(11,1)$ class of a variable-regular graph with $d_\mathrm{v}=3$ and $g=6$.}
\label{fig:example(11,1)}
\end{figure}
To find all the instances of $\mathcal{S}$ in the Tanner graph using a dot-based search algorithm, all the instances of $s_8$ in the graph need to be enumerated. 
\end{ex}

\begin{ex}
In the $(12,2)$ class of Table \ref{tab:3,6gh2}, in addition to LETS structures that are generated by the dot-based expansion starting from $c_7$, $s_8$, $c_8$, $n_8$, $s_9$, $c_9$ and $n_9$, there are 20 non-isomorphic 
LETS structures that are generated from prime structures of size 10 and 11 (denoted as NA in the table).  
Fig. \ref{fig:example(12,2)} shows the structure $\mathcal{S}=(\{f_1,f_2,\dots,f_{12}\}, E)$, as one of these NA structures.  This structure is generated starting from 
a non-cycle prime structure of size 10 in the $(10,8)$ class ($\mathcal{S}'=(\{f_1,f_2,\dots,f_{10}\}, E')$).
\begin{figure}[t] 
\centering
\includegraphics [width=0.37\textwidth]{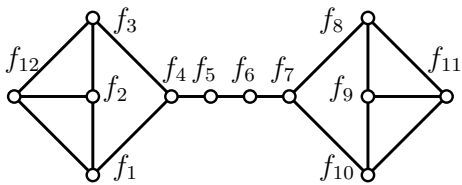}
\caption{A LETS structure in the $(12,2)$ class of a variable-regular graph with $d_\mathrm{v}=3$ and $g=6$.}
\label{fig:example(12,2)}
\end{figure}
Although a certain Tanner graph with $d_\mathrm{v}=3$ and $g=6$ may not contain any instance of $\mathcal{S}$, all the instances of the structure $\mathcal{S}'$ in the Tanner graph 
need to be enumerated for the dot-based search algorithm to guarantee the exhaustive coverage of the $(12,2)$ class. 
\end{ex}

\begin{ex}
Consider a $(3,6)$-regular LDPC code with $g=6$, and $n=20000$~\cite{mackayencyclopedia}.\footnote{This code is labeled as $\mathcal{C}_{10}$ in Section~\ref{sec:numerical}.}
\begin{table}[]
\centering
\caption{ Multiplicities of prime structures of size 8 and 9 in the code of Example~\ref{ex15}}
\label{tab:c20000}
\begin{tabular}{||c|r||c|r|| }
\cline{1-4}
Prime & Multiplicity& Prime& Multiplicity \\
structure&  &structure&  \\
\cline{1-4}
$c_8$ in $(8,4)$ class&0&$c_9$ in $(9,5)$ class&15\\
\cline{1-4}
$c_8$ in $(8,6)$ class&6430&$c_9$ in $(9,7)$ class&83392\\
\cline{1-4}
$s_8$ in  $(8,8)$ class&6219941&$s_9$ in $(9,9)$ class&55181079\\
\cline{1-4}
$n_8$ in  $(8,4)$ class&0&$n_9$ in $(9,5)$ class&1\\
\cline{1-4}
$n_8$ in  $(8,6)$ class&6754&$n_9$ in $(9,7)$ class&101777\\
\cline{1-4}
\end{tabular}
\end{table}
Suppose that one is interested in LETSs of this code in the range of  $a \leq12$ and $b \leq 5$. The multiplicities of all the instances of prime structures of size 8 and 9 in the Tanner graph of this code in 
the $(8,4)$, $(8,6)$, $(8,8)$, $(9,5)$, $(9,7)$ and $(9,9)$ classes are listed in Table \ref{tab:c20000}.
In the range $a \leq12$ and $b \leq 5$, the $(8,4)$, $(9,3)$, $(9,5)$, $(10,0)$, $(10,2)$, $(10,4)$, $(11,1)$, $(11,3)$, $(11,5)$, $(12,0)$, $(12,2)$, $(12,4)$ classes are the classes 
that can have LETS structures starting from these prime structures. To exhaustively search the Tanner graph for LETSs with $a \leq12$ and $b \leq 5$, the dot-based search algorithm
will have to enumerate all the instances of the prime structures listed in Table \ref{tab:c20000} (a total of more than 60 million). In all the above classes, however, 
this code has only 15, 1 and 8 instances of LETS structures in $(9,5)$, $(10,4)$ and $(11,5)$ classes, respectively, that are generated by the dot-based expansion, starting from the instances of these (more than
60 million) prime structures.
\label{ex15}
\end{ex}

\begin{ex} 
Consider a $(4,8)$-regular LDPC code with $g=6$, and $n=4000$~\cite{mackayencyclopedia}.\footnote{This code is labeled as $\mathcal{C}_{22}$ in Section~\ref{sec:numerical}.}
Suppose that one is interested in finding all the LETSs of this code in the range $a \leq 9$ and $b \leq 8$, using the dot-based search algorithm.
To do this, based on Table V of \cite{mehdi2014}, one must enumerate all the instances of simple cycle prime structure, $s_7$. 
The LETS structure in Fig. \ref{fig:example(9,8)} is an example of a structure in the $(9,8)$ class that has $s_7$ as its prime sub-structure.
It appears that while there are $123,111,331$ instances of  $s_7$ in the Tanner graph of this code, there is not a single instance of a LETS structure in all the classes of interest which has $s_7$ as its prime sub-structure, in the Tanner graph.

\begin{figure}[t] 
\centering
\includegraphics [width=0.24\textwidth]{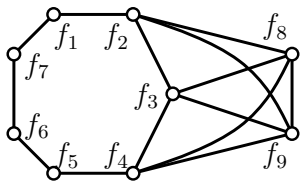}
\caption{A  LETS structure in the $(9,8)$ class of a variable-regular graph with $d_\mathrm{v}=4$ and $g=6$.}
\label{fig:example(9,8)}
\end{figure}
\end{ex}

\section{Dpl-based Characterization/Search Approach}
\label{sec:dpl-based}

To overcome the shortcomings of the dot-based search algorithm, explained in Subsection~\ref{sec:problems}, in this section, we first propose two new graph expansion
techniques, {\em path} and {\em lollipop} expansions, that are applied to smaller LETS structures to generate larger ones. Similar to {\em dot} expansion, the new expansions are also applied in the space of normal graphs. 
In Subsection~\ref{dpl}, we then use  {\em path} and {\em lollipop} expansions along with the {\em dot} expansion, 
to formulate a new characterization of LETS structures. In the context of characterizing a LETS structure as a sequence of embedded LETS structures that starts from
a simple cycle and expands, step by step, to reach the structure of interest, we introduce the concept of {\em minimal} characterizations, as those in which none
of the expansions in the sequence can be divided into smaller ones. The proposed characterization in Subsection~\ref{dpl} is minimal, and 
describes each and every LETS structure as an expansion of a simple cycle, using a combination of {\em dot}, {\em path} and {\em lollipop} expansion techniques, 
thus the name {\em dpl-based characterization}, or {\em dpl characterization}, in brief. In Subsection~\ref{dpl}, we also prove that any minimal characterization
is based only on the expansions {\em dot}, {\em path} and {\em lollipop}, and that the proposed characterization has some optimal properties 
as related to the corresponding search algorithm. The characterization results for some values of
$d_{\mathrm{v}}, g, a_{max}$, and $b_{max}$ are presented as characterization tables in Subsection~\ref{dpl-tables}.
For a given range $a \leq a_{max}$ and $b \leq b_{max}$, the maximum length of simple cycles participating in the {\em dpl} characterization is often less than
that for the $dot$ characterization. Finally, in Subsection~\ref{dpl-search}, we develop the search algorithm corresponding to the {\em dpl} characterization,
and in Subsection~\ref{complex}, we discuss the complexity of the search. 

\subsection{ \textit{Path} and \textit{Lollipop} Expansion Techniques}

Consider an $(a,b)$ LETS structure ${\cal S}$ of a $d_{\mathrm{v}}$-regular Tanner graph with $g \geq 6$. The {\em path} expansion of ${\cal S}$ is a new LETS structure ${\cal S}'$ of size $a+m$, that
is constructed by appending a  path of length $m+1$ to ${\cal S}$. The first and the last nodes of the  path are common with ${\cal S}$, and can be identical, in that case, the  path is closed.
Figures \ref{fig:paten}(a) and (b), show the {\em path} expansion of $\mathcal{S}$ using open and closed paths of length $m+1$, respectively. In these figures, the symbol \scalebox{1.2}{$\circ$} 
is used to represent the common node(s) between $\mathcal{S}$ and the path, and the symbol \scalebox{1.2}{$\bullet$} is used to represent the other $m$ nodes of the path. 
It is clear that for an {\em open-path} expansion, the degrees of the two nodes that are common with ${\cal S}$ must be strictly less that $d_{\mathrm{v}}$ (in $G({\cal S})$), and for the {\em closed-path} expansion,
the degree of the one common node must be strictly less than $d_{\mathrm{v}}-1$ in $G({\cal S})$. We use the notations $pa^o_m$ and $pa^c_m$ for open and closed paths of length $m+1$, respectively.  
The notation $pa_m$ is used to include both open and closed paths. 
\begin{figure}[] 
\centering
\includegraphics [width=0.55\textwidth]{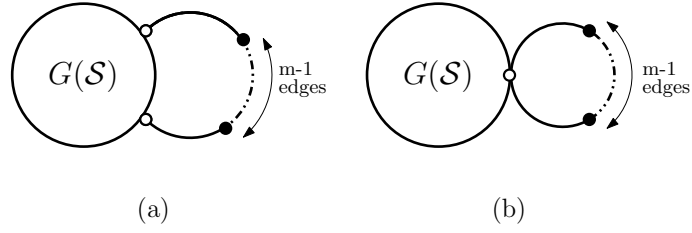}
\caption{Expansion of the LETS structure $\mathcal{S}$ with (a) an open $path$ of length $m+1, pa_m^o$ (b) a closed $path$ of length $m+1, pa_m^c$.}
\label{fig:paten}
\end{figure}

\begin{pro}
\label{pro:pa}
Suppose that $\mathcal{S}$ is a LETS structure in the $(a,b)$ class $(b \geq 2)$ for variable-regular Tanner graphs with variable degree $d_{\mathrm{v}}$. Applying the  path expansion $pa_m$, with $m \geq 2$, to $\mathcal{S}$ will 
result in LETS structure(s) in the $(a+m, b-2+m(d_\mathrm{v}-2))$ class. 
\end{pro}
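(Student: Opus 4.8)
The plan is to carry out a node-and-edge counting argument exactly parallel to the proof of Proposition~\ref{pro:dot}, but with the single new node of a $dot$ expansion replaced by the chain of new nodes contributed by a $path$ expansion, and then to read off the resulting class from Lemma~\ref{lem1}. Since everything happens in the space of normal graphs, the only quantities that matter are the change in the number of nodes and the change in the number of edges.

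First I would count nodes. A $pa_m$ expansion appends a path of length $m+1$, and whether this path is open or closed it adds exactly $m$ nodes not already in $\mathcal{S}$: an open path of length $m+1$ has $m+2$ nodes of which two are common with $\mathcal{S}$, while a closed path of length $m+1$ has $m+1$ nodes of which one is common with $\mathcal{S}$ (see Fig.~\ref{fig:paten}). Hence $a' = a+m$ in both cases. Next I would count edges: a path of length $m+1$ contributes $m+1$ edges, none of which lies in $\mathcal{S}$ (only the one or two end nodes are shared), so $|E_{\mathcal{S}'}| = |E_{\mathcal{S}}| + m + 1$. Using Lemma~\ref{lem1} in the form $|E_{\mathcal{S}}| = (a d_{\mathrm{v}} - b)/2$ and then applying Lemma~\ref{lem1} again to $\mathcal{S}'$ gives $b' = a' d_{\mathrm{v}} - 2|E_{\mathcal{S}'}| = (a+m)d_{\mathrm{v}} - 2|E_{\mathcal{S}}| - 2(m+1) = b + m d_{\mathrm{v}} - 2m - 2 = b - 2 + m(d_{\mathrm{v}}-2)$, which is exactly the claimed $(a+m,\ b-2+m(d_{\mathrm{v}}-2))$ class.

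It then remains to check that the expansion actually produces a legitimate LETS structure. Leaflessness is immediate: the $m$ new nodes are interior nodes of the appended path and hence have degree $2$, while the one or two nodes shared with $\mathcal{S}$ only gain degree, so no degree-one node is created and the normal graph stays leafless. The maximum-degree bound $\Delta \le d_{\mathrm{v}}$ is guaranteed precisely by the constraints stated just before the proposition on the common node(s) (degree strictly below $d_{\mathrm{v}}$ for the two endpoints of an open path, strictly below $d_{\mathrm{v}}-1$ for the single common node of a closed path), and these same constraints are what make the hypothesis $b \ge 2$ the natural applicability condition. Finally, because the appended path has length $m+1 \ge 3$, it creates neither a parallel edge nor a cycle shorter than the existing ones, so $g \ge 6$ is preserved. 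I expect the only mildly delicate part to be this last piece of bookkeeping—treating the open and closed cases uniformly and confirming validity—since the class computation itself is a one-line consequence of Lemma~\ref{lem1}.
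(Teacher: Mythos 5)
Your proposal is correct and follows essentially the same node-and-edge counting argument as the paper's proof: $a'=a+m$, $|E_{\mathcal{S}'}|=|E_{\mathcal{S}}|+m+1$, and then Lemma~\ref{lem1} gives $b'=b-2+m(d_\mathrm{v}-2)$. Your additional verification that the expanded structure is a valid leafless LETS (degree bounds and girth preservation) is a harmless elaboration of points the paper leaves implicit in the constraints stated before the proposition.
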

\begin{proof}
The constraint $b \geq 2$ is to accommodate the 2 edges of the path that are connected to $\mathcal{S}$. 
We also note that $pa^o_1$ is the same as $dot_2$, and that the expansion $pa^c_1$ will result in a cycle of length 4. 
We thus focus on $m \geq 2$ in the statement of the proposition. 

The expanded structure $\mathcal{S}'$ has size $a'=a+m$. Based on Lemma~\ref{lem1}, the number of edges in $\mathcal{S}$ is $|E_\mathcal{S}|=(a d_\mathrm{v}-b)/2$. 
By the nature of expansion, the number of edges in $\mathcal{S}'$  is $|E_{\mathcal{S}'}|=|E_\mathcal{S}|+m+1$. By combining these and using Lemma~\ref{lem1}, we
have $b'=a'd_\mathrm{v}-2|E_{\mathcal{S}'}|=(a + m) d_\mathrm{v} - (a d_\mathrm{v} - b) -2m - 2=b-2+m(d_\mathrm{v}-2)$.
\end{proof}
The pseudo-code of the {\em path} expansion algorithm is given in Routine \ref{Onepath}.
\begin{routine}
\caption{{\bf (PathExpansion)} Expansion of a LETS structure $\mathcal{S}$ to all possible LETS structures $\mathcal{L}^p$ of size $|{\cal S}|+m$, using $pa_m$. $\mathcal{L}^p$ = PathExpansion$(\mathcal{S},m)$}
\label{Onepath}
 \begin{algorithmic} [1] 
\State \textbf{Initialization:}  Identify the subset of nodes in $\mathcal{S}$ with degree strictly less that $d_\mathrm{v}$, and store them in $F'$.
Identify the subset of  nodes in $\mathcal{S}$ with degree strictly less than $d_\mathrm{v}-1$, and store them in $F''$.
$\mathcal{L}^p  \gets \emptyset$.
\parState {Construct $\mathcal{N}_2(G)$ to include all the possible subsets of $F'$ of size 2.}
\For {each set in $\mathcal{N}_2(G)$ }
\parState {Generate a new LETS structure $\mathcal{S}'$ by applying $pa^o_m$ expansion to $\mathcal{S}$ such that the 2 nodes in $F'$ are the first and the last nodes of the new path of length $m+1$.}
\vspace{-6pt}
\State  $\mathcal{L}^p \gets \mathcal{L}^p \cup \mathcal{S}'$.
\EndFor
\For {each node in $F''$ }
\parState {Generate a new LETS structure $\mathcal{S}''$ by applying $pa^c_m$ expansion to $\mathcal{S}$ such that the  node in $F''$ is the common node between $\mathcal{S}$ and the new path of length $m+1$.}
\vspace{-6pt}
\State  $\mathcal{L}^p \gets \mathcal{L}^p \cup \mathcal{S}''$.
\EndFor
\State \textbf{Output:} $\mathcal{L}^p$.
 \end{algorithmic}
 \end{routine}

In Fig. \ref{fig:lolen}, the expansion of a LETS structure $\mathcal{S}$ using a lollipop walk of length $m+1$ is shown. 
\begin{figure}[t] 
\centering
\includegraphics [width=0.38\textwidth]{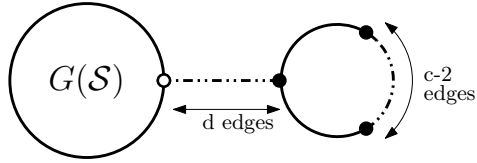}
\caption{Expansion of the LETS structure $\mathcal{S}$ with a lollipop walk of length $m+1=d+c$, $lo^c_m$.}
\label{fig:lolen}
\end{figure}
The notation $lo^c_m$ is used for a lollipop walk of length $m+1$ ($m$ is number of the nodes added to $\mathcal{S}$), 
which consists of a cycle of length $c$ $(c \geq 3)$ and a path of length $d$ $(d \geq 1)$.
Clearly, the common node between ${\cal S}$ and the {\em lollipop} expansion must have a degree strictly less than $d_{\mathrm{v}}$ in $G({\cal S})$.

\begin{pro}
\label{pro:lo}
Suppose that $\mathcal{S}$ is a LETS structure in the $(a,b)$ class $(b \geq 1)$ for variable-regular Tanner graphs with variable degree $d_\mathrm{v}$.
Applying the lollipop expansion $lo^c_m$, with  $ m \geq 3$ and $3 \leq c \leq m$, to $\mathcal{S}$ will result in LETS structure(s) in the $(a+m, b-2+m(d_\mathrm{v}-2))$ class. 
\end{pro}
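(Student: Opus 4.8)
The plan is to mimic the edge-counting arguments in the proofs of Propositions~\ref{pro:dot} and~\ref{pro:pa}, and then invoke Lemma~\ref{lem1} to translate the node and edge counts of the normal graph into the class parameters $(a,b)$. The first step is to settle the census of the $lo^c_m$ expansion. A lollipop walk of length $m+1$ contributes $m+1$ distinct edges, and its sequence of $m+2$ node-positions has exactly one repetition (the walk must return to an already-visited node to close the cycle of length $c$), so it contains $m+1$ distinct nodes; one of these, the free end of the stick, is identified with a node of $\mathcal{S}$. Hence the expansion adds exactly $m$ new nodes and $m+1$ new edges, giving $a'=a+m$ and $|E_{\mathcal{S}'}|=|E_{\mathcal{S}}|+m+1$.

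Next I would pin down the admissible ranges of the parameters. The head of the lollipop is a cycle, so $c\geq 3$, and the stick is a path of length $d\geq 1$ (if $d=0$ the common node would have to sit on the cycle itself, which is just the closed-path expansion $pa^c$ already covered by Proposition~\ref{pro:pa}). Since the total length is $m+1=c+d$, we get $m+1\geq 4$ and $c=m+1-d\leq m$, i.e.\ $m\geq 3$ and $3\leq c\leq m$, exactly as stated. For the hypothesis $b\geq 1$: in the normal graph of a LETS, a variable node of degree $\delta$ is attached in $G(\mathcal{S})$ to $d_\mathrm{v}-\delta$ degree-one (unsatisfied) check nodes, so for the stick to be appendable at a node of $\mathcal{S}$ that node must have normal-graph degree strictly less than $d_\mathrm{v}$, which forces it to be incident to at least one unsatisfied check node, hence $b\geq 1$. (This is one less than the $b\geq 2$ required in Proposition~\ref{pro:pa}, because the lollipop meets $\mathcal{S}$ in a single edge rather than two.)

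With the counts in hand, the computation of $b'$ is immediate. By Lemma~\ref{lem1}, $|E_{\mathcal{S}}|=(a d_\mathrm{v}-b)/2$, so
\[
b'=a'd_\mathrm{v}-2|E_{\mathcal{S}'}|=(a+m)d_\mathrm{v}-(a d_\mathrm{v}-b)-2(m+1)=b-2+m(d_\mathrm{v}-2),
\]
which is the claimed class $(a+m,\,b-2+m(d_\mathrm{v}-2))$. To finish, I would check that $\mathcal{S}'$ is genuinely a LETS structure: each of the $m$ new nodes has normal-graph degree $2$ (or $3$, at the single stick-to-cycle junction), while the common node's degree rises by exactly $1$ and stays at most $d_\mathrm{v}$ by construction, so no leaf is created and the normal graph of $\mathcal{S}'$ remains leafless with all check nodes of $G(\mathcal{S}')$ still of degree $1$ or $2$.

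Since the whole argument is bookkeeping, there is no deep obstacle; the only places that need care are getting the node/edge census of the lollipop walk exactly right --- in particular the identification of the stick's free end with a node of $\mathcal{S}$, which is what makes the ``$m$ new nodes, $m+1$ new edges'' tally correct --- and deducing the parameter inequalities $m\geq 3$, $3\leq c\leq m$ and $b\geq 1$ cleanly from $c\geq 3$ and $d\geq 1$.
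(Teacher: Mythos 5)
Your proposal is correct and follows essentially the same route as the paper: the paper's own proof is the one-line remark ``Similar to that of Proposition~\ref{pro:pa},'' i.e.\ exactly the node/edge census ($m$ new nodes, $m+1$ new edges) combined with Lemma~\ref{lem1} that you carry out explicitly. Your additional checks (the derivation of $m\geq 3$, $3\leq c\leq m$ and $b\geq 1$ from $c\geq 3$, $d\geq 1$, and the leaflessness of the expanded structure) are consistent with the paper and fill in details it leaves implicit.
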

\begin{proof}
Similar to that of Proposition~\ref{pro:pa}.
\end{proof} 
The pseudo-code for the {\em lollipop} expansion is given in Routine \ref{One$lollipop$}.
\begin{routine}
\caption{{\bf (LollipopExpansion)} Expansion of a LETS structure $\mathcal{S}$  to all possible LETS structures $\mathcal{L}^l_c$ of size $|{\cal S}|+m$, using $lo^c_m$. $\mathcal{L}^l_c$ = LollipopExpansion$(\mathcal{S},m,c)$}
\label{One$lollipop$}
 \begin{algorithmic} [1] 
\State \textbf{Initialization:}  Identify the subset of nodes in $\mathcal{S}$ with degree strictly less than $d_\mathrm{v}$ and store them in $F'$. $\mathcal{L}^l_c  \gets \emptyset$.

\For {each node $f$ in $F'$ }
\parState {Generate a new LETS structure, $\mathcal{S}'$, by expanding ${\cal S}$ using $lo^c_m$ expansion such that $f$ is the common node between the lollipop and ${\cal S}$.}
\vspace{-6pt}
\State  $\mathcal{L}^l_c \gets \mathcal{L}^l_c \cup \mathcal{S}'$.
\EndFor
\State \textbf{Output:} $\mathcal{L}^l_c$.
 \end{algorithmic}
 \end{routine}

In the following, we demonstrate, through a sequence of intermediate results, that any LETS structure of variable-regular Tanner graphs can be generated 
from simple cycles using a combination of the three expansion techniques: {\em dot}, {\em path} and {\em lollipop}.

\begin{lem}
\label{lem:simple}
Except the simple cycles, any LETS structure $\mathcal{S}'=(F',E')$ contains at least one LETS sub-structure $\mathcal{S}=(F,E)$ with $|F| < |F'|$.
\end{lem}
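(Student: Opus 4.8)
The plan is to argue that any LETS structure $\mathcal{S}'=(F',E')$ that is not a simple cycle must admit a strictly smaller subgraph that is itself a valid LETS, i.e., a leafless normal graph (with no parallel edges, all degrees in $[2,d_\mathrm{v}]$) that also satisfies the girth constraint inherited from the ambient Tanner graph. First I would invoke leaflessness of $\mathcal{S}'$: since $\mathcal{S}'$ is not a simple cycle, its normal graph is a leafless connected graph that is not a single cycle, so it contains at least one node of degree $\geq 3$. A standard fact about leafless graphs is that they decompose into cycles and paths between branch nodes; more usefully, a connected leafless graph that is not a single cycle has either (i) a chord, or (ii) a cut structure that lets us peel off a piece. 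The key step is to locate, inside $\mathcal{S}'$, a subset $F \subsetneq F'$ whose induced normal subgraph is still leafless and connected.

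The cleanest route I would take: because the normal graph of $\mathcal{S}'$ is leafless and not a cycle, pick a vertex $v$ of degree $\geq 3$ and consider an ear decomposition starting from a cycle $C$ through $v$. An ear decomposition of a 2-edge-connected (or more generally leafless, after handling bridges) graph expresses $\mathcal{S}'$ as $C = G_0 \subset G_1 \subset \cdots \subset G_k = \mathcal{S}'$ where each $G_{i+1}$ is obtained by adding an ear (an open path, possibly a single edge, or a closed loop) to $G_i$; each $G_i$ is leafless. Taking $\mathcal{S} = G_{k-1}$ gives a leafless connected subgraph with $|F| < |F'|$ — the last ear added at least one edge and, unless that ear was a single chord (adding no new vertex), it added at least one new vertex, so $|F|<|F'|$ strictly. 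The residual case where the final ear is a chord (same vertex set) is handled by instead deleting a chord: a chord of a leafless graph, when removed, leaves every degree $\geq 2$ wherever it was $\geq 3$, and if that drops some endpoint to degree $1$ we recurse/choose a different ear; at worst, iterating we reach a strictly smaller vertex set. One must also check the new substructure respects the girth: deleting edges or omitting an ear cannot create a shorter cycle, and it cannot create parallel edges (deletion never does), and all degrees stay $\leq d_\mathrm{v}$; so $\mathcal{S}$ is a genuine LETS structure in the same regime.

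The main obstacle I anticipate is the bookkeeping when the graph is leafless but has bridges (cut edges) — then it is not 2-edge-connected, ear decompositions need the generalized "open ear / closed ear" version, and removing the wrong edge or ear can create a leaf. The careful argument is: if $\mathcal{S}'$ has a bridge $e=uv$, then $\mathcal{S}'-e$ splits into two leafless-or-near-leafless components; since $\mathcal{S}'$ is leafless, $\deg(u),\deg(v)\geq 2$ already without relying on $e$ in at least one sense — actually each of $u,v$ has another edge, so on each side of the bridge the component, after contracting the dangling bridge-end appropriately, is leafless; pick the smaller side together with a minimal attaching path and verify leaflessness. Alternatively, and perhaps more simply, I would argue by extremality: among all LETS subgraphs of $\mathcal{S}'$ on fewer than $|F'|$ vertices, if none existed, then $\mathcal{S}'$ would be vertex-minimal as a leafless connected graph subject to containing its own structure, which forces every vertex to have degree exactly $2$ (any degree-$3$ vertex would allow deleting an incident edge on a cycle and still leaving a leafless subgraph) — contradicting that $\mathcal{S}'$ is not a simple cycle. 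This extremal phrasing sidesteps the explicit ear-decomposition casework and is the version I would ultimately write, relegating the degree-reduction lemma ("a leafless non-cycle graph has an edge whose deletion leaves a leafless graph, possibly after also deleting resulting degree-$\leq 1$ vertices, strictly shrinking the vertex set") to a short self-contained argument.
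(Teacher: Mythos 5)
Your proposal overcomplicates the statement and, in doing so, opens gaps that the paper's one-line argument avoids. The paper's proof is simply: a leafless normal graph has minimum degree $2$, hence contains a cycle, hence contains a \emph{chordless} (simple) cycle; a chordless cycle on a vertex set $F$ is precisely the induced normal subgraph of $F$, so it is a LETS sub-structure; and since $\mathcal{S}'$ is not itself a simple cycle, this chordless cycle cannot span $F'$ (a spanning chordless cycle would force $\mathcal{S}'$ to equal that cycle), so $|F|<|F'|$. You never consider just taking a short cycle, and instead try to peel off a large leafless proper subgraph, which is where the trouble starts.

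Concretely, three steps of your argument do not go through. First, ear decompositions require $2$-edge-connectivity, but leafless normal graphs can have bridges --- the paper's non-cycle prime structures $n_k$ (two cycles joined by a path) are exactly of this form --- and your patch for that case (``the smaller side together with a minimal attaching path'') leaves a degree-one vertex at the free end of the attaching path, i.e., a leaf. Second, a ``sub-structure'' here must be \emph{vertex-induced}: it corresponds to a subset of variable nodes, and all degree-$2$ check nodes joining them are automatically present, so you are not free to delete individual edges of the normal graph. Your handling of the chord-as-last-ear case and the degree-reduction lemma in the extremal argument both rely on edge deletion, which produces a non-induced subgraph on the \emph{same} vertex set --- neither a legitimate sub-structure nor one with $|F|<|F'|$. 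Third, the extremal argument's key claim (a degree-$3$ vertex permits deleting an incident edge while staying leafless) therefore does not yield the strict decrease in the number of variable nodes that the lemma asserts. All of these issues disappear once you observe that a single chordless cycle already does the job.
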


\begin{proof}
Since the structure $\mathcal{S}'$ is leafless, it must contain a cycle, and as a result, a simple cycle. The proof then follows from the fact that simple cycles are LETSs.
\end{proof}

\begin{lem}
\label{lem:$dot$a-1}
Suppose that a LETS structure $\mathcal{S}'=(F',E')$ has a LETS sub-structure $\mathcal{S}=(F,E)$, where $|F|=|F'|-1$. 
Structure $\mathcal{S}'$ can then be generated from $\mathcal{S}$ by using the  dot expansion.
\end{lem}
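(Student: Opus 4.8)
The plan is to show that $\mathcal{S}'$ differs from its sub-structure $\mathcal{S}$ by exactly one node, and that the way this node attaches to $\mathcal{S}$ in the normal graph is precisely a depth-one tree. First I would denote by $f$ the unique node in $F' \setminus F$, so that $F' = F \cup \{f\}$. Since both $\mathcal{S}$ and $\mathcal{S}'$ are normal graphs of LETS structures in a $d_{\mathrm{v}}$-regular Tanner graph, every node has degree at least $2$ and at most $d_{\mathrm{v}}$. The key observation is that all edges of $E'$ incident to $f$ must have their other endpoint in $F$ (there is no other node available), so in $\mathcal{S}'$ the node $f$ is joined only to nodes of $\mathcal{S}$; let $m = \deg_{\mathcal{S}'}(f)$ be the number of such edges.

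Next I would argue that $E' = E \cup \{m \text{ edges from } f \text{ to } F\}$, i.e. that $\mathcal{S}$ is exactly the induced subgraph of $\mathcal{S}'$ on $F$ with no "extra" edges deleted. This is where one must be slightly careful: a priori $E$ could be a proper subset of $E' \cap (F \times F)$. I expect this to be the main (though still minor) obstacle. The resolution is a counting/degree argument: by Lemma~\ref{lem1}, $|E| = (a d_{\mathrm{v}} - b)/2$ where $\mathcal{S}$ is in the $(a,b)$ class, and $|E'| = ((a+1) d_{\mathrm{v}} - b')/2$; since $\mathcal{S}$ is a LETS sub-structure of $\mathcal{S}'$ in the Tanner-graph sense, the induced subgraph $G(\mathcal{S})$ consists of exactly those check nodes adjacent to $\mathcal{S}$ together with their edges to $\mathcal{S}$, which in normal-graph terms forces $E$ to contain every edge of $\mathcal{S}'$ joining two nodes of $F$. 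Hence the edges of $\mathcal{S}'$ not in $\mathcal{S}$ are exactly the $m$ edges incident to $f$, and $\mathcal{S}'$ is obtained from $\mathcal{S}$ by adding the single root node $f$ together with $m$ edges to distinct nodes of $F$ — that is, a $dot_m$ expansion.

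Finally I would verify that this $dot_m$ expansion is legal, i.e. that it satisfies the constraints stated before Proposition~\ref{pro:dot}: the $m$ endpoints in $F$ must have had degree strictly less than $d_{\mathrm{v}}$ in $\mathcal{S}$ (true, since adding the edge to $f$ keeps their degree in $\mathcal{S}'$ at most $d_{\mathrm{v}}$), the $m$ endpoints are distinct (true, since $\mathcal{S}'$ has no parallel edges, being the normal graph of a $4$-cycle-free Tanner graph), and $2 \le m \le d_{\mathrm{v}}$. The upper bound $m \le d_{\mathrm{v}}$ is immediate from $\deg_{\mathcal{S}'}(f) \le d_{\mathrm{v}}$, and the lower bound $m \ge 2$ follows because $\mathcal{S}'$ is leafless, so $\deg_{\mathcal{S}'}(f) \ge 2$. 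This establishes that $\mathcal{S}'$ is generated from $\mathcal{S}$ by a valid $dot$ expansion, completing the proof.
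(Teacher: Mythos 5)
Your proof is correct and follows essentially the same route as the paper's: the paper simply observes that the unique node $f\in F'\setminus F$ must (by leaflessness) connect to at least two nodes of $\mathcal{S}$, so $\mathcal{S}'$ arises from $\mathcal{S}$ by a $dot_m$ expansion rooted at $f$. Your additional care in checking that $E$ is exactly the induced edge set on $F$ and that the degree/distinctness constraints of a legal $dot_m$ expansion hold is a worthwhile elaboration of details the paper leaves implicit, but it is not a different argument.
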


\begin{proof}
Node $f$ in $F'\backslash F$ must be connected to at least two nodes in $\mathcal{S}$. Based on the definition of {\em dot} expansion, structure $\mathcal{S}'$ 
can then be generated from $\mathcal{S}$ by using $dot_m$ expansion ($m \geq 2$), where the root of the expansion tree is $f$.
\end{proof} 

\begin{lem}
\label{lem-as}
Consider a LETS structure $\mathcal{S}'=(F',E')$ that is prime with respect to  dot expansion, and is not a simple cycle. Then, 
for the largest LETS sub-structure $\mathcal{S}=(F,E)$ of ${\cal S}'$, we have \\
(i) $|F| < |F'|-1$,\\
(ii) subgraph induced by $F' \backslash F$ is connected,\\
(iii) no node in $F'\setminus F$ can have more than one  edge connected to the nodes in ${\cal S}$.
\end{lem}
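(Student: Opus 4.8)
The plan is to prove the three items in turn, in each case leaning on the maximality of $\mathcal{S}$ together with the hypothesis that $\mathcal{S}'$ cannot be produced from a smaller LETS by a single $dot$ expansion. Everything is done in the normal graph of $\mathcal{S}'$, and the key enabling observation is that for any node subset $F\subseteq F'$ the subgraph of the normal graph of $\mathcal{S}'$ induced by $F$ is itself the normal graph of a sub-structure of $\mathcal{S}'$: this sub-structure is automatically an ETS (deleting variable nodes cannot raise any check-node degree, so all check-node degrees stay $\le 2$), and it is a LETS precisely when that induced normal subgraph is leafless, i.e., connected with no node of degree $1$. Writing $W:=F'\setminus F$, part (i) is then immediate: by Lemma~\ref{lem:simple}, since $\mathcal{S}'$ is not a simple cycle it has a LETS sub-structure on fewer than $|F'|$ nodes, so the largest one has $|F|\le|F'|-1$; and if $|F|=|F'|-1$, Lemma~\ref{lem:$dot$a-1} would build $\mathcal{S}'$ from $\mathcal{S}$ by a $dot$ expansion, contradicting primality. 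Hence $|F|<|F'|-1$, and in particular $|W|\ge 2$.

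For parts (ii) and (iii) I would use the same device: from the putative bad configuration produce a LETS sub-structure of $\mathcal{S}'$ that is still proper but strictly larger than $\mathcal{S}$, contradicting maximality. For (iii): if some $f\in W$ were joined to $\mathcal{S}$ by two or more edges of the normal graph, then $F\cup\{f\}$ would induce a connected graph in which $f$ has degree $\ge 2$ while every node of $F$ keeps degree $\ge 2$, hence a LETS sub-structure --- proper because $|W|\ge 2$ by (i), and larger than $\mathcal{S}$ --- a contradiction. For (ii): suppose the subgraph induced by $W$ is disconnected and let $C$ be one of its components, so $\emptyset\ne C\subsetneq W$. The one fact that makes the argument work is that every edge of $\mathcal{S}'$ with an endpoint in $C$ either stays inside $C$ or goes to $F$: an edge of $\mathcal{S}'$ joining two nodes of $W$ belongs to the subgraph induced by $W$ and hence cannot leave the component $C$. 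Consequently each $f\in C$ has all of its $\mathcal{S}'$-neighbours in $F\cup C$ and retains degree $\ge 2$ there, the nodes of $F$ also retain degree $\ge 2$, and --- since $\mathcal{S}'$ is connected while $C\subsetneq F'$ --- there is at least one edge from $C$ to $F'\setminus C$, which by the same fact is a $C$--$F$ edge, so $F\cup C$ is connected. Thus $F\cup C$ is a LETS sub-structure, proper (since $W\setminus C\ne\emptyset$) and larger than $\mathcal{S}$ --- again a contradiction --- so the subgraph induced by $W$ must be connected.

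I expect the only genuinely delicate point to be the bookkeeping in part (ii): one must verify simultaneously that no neighbour of a node of $C$ escapes to a different component of $W$ (otherwise the degree count for $F\cup C$ breaks) and that at least one edge joins $C$ to $F$ (otherwise $F\cup C$ is disconnected). Both follow from the single structural fact isolated above --- that an edge internal to $W$ stays within one component --- combined with the connectedness of $\mathcal{S}'$. Once that is pinned down, (ii) and (iii) fall out of the maximality of $\mathcal{S}$, and (i) out of Lemmas~\ref{lem:simple} and~\ref{lem:$dot$a-1}.
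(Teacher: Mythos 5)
Your proof is correct and follows essentially the same route as the paper: part (i) from Lemmas~\ref{lem:simple} and~\ref{lem:$dot$a-1} plus primality, and parts (ii) and (iii) by exhibiting a proper LETS sub-structure strictly larger than $\mathcal{S}$, contradicting maximality (your ``keep $F\cup C$'' is just the complementary phrasing of the paper's ``remove one disconnected part''). Your explicit verification that edges internal to $F'\setminus F$ stay within a single component, and that each retained component attaches to $F$, is a welcome tightening of the connectivity bookkeeping that the paper leaves implicit.
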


\begin{proof}
Proof of (i) follows from Lemmas~\ref{lem:simple} and~\ref{lem:$dot$a-1}, and the definition of a prime structure. For (ii),
assume that the subgraph is disconnected. Then removing the nodes in one of the disconnected parts from $F'$ results in a new LETS sub-structure 
of ${\cal S}'$, which is larger than ${\cal S}$. This contradicts the assumption that ${\cal S}$ is the largest LETS sub-structure of ${\cal S}'$.
For (iii), again by contradiction, if such a node exists, by removing  all the other nodes in $F'\setminus F$,
we obtain a LETS sub-structure of ${\cal S}'$ larger than ${\cal S}$.
\end{proof} 

\begin{pro}
\label{lem:main}
Suppose that $\mathcal{S}'=(F',E')$ is a prime structure of  dot expansion, but is not a simple cycle.  Let $\mathcal{S}=(F,E)$ be the largest LETS sub-structure of ${\cal S}'$. 
Structure $\mathcal{S}'$ can then be generated by expanding $\mathcal{S}$ using a  path or a  lollipop walk.
\end{pro}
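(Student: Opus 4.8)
The plan is to look at the subgraph $H$ induced by the newly added nodes $F'\setminus F$ and at how $H$ hangs off $\mathcal{S}$, and to show that this ``attachment'' is exactly one of the patterns in the definition of a $path$ or a $lollipop$ expansion. First I would record what Lemma~\ref{lem-as} already gives: $H$ is connected, $|F'\setminus F|\geq 2$, and each node of $H$ sends at most one edge into $\mathcal{S}$. Write $t$ for the number of these connecting edges; by part~(iii) they have $t$ distinct endpoints in $H$, the \emph{contact} nodes. Since $\mathcal{S}'$ is leafless and the degree of $f\in H$ in $G(\mathcal{S}')$ equals its degree inside $H$ plus one if $f$ is a contact node, every non-contact node of $H$ has $H$-degree $\geq 2$ and every contact node has $H$-degree $\geq 1$; also $t\geq 1$ because $\mathcal{S}'$ is connected.

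The engine of the argument is a minimality principle coming from the hypothesis that $\mathcal{S}$ is the \emph{largest} LETS sub-structure of $\mathcal{S}'$: for any $W$ with $\emptyset\subsetneq W\subseteq F'\setminus F$ such that $G(\mathcal{S}\cup W)$ is connected and leafless, $G(\mathcal{S}\cup W)$ is a LETS lying strictly between $\mathcal{S}$ and $\mathcal{S}'$ (it is automatically an ETS, since passing to an induced subgraph can only lower check-node degrees), which is impossible; hence necessarily $W=F'\setminus F$. I would use this to pin down the shape of $H$. If $t\geq 2$, choose distinct contact nodes $v_1,v_2$ and a path $P$ between them in $H$; then $G(\mathcal{S}\cup V(P))$ is connected and leafless (the ends $v_1,v_2$ pick up a connecting edge plus a $P$-edge, interior nodes have two $P$-edges, and any extra edges of $H$ inside $V(P)$ only raise degrees), so by minimality $V(P)=F'\setminus F$. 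A short extra argument then forces $H=P$ as a graph: a chord of $P$ would let us delete at least one interior node and re-route along the chord, keeping a connected leafless attachment and contradicting minimality; similarly no interior node of $P$ can be a contact node, for truncating $P$ at such a node gives a proper attachment. Thus $t=2$ and $H$ is a path whose two endpoints are the contact nodes --- an open path expansion $pa^o_m$ if the two connecting edges reach distinct nodes of $\mathcal{S}$, a closed one $pa^c_m$ if they reach the same node, with $m=|F'\setminus F|\geq 2$.

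In the remaining case $t=1$, let $v_0$ be the unique contact node. Then $H$ must contain a cycle: a tree on $\geq 2$ vertices has at least two leaves, but only $v_0$ is allowed to have $H$-degree $1$. Take a shortest (hence chordless) cycle $C$ of $H$ and a shortest path $Q$ in $H$ from $v_0$ to $V(C)$ (trivial if $v_0\in V(C)$), meeting $C$ at $w$. As before, $G(\mathcal{S}\cup V(Q)\cup V(C))$ is connected and leafless, so $V(Q)\cup V(C)=F'\setminus F$. It then remains to show that $H$ carries no further edges, i.e.\ that $H$ equals $Q\cup C$ (a cycle with a pendant tail); this is the step I expect to be the real work --- any extra edge is either a chord of $Q$ or creates a second cycle besides $C$, and one has to exhibit, case by case, a vertex of $F'\setminus F$ whose removal leaves the attachment connected and leafless, contradicting minimality, and the bookkeeping becomes delicate when $H$ contains bridges or several cycles. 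Granting this, $H$ is a chordless cycle with a tail ending at $v_0$, which is exactly a lollipop expansion $lo^c_m$ with $c=|V(C)|\geq 3$ (the normal graph has no parallel edges), tail length $d=|E_Q|+1\geq 1$, and $m=|F'\setminus F|=c+d-1\geq 3$. Finally, since $t\in\{1,2\}$ these are all the possibilities, and in every case each node of $\mathcal{S}$ that receives a new edge had degree at most $d_\mathrm{v}-1$ in $G(\mathcal{S})$ --- or at most $d_\mathrm{v}-2$ for the single common node of a closed path, which absorbs two new edges --- because its degree in $G(\mathcal{S}')$ is at most $d_\mathrm{v}$; so each of the resulting expansions is legitimate, and $\mathcal{S}'$ is obtained from $\mathcal{S}$ by a single $path$ or $lollipop$ expansion.
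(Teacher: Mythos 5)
Your proof is correct and follows essentially the same route as the paper's: both arguments exploit the maximality of $\mathcal{S}$ to show that any connected, leafless attachment built from a nonempty subset of $F'\setminus F$ must exhaust $F'\setminus F$, and both use shortest paths (and, in your case, a shortest cycle) to manufacture a smaller such attachment whenever the induced subgraph $H$ on $F'\setminus F$ is anything other than a bare path or a cycle with a pendant tail. Your reorganization into the cases $t\geq 2$ and $t=1$ is sound, and forcing $t=2$ via the ``no interior contact node'' argument is a clean substitute for the paper's separate step ruling out three or more connecting edges. The one step you explicitly leave open --- that in the $t=1$ case $H$ carries no edges beyond $Q\cup C$ --- is treated just as informally in the paper (``with a discussion similar to the case of two connecting edges, one can prove that there must be exactly one cycle''), and it does close with the tools you already have: shortestness of $Q$ excludes chords of $Q$ and every edge from $V(Q)$ to $V(C)$ except possibly one from the penultimate node of $Q$; shortestness of $C$ excludes chords of $C$ and, applied to the small cycle created by such a surviving edge, forces $|V(C)|\leq 4$; and in those two residual configurations deleting a single suitable node of $C$ leaves a connected, leafless attachment, contradicting maximality. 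So the bookkeeping is finite and less delicate than you fear.
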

\begin{proof}
Based on Lemma~\ref{lem-as}, in the following, we consider two cases of $|F|=|F'|-2$ and $|F| < |F'|-2$. 

{\em Case (1)}--$|F|=|F'|-2$: Based on Part (ii) of Lemma~\ref{lem-as}, the two nodes in $F'\setminus F$ must be adjacent. Moreover, based on Part (iii) of the same lemma,
each node in $F'\setminus F$ must have one connection to the nodes in $\mathcal{S}$. The only possible configurations satisfying both conditions 
are the expansions of ${\cal S}$ by the closed and open paths of length 3. 

{\em Case (2)}--$|F| < |F'|-2$: We first prove that in this case, the number of the edges connecting the nodes in $F'\setminus F$ to the nodes in $\mathcal{S}$ must be strictly less than 3.
Suppose that the number of such edges is at least $3$. This means that, based on Part (iii) of Lemma~\ref{lem-as}, there are at least three nodes in $F'\setminus F$ each connected with 
one edge to a node in $\mathcal{S}$. Select three such nodes, and call them $v_1, v_2$ and $v_3$. Based on Part (ii) of Lemma~\ref{lem-as}, there is a path between any pair of these nodes
in the subgraph induced by $F'\setminus F$. Find the shortest paths between any pair of these nodes in the subgraph. Without loss of generality, assume that among the three
shortest paths, the one between $v_1$ and $v_2$ has the smallest length. Clearly $v_3$ cannot be on this path. Add the nodes $v_1$, $v_2$ and all the nodes on the shortest path between $v_1$ and $v_2$ 
to ${\cal S}$. This results in a LETS sub-structure of ${\cal S}'$ that is larger than ${\cal S}$ and strictly smaller than ${\cal S}'$, which is a contradiction.
Therefore, the number of the edges connecting the nodes in $F'\setminus F$ to the nodes in $\mathcal{S}$ must be either two or one.

For the case where there are two edges connecting $F'\setminus F$ to $\mathcal{S}$, based on Part (iii) of Lemma~\ref{lem-as}, these edges must be incident 
to two distinct nodes in $F' \setminus F$. Call these nodes $v_1$ and $v_2$. There must be no cycles in the subgraph induced by $F'\setminus F$. 
Otherwise, one can find the shortest path between $v_1$ and $v_2$ in the subgraph and by removing all the nodes in $F'\setminus F$ except $v_1$, $v_2$ 
and all the nodes on the shortest path between them, obtain a LETS sub-structure of ${\cal S}'$ that is larger than ${\cal S}$ but strictly smaller than 
${\cal S}'$. Therefore, for this case, the only possible configurations are the expansions of ${\cal S}$ by closed and open paths of length more than 3.

For the case where there is only one edge connecting $F'\setminus F$ to ${\cal S}$, for ${\cal S}'$ to be a LETS structure, there must exist at least one
cycle in the subgraph induced by $F' \setminus F$. With a discussion similar to the case of two connecting edges,
one can prove that there must be exactly one cycle in the subgraph. 
For this case, the only possible configurations are the expansions of ${\cal S}$ by lollipop walks $lo^c_m$ with $m \geq 3$ (length more than four) and $c \geq 3$ (cycles longer than three).
\end{proof} 

Based on Proposition \ref{lem:main}, except the simple cycles, each prime structure of dot-based expansion can be generated by applying {\em path} or {\em lollipop} expansions to its largest LETS sub-structure.
This sub-structure is either a simple cycle or not. If not, then Lemmas~\ref{lem:simple} and~\ref{lem:$dot$a-1}, and Proposition~\ref{lem:main} show that we can obtain the sub-structure by
applying a combination of the three expansion techniques to a simple cycle. 

\begin{theo}
\label{thm1}
All LETS structures of variable-regular Tanner graphs for any variable degree $d_{\mathrm{v}}$, and in any $(a,b)$ class 
(that are not simple cycles), can be generated by applying a combination of depth-one tree (dot),  path and lollipop expansions
to simple cycles. 
\end{theo}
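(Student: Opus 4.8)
The plan is to prove Theorem \ref{thm1} by strong induction on the size $a$ of the LETS structure $\mathcal{S}'$, using the intermediate results already established. The base case is handled by Remark~\ref{rem1}: the smallest LETS structure in a graph with girth $g$ is the simple cycle $s_{g/2}$, which is trivially a simple cycle and needs no expansion. For the inductive step, assume that every LETS structure of size strictly less than $a'=|F'|$ can be generated from a simple cycle by a combination of \emph{dot}, \emph{path}, and \emph{lollipop} expansions, and let $\mathcal{S}'$ be a LETS structure of size $a'$ that is not itself a simple cycle.

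The key case split is on whether $\mathcal{S}'$ is prime with respect to \emph{dot} expansion. First I would handle the non-prime case: if $\mathcal{S}'$ is not \emph{dot}-prime, then by definition it can be obtained by a \emph{dot} expansion from some LETS sub-structure $\mathcal{S}$ with $|F|=|F'|-1<a'$; by the induction hypothesis $\mathcal{S}$ is reachable from a simple cycle via the three expansions, and appending one more \emph{dot} expansion completes this case. Next, the prime case: if $\mathcal{S}'$ is \emph{dot}-prime but not a simple cycle, let $\mathcal{S}=(F,E)$ be its largest LETS sub-structure (which exists by Lemma~\ref{lem:simple}, since $\mathcal{S}'$ is leafless hence contains a simple cycle, and simple cycles are LETSs). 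By Lemma~\ref{lem-as}(i) we have $|F|<|F'|-1$, so in particular $|F|<a'$. By Proposition~\ref{lem:main}, $\mathcal{S}'$ is obtained from $\mathcal{S}$ by a single \emph{path} or \emph{lollipop} expansion. Applying the induction hypothesis to $\mathcal{S}$ (legitimate since $|F|<a'$) to write $\mathcal{S}$ as a sequence of expansions from a simple cycle, and then appending this final \emph{path} or \emph{lollipop} step, expresses $\mathcal{S}'$ in the required form, closing the induction.

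I would then remark that this argument does more than the theorem states: it shows every LETS structure admits a \emph{sequence} of the three expansions starting from a simple cycle, with the property that every intermediate structure in the sequence is again a LETS (since \emph{dot}, \emph{path}, and \emph{lollipop} expansions each produce LETS structures per Propositions~\ref{pro:dot}, \ref{pro:pa}, and~\ref{pro:lo}), which is the statement needed for the later \emph{dpl}-characterization. The only subtlety worth flagging is purely bookkeeping: one must verify that the sub-structure $\mathcal{S}$ furnished by Proposition~\ref{lem:main} indeed satisfies $|F|\geq g/2$ so that the recursion bottoms out at a genuine simple cycle of the graph rather than running past it, but this is immediate because $\mathcal{S}$ is itself a LETS structure of the same variable-regular graph and hence, by Remark~\ref{rem1}, has size at least $g/2$.

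The main obstacle in this proof is not in the inductive scaffolding — which is routine once the lemmas are in place — but rather it has already been discharged in Proposition~\ref{lem:main}: the genuinely hard step is the exhaustive case analysis showing that when $\mathcal{S}$ is the \emph{largest} LETS sub-structure of a \emph{dot}-prime $\mathcal{S}'$, the ``extra'' nodes $F'\setminus F$ together with their connecting edges to $\mathcal{S}$ are forced into exactly one of the four configurations (open path, closed path, or lollipop with $m\geq 3, c\geq 3$). The maximality of $\mathcal{S}$ is what rules out any configuration with three or more connecting edges (one would otherwise splice in a shortest path between two of the attachment points and exceed $\mathcal{S}$), and it likewise forbids spurious cycles among the $F'\setminus F$ nodes in the two-edge case and more than one cycle in the one-edge case. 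Granting Proposition~\ref{lem:main}, Theorem~\ref{thm1} follows by the short induction sketched above.
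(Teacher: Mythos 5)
Your proposal is correct and takes essentially the same route as the paper: the paper's proof is exactly this recursive descent (non-prime structures reached by a \emph{dot} step from a sub-structure one node smaller, \emph{dot}-prime non-cycle structures reached by a \emph{path} or \emph{lollipop} step from their largest LETS sub-structure via Proposition~\ref{lem:main}, iterated until a simple cycle is reached), which you have merely recast as an explicit strong induction on size. The added observations about intermediate sub-structures being LETSs and the recursion bottoming out at size at least $g/2$ are consistent with, and implicit in, the paper's argument.
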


\begin{proof}
The only LETS structures that are out of the reach of {\em dot} expansions are prime structures discussed in Section \ref{sec: dot-based}. 
These prime structures are either simple cycles or not. Based on Proposition~\ref{lem:main}, and the above discussions, the prime structures 
that are not simple cycles can be generated using the three expansion techniques starting from simple cycles. 
\end{proof}

Compared to the dot-based characterization, Theorem~\ref{thm1} provides a new characterization of LETS structures. 
Instead of using one expansion technique ({\em dot}) and some prime structures (simple cycles plus some other prime structures) to characterize the LETS structures, 
the new characterization uses more expansion techniques ({\em dot}, {\em path} and {\em lollipop}) but has fewer prime structures (only simple cycles).
This new characterization, however,
does not provide us with a road-map to a more efficient search algorithm. The reason is that, if we continue to rely on dot-based search algorithm, we still need
to enumerate all the instances of the required prime structures as the input to the search algorithm. Having a characterization of the prime structures 
as expansions of simple cycles, although helpful in such an enumeration, has no effect on the required prime structures with relatively large $a$ and $b$ values,
and the (huge) number of the instances of such structures that need to be enumerated.
    
To use the new expansions in the characterization such that they translate to a more efficient search algorithm,
we need to use them for generating not only the prime structures but also the other LETS structures of interest.  
This will, consequently, reduce the prime structures that need to be enumerated to only a few short simple cycles.
The following example explains this idea and motivates the {\em dpl} characterization of next subsection.  

\begin{ex}
Consider the $(11,1)$ LETS structure ${\cal S}$ of Fig. \ref{fig:example(11,1)}. The prime sub-structure of $\mathcal{S}$ with respect to dot expansion is $s_8$. 
There are, however, different ways to generate $\mathcal{S}$ starting from $s_3$, using a combination of  dot, path and  lollipop expansion techniques.
One approach is explained in Fig.  \ref{fig:DPLexample1}. In this figure, $F_i$ is the set of nodes in $\mathcal{S}_i$, the LETS sub-structure of $\mathcal{S}$
in the $i$-th step of expansion. 
\begin{figure}[] 
\centering
\includegraphics [width=0.55\textwidth]{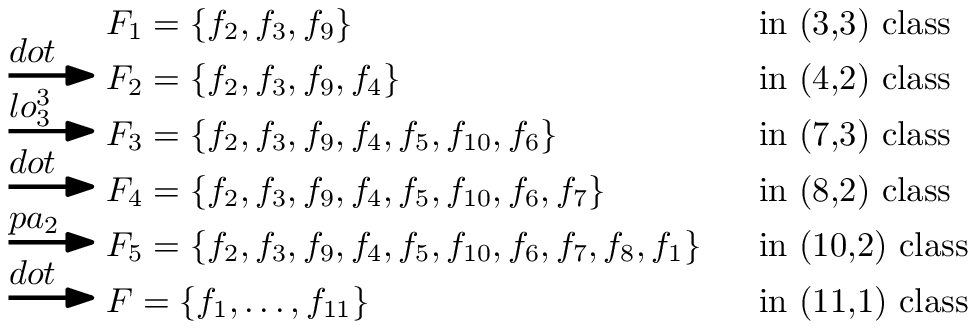}
\caption{Steps of generating the $(11,1)$ LETS structure of Fig.~\ref{fig:example(11,1)} starting from $s_3$. }
\label{fig:DPLexample1}
\end{figure}
A search algorithm based on the sequence of expansions presented in Fig.  \ref{fig:DPLexample1}, will start by enumerating all the instances of $s_3$ in the Tanner graph,
instead of finding the instances of $s_8$, to find all the instances of $\mathcal{S}$. 

As an example, consider the code discussed in Example~\ref{ex15}. This code has 161 and 6,219,941 instances of $s_3$ and $s_8$, respectively. 
In dot-based search algorithm, all the instances of $s_8$ (more than 6 million) need to be enumerated and then {\em dot} expansions need to be applied to each instance of $s_8$ three times successively
to obtain the instances of $\mathcal{S}$. For this code, however, no instance of ${\cal S}$ exists in the Tanner graph. 
On the other hand, following the road-map of Fig.  \ref{fig:DPLexample1}, by applying the {\em dot} expansion to the 161 instances of $s_3$, as the first step, one
obtains no instance of the structure in the $(4,2)$ class, and thus stops the search process for ${\cal S}$ (very quickly). 
 
In general, if all the LETS structures of interest, that have $s_8$ as their prime sub-structure in the {\em dot} characterization, are generated using  
prime structures with smaller size, the instances of $s_8$ are no longer needed to be enumerated in the search algorithm. 
\end{ex}

\subsection{Dpl-based Characterization}
\label{dpl}
Our goal in this subsection is to characterize all the non-isomorphic LETS structures of a variable-regular Tanner graph
with variable degree $d_{\mathrm{v}}$ and girth $g$, within all the $(a,b)$ classes with $a \leq a_{max}$ and $b \leq b_{max}$.
The characterization is based on describing each LETS structure ${\cal S}$ as a hierarchy of embedded LETS structures that starts
from a simple cycle, and expands to ${\cal S}$ in multiple steps, each step involving one of the three expansion techniques,
i.e., {\em dot}, {\em path} or {\em lollipop}. 
The basic idea behind the new characterization, i.e., dpl-based characterization, is to generate as many LETS structures as possible
within the interest range of $a$ and $b$ values by using smaller LETS structures that have $a$ and $b$ values also within the range.
When translated to a search algorithm, this has the benefit that we will only search for instances of the structures that are themselves 
of direct interest to us (have $a$ and $b$ values within the target range). This is unlike the case for the dot-based search algorithm,
where we would search for structures that are not of direct interest (have $a$ or $b$ values outside the target range), 
but happen to be parents of structures of interest in the $dot$ characterization. To achieve the goal of staying within the target 
range of $a$ and $b$ values, and have an exhaustive characterization for all the LETS structures, we need to use {\em path} and {\em lollipop}
expansions in addition to the {\em dot} expansion. For the reasons just explained, one of the important properties of the {\em dpl} characterization
is to generate a LETS structure in each level of expansion. With this constraint, i.e., all the sub-structures in the embedded sequence are LETSs, 
it is easy to see that the proposed characterization is {\em minimal}, in the sense that, none of the expansion steps can be divided into smaller expansions. 
We also prove that the three expansion techniques of $dot$, $path$ and $lollipop$, are the only expansions that are needed in minimal characterizations.

Given a target range of interest $a \leq a_{max}$ and $b \leq b_{max}$, we start from all the simple cycles
within this range. Based on Proposition~\ref{progh}, for a given $a$, a simple cycle of size $a$ has the largest $b$ value of $a(d_{\mathrm{v}}-2)$ among all the 
$(a,b)$ LETS structures.  
We thus initiate the characterization by including simple cycles $s_{g/2}, \ldots, s_{\lfloor b_{max}/(d_\mathrm{v}-2) \rfloor}$.
We then recursively apply {\em dot} expansions to these simple cycles to generate more LETS structures within the range of interest.
(We choose the $dot$ expansion in this step of the algorithm, since it can generate the majority of the LETS structures starting from the simple cycles.)
If at the end of this process, there are still some LETS structures within the range of interest that are not generated, we 
try to generate them using the already generated smaller LETS structures by applying the {\em path} or {\em lollipop} expansions.
Propositions~\ref{pro:pa} and \ref{pro:lo}, are our guide in this process. After the generation of any new LETS structure in this process,
we also recursively apply {\em dot} expansions to it to (possibly) generate more new LETS structures in the range of interest.
If there are still some structures that are not generated, we will increase the range of $b$ values to $b_{max}+1$ to 
include some new LETS structures that can generate the missing LETS structures through expansions. This process of
including LETS structures with larger $b$ values will continue until all the LETS structures of all the $(a,b)$ classes
within the range  $a \leq a_{max}$ and $b \leq b_{max}$, are generated. 

When the process of generating LETS structures of interest, as described above, is completed, for each LETS structure ${\cal S}$
in the range, we have the parent simple cycle $s_j$, for some $j \geq g/2$, and the sequence of expansions that are applied to $s_j$ to generate ${\cal S}$.
We however, recognize that this minimal characterization of ${\cal S}$ may not be unique, in the sense that, there may be other combinations of
parent and expansion sequences that result in the generation of ${\cal S}$. This is explained in the following example.
\begin{ex}
Fig. \ref{fig:minimal_example(7,3)} shows a $(7,3)$ LETS structure $\mathcal{S}=(\{f_1,f_2,\dots,f_{7}\}, E)$ of a variable-regular graph with $d_\mathrm{v}=3$ and $g=6$. 
Two minimal characterizations of ${\cal S}$ starting from the simple cycle $s_4$ in the $(4,4)$ class ($\mathcal{S}'=(\{f_1,f_2,\dots,f_{4}\}, E')$) are presented in Fig. \ref{fig:DPLexample2}.
\begin{figure}[] 
\centering
\includegraphics [width=0.22\textwidth]{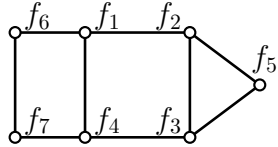}
\caption{ A LETS structure in the $(7,3)$ class of a variable-regular graph with $d_\mathrm{v}=3$ and $g=6$.}
\label{fig:minimal_example(7,3)}
\end{figure}
\begin{figure}[] 
\centering
\includegraphics [width=0.55\textwidth]{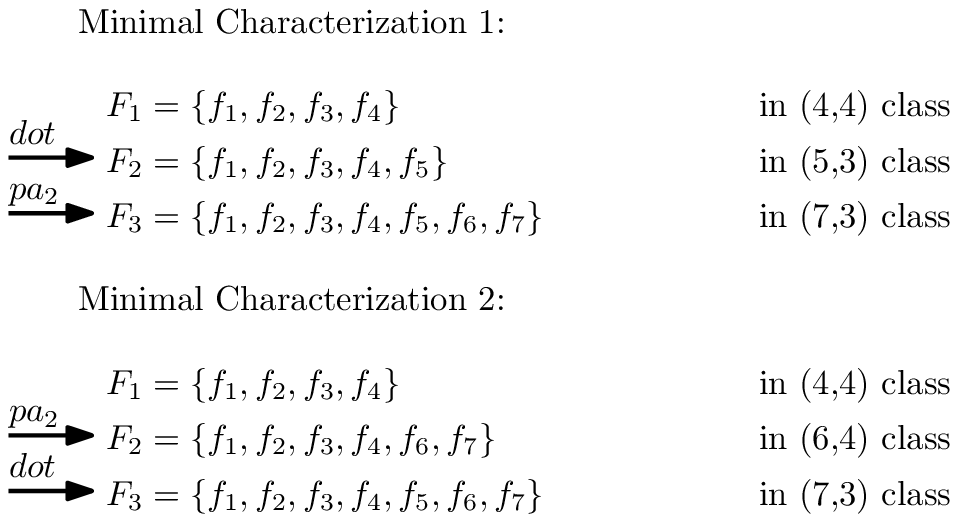}
\caption{Two minimal characterizations of the $(7,3)$ LETS structure of Fig.~\ref{fig:minimal_example(7,3)}, starting from $s_4$. }
\label{fig:DPLexample2}
\end{figure}
\label{exfy}
\end{ex}

To present the $dpl$ characterization
in a compact form, instead of providing the above sequence of parent/expansions for each LETS structure, for each $(a,b)$ class of interest,
we only provide the dpl-prime structures (simple cycles) that are required to obtain the LETS structures in that class, plus the
expansions that need to be applied to the structures within that class to obtain larger LETS structures within the range of interest. 
This presentation of characterization is particularly helpful for the dpl-based search algorithm. 

The pseudo-code of the {\em dpl} characterization algorithm is given in Algorithm \ref{alg3}.
In this algorithm, the table $\mathcal{EX}$ is used to store the sets of required expansion techniques in different classes. 
The notation $\mathcal{EX}_{(a,b)}$ is used to denote the expansion sets for the $(a,b)$ class. For example, $\mathcal{EX}_{(a,b)} = \{dot, pa_2, lo^3_3\}$ 
means that $dot, pa_2$ and $lo^3_3$ expansions will need to be applied to all the non-isomorphic LETS structures in the $(a,b)$ class.

\begin{algorithm}
\caption{{\bf (Dpl-based Characterization Algorithm)} Finds all the non-isomorphic LETS structures in $(a,b)$ classes with $a \leq a_{max}$ and $b \leq b_{max}$, for 
variable-regular Tanner graphs with variable degree $d_{\mathrm{v}}$ and girth $g$. The output parameter $K$ is the size of the largest simple cycle required in the characterization.
The LETS structures generated from the simple cycle $s_k$ are stored in $\mathcal{L}_k$, for $k = g/2, \ldots, K$, and the set of expansions 
for different classes are stored in $\mathcal{EX}$.}
\label{alg3}
\begin{algorithmic} [1] 
\State \textbf{Inputs:} $a_{max}$, $b_{max}$, $d_\mathrm{v}$ and $g$.
\State  \textbf{Initializations:}  $\mathcal{L} \gets \emptyset$,  $a=g/2$, $b'_{max}=b_{max}$, $\mathcal{EX}_{(a,b)} \gets $dot$,~\forall a \in \{g/2,\dots,a_{max}-1\},~\forall b \in \{2,\dots, b'_{max}\}$.
\State  $K=\lfloor b_{max}/(d_\mathrm{v}-2) \rfloor$.
\For {$k = g/2, \dots, K$} \label{a3}
\State  $\mathcal{L}_k$= \textbf{DotExpansion}$(s_k, \mathcal{L}, a_{max})$. \label{alg:first dot}
\State  $\mathcal{L}= \mathcal{L}\cup \mathcal{L}_k$.
\EndFor \label{a5}
\While{$a \leq a_{max}$} \label{rout:shorwhile}
\parIf {all the LETS structures in classes of size $a$  are not found (check by comparing the structures of size $a$ in $\mathcal{L}$ with those generated by {\em geng})}
\parFor {any incomplete $(a,b)$ class, where $1 \leq b \leq b_{max}$}
\For {$k = g/2, \dots, K$}
\State  $(\mathcal{L}_{tem},\mathcal{EX}^{tem})$=\textbf{PaLoExpan}$(a,b, \mathcal{L}_k, \mathcal{L})$. \label{alg:first palo}
\State  $\mathcal{L}'_{tem}$= \textbf{DotExpansion}$(\mathcal{L}_{tem},\mathcal{L},a_{max})$. \label{alg:second dot}
\State  $\mathcal{L}= \mathcal{L} \cup \mathcal{L}'_{tem}$, $\mathcal{L}_k= \mathcal{L}_k \cup \mathcal{L}'_{tem}$.
\State  $\mathcal{EX}= \mathcal{EX} \cup \mathcal{EX}^{tem}$.
\EndFor
\EndparFor
\EndparIf
\State  $a=a+1$.\label{a'=a'+1}
\EndWhile \label{rout:endwhile}
\algstore{bkbreak}
\end{algorithmic}
 \end{algorithm}
\begin{algorithm}[]
\begin{algorithmic}[1]
\algrestore{bkbreak}
\While {all the LETS structures in all the classes of interest are not found (check by comparing all the structures in $\mathcal{L}$ with those generated by {\em geng})}\label{rout:shorwhile2}
\State  $b'_{max}=b'_{max}+1$, $a=g/2$.
\While {$a < a_{max}$}
\State  $\mathcal{EX}_{(a,b'_{max})} \gets dot$.
\If{$(a, b'_{max})$ is the class of a simple cycle}
\State  $K=K+1$.
\State  $\mathcal{L}_K$= \textbf{DotExpansion}$(s_K, \mathcal{L}, a_{max})$. 
\State  $\mathcal{L}= \mathcal{L}\cup \mathcal{L}_K$.
\If{$\mathcal{L}_K = \emptyset$}
\State  $\mathcal{EX}_{(a,b'_{max})}= \mathcal{EX}_{(a,b'_{max})} \setminus dot$.
\EndIf
\Else
\For {$k = g/2, \dots, K$}
\State  $(\mathcal{L}_{tem},\mathcal{EX}^{tem})$=\textbf{PaLoExpan}$(a,b'_{max}, \mathcal{L}_k, \mathcal{L})$. \label{rout:forremexam}
\State  $\mathcal{L}'_{tem}$= \textbf{DotExpansion}$(\mathcal{L}_{tem},\mathcal{L},a_{max})$.
\State  $\mathcal{L}= \mathcal{L} \cup \mathcal{L}'_{tem}$, $\mathcal{L}_k= \mathcal{L}_k \cup \mathcal{L}'_{tem}$.
\State  $\mathcal{EX}= \mathcal{EX} \cup \mathcal{EX}^{tem}$.
\EndFor
\EndIf
\State  $a=a+1$.
\EndWhile
\EndWhile\label{rout:endwhile2}
\State  \textbf{Outputs:}  $K$, $\{\mathcal{L}_{g/2}, \dots, \mathcal{L}_{K}\}$, $\mathcal{EX}$.
\end{algorithmic}
 \end{algorithm}
The algorithm presented in Routine \ref{rout:PaLoExp} is responsible for generating LETS structures that are out of the reach of {\em dot} expansions,
by using {\em path} and {\em lollipop} expansions.
\begin{routine}
 \caption{{\bf (PaLoExpansion)} Finding new LETS structures in the $(a,b)$ class which are out of the reach of {\em dot} expansion by applying {\em path} and {\em lollipop} expansions 
to LETS structures already found in ${\cal L}_k$, excluding the already found structures ${\cal L}$, 
and storing all the new structures in $\mathcal{L}_{tem}$. The expansions corresponding to the new structures are stored in $\mathcal{EX}^{tem}$. $(\mathcal{L}_{tem},\mathcal{EX}^{tem})$= PaLoExpansion $(a,b, \mathcal{L}_k, \mathcal{L})$}
\label{rout:PaLoExp}
 \begin{algorithmic} [1] 
\State \textbf{Initialization:}  $\mathcal{EX}^{tem} \gets \emptyset$, $\mathcal{L}_{tem} \gets \emptyset$.
\State $a'=a-2$.
\While {$a' \geq g/2$}
\State $m=a-a'$. 
\State $\mathcal{L}^p_{tem} \gets \emptyset$, $\mathcal{L}_{c,tem} \gets \emptyset, 3 \leq c \leq m$.
\parFor { any LETS structure $\mathcal{S}$ in the  $(a', b'= b+2-m(d_\mathrm{v}-2))$ class of $\mathcal{L}_k$}
\State $\mathcal{L}^p$ = \textbf{PathExpansion}$(\mathcal{S},m)$.
\State $\mathcal{L}^p_{tem}= \mathcal{L}^p_{tem} \cup \mathcal{L}^p$.
\For {any possible $c$ }
\State $\mathcal{L}^l_c$ = \textbf{LollipopExpansion}$(\mathcal{S},m,c)$.
\State $\mathcal{L}_{c,tem}= \mathcal{L}_{c,tem} \cup \mathcal{L}^l_c$.
\EndFor
\EndparFor
\If {$\{\mathcal{L}^p_{tem} \setminus \mathcal{L}\} \neq \emptyset$ and $\{\mathcal{L}^p_{tem} \setminus \mathcal{L}_{tem}\} \neq \emptyset$}
\State $\mathcal{L}_{tem}=\mathcal{L}_{tem} \cup \{\mathcal{L}^p_{tem} \setminus \mathcal{L}\}$.
\State $\mathcal{EX}^{tem}_{(a', b')} \gets  \mathcal{EX}^{tem}_{(a', b')}  \cup pa_m $.
\EndIf
\For {any possible $c$ }
\If {$\{\mathcal{L}_{c,tem} \setminus \mathcal{L}\} \neq \emptyset$ and $\{\mathcal{L}_{c,tem} \setminus \mathcal{L}_{tem}\} \neq \emptyset$}
\State $\mathcal{L}_{tem}=\mathcal{L}_{tem} \cup \{\mathcal{L}_{c,tem} \setminus \mathcal{L}\}$.
\State $\mathcal{EX}^{tem}_{(a', b')} \gets  \mathcal{EX}^{tem}_{(a', b')}  \cup lo^c_m$.
\EndIf
\EndFor
\State $a'=a'-1$.
\EndWhile
\State \textbf{Outputs:} $\mathcal{L}_{tem},\mathcal{EX}^{tem}$.
\end{algorithmic}
 \end{routine}

The following lemma proves that minimal characterizations are based only on the three expansions $dot$, $path$ and $lollipop$.

\begin{lem}
Consider a characterization of a LETS structure ${\cal S}$, based on a hierarchy of embedded LETS structures that starts from a simple cycle
and generates ${\cal S}$ through a series of graph expansions. Assume that the characterization is minimal, in the sense that,
none of the expansions can be broken into a sequence of smaller expansions such that the resulting sub-structures are still LETSs. 
Then, any graph expansion in the series, corresponding to a minimal characterization, is either a $dot$, a $path$ or a $lollipop$
expansion.
\label{lemcu}
\end{lem}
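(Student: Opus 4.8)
The plan is to examine a single expansion step $\mathcal{S}=(F,E)\to\mathcal{S}'=(F',E')$ of an arbitrary minimal characterization and to show that it must be a \emph{dot}, a \emph{path} or a \emph{lollipop} expansion. Here $\mathcal{S}$ is a LETS sub-structure of $\mathcal{S}'$ in the sense used in Lemmas~\ref{lem:simple}--\ref{lem-as} (so $\mathcal{S}=\mathcal{S}'[F]$ is the induced sub-structure on $F$, and $F\subsetneq F'$), and minimality of the step says precisely that there is no LETS sub-structure $\mathcal{T}$ of $\mathcal{S}'$ with $F\subsetneq F_{\mathcal{T}}\subsetneq F'$. I would first dispose of the case $|F'\setminus F|=1$: since $\mathcal{S}'$ is leafless, the single added node has degree at least $2$, all of its edges land in $F$, and Lemma~\ref{lem:$dot$a-1} already identifies the step as a $dot_m$ expansion.

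For $|F'\setminus F|\geq 2$, the strategy is to re-derive, from the ``no proper intermediate LETS'' hypothesis, the structural facts supplied by parts (ii) and (iii) of Lemma~\ref{lem-as} in the prime-structure setting, and then to re-run the argument of Proposition~\ref{lem:main}. First, $\mathcal{S}'[F'\setminus F]$ must be connected: if $C$ were a proper component of it, then $\mathcal{S}'[F'\setminus C]$ stays leafless (each retained node keeps all its edges into $F$ and into the remaining components) and connected (each retained component of $F'\setminus F$ still attaches to the connected set $F$), so it is a LETS sub-structure strictly between $\mathcal{S}$ and $\mathcal{S}'$, contradicting minimality. Second, no node $v\in F'\setminus F$ can have two edges into $F$, for otherwise $\mathcal{S}'[F\cup\{v\}]$ is a connected leafless graph, hence a LETS, with $F\subsetneq F\cup\{v\}\subsetneq F'$ (the last inclusion uses $|F'\setminus F|\geq 2$), again contradicting minimality.

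With these facts in hand, the proof of Proposition~\ref{lem:main} applies with only cosmetic changes, every appeal there to ``$\mathcal{S}$ is the largest LETS sub-structure'' being replaced by ``minimality'': each auxiliary LETS it constructs (the shortest path between two $F$-attaching nodes of $F'\setminus F$ together with $\mathcal{S}$; the retained part above; $F\cup\{v\}$) both properly contains $F$ and is properly contained in $F'$, and so contradicts minimality. Splitting on $|F'\setminus F|$: if $|F'\setminus F|=2$ the two new nodes are adjacent and each attaches to $F$ by exactly one edge, giving an open or closed $pa_2$; if $|F'\setminus F|\geq 3$ one first bounds the number of edges joining $F'\setminus F$ to $F$ by two, and then $\mathcal{S}'[F'\setminus F]$ is either a tree with at most two leaves (hence a path, i.e.\ a $pa_m^o$ or $pa_m^c$ expansion) when there are two such edges, or a connected unicyclic graph with at most one leaf (hence a cycle with one pendant path, i.e.\ an $lo^c_m$ expansion) when there is one. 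This exhausts the cases. The one point I would be most careful about is the bookkeeping of connectedness when nodes are excised: since a LETS sub-structure is required to be connected, each excision must be paired with the observation that the retained set, or its component containing $F$, is still connected and leafless; the rest is the routine enumeration of small connected graphs already carried out in Proposition~\ref{lem:main}.
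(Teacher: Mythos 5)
Your proposal is correct and follows essentially the same route as the paper: after disposing of the one-node case via the $dot$ expansion, you re-derive the structural facts of Lemma~\ref{lem-as} (connectedness of the added part and the one-edge-per-added-node property) with the minimality hypothesis playing the role of ``largest LETS sub-structure,'' and then re-run the case analysis of Proposition~\ref{lem:main} on $|F'\setminus F|$ to land on open/closed paths and lollipop walks. The paper's own proof does exactly this, merely citing ``discussions similar to Cases (1) and (2) of Proposition~\ref{lem:main}''; your write-up supplies the details it leaves implicit.
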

\begin{proof}
Consider structures $\mathcal{S}'=(F',E')$ and $\mathcal{S}''=(F'',E'')$ as two successive embedded LETS structures in a minimal characterization of  ${\cal S}$ ($\mathcal{S}''$ 
is generated by applying a graph expansion to $\mathcal{S}'$).
If $|F'|=|F''|-1$, then it is easy to see that $\mathcal{S}''$ is obtained from $\mathcal{S}'$ by a $dot$ expansion. For the case of $|F'|=|F''|-2$,
using discussions similar to those in the proof of Case (1) of Proposition~\ref{lem:main}, one can prove that the only possible expansions to generate 
$\mathcal{S}''$ from $\mathcal{S}'$ are the closed and open paths of length $3$. Finally, for the case of $|F'| < |F''|-2$, following similar steps as those 
 in the proof of Case (2) of Proposition~\ref{lem:main}, we can show that the only possible expansions to generate 
$\mathcal{S}''$ from $\mathcal{S}'$ are closed and open paths of length more than $3$, or lollipop walks $lo^c_m$ with $m \geq 3$ (length more than four) and $c \geq 3$ (cycles longer than three).
\end{proof}

Algorithm~\ref{alg3} obtains minimal characterizations for LETS structures. The following theorem shows that the algorithm 
performs this task efficiently and effectively, by making sure that all the intermediate LETS structures generated in each characterization are within
the target range of $a$ and $b$ values, if possible.

\begin{theo}
For given values of $d_{\mathrm{v}}, g, a_{max}$ and $b_{max}$, Algorithm~\ref{alg3} provides an optimal minimal characterization of
all non-isomorphic $(a,b)$ LETS structures of variable-regular graphs with variable degree $d_{\mathrm{v}}$ and girth $g$, with
$a \leq a_{max}$ and $b \leq b_{max}$, in the sense that, to generate all such structures starting from simple cycles and 
using  dot,  path, and  lollipop expansions, the maximum length $K$ of the required simple cycles, and
the maximum range $b'_{max}$ of the $b$ values of the LETS structures that are generated in the characterization process, are
minimized by Algorithm~\ref{alg3}.
\label{thm-ue}
\end{theo}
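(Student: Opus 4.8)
Proof proposal. The plan is to separate the statement into three parts and dispatch them in order: (a) the characterization produced by Algorithm~\ref{alg3} is minimal; (b) its final value of $K$ is the smallest possible; (c) its final value of $b'_{max}$ is the smallest possible. Part (a) is immediate: every generation step carried out by the algorithm (inside \textbf{DotExpansion}, \textbf{PathExpansion}, \textbf{LollipopExpansion}) is a single \emph{dot}, \emph{path} or \emph{lollipop} expansion applied to an already-generated LETS structure, and by Lemma~\ref{lemcu} (and its proof) such a step can never be broken into a sequence of smaller expansions whose intermediate structures are all LETSs. Exhaustiveness — that every $(a,b)$ LETS structure with $a\le a_{max},\,b\le b_{max}$ is eventually generated — follows from Theorem~\ref{thm1} together with the termination test of the outer while-loop, which keeps raising $b'_{max}$ until the generated list matches the output of \emph{geng}. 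So only the two optimality statements remain.

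For the lower bound on $K$, I would first show that a simple cycle $s_k$ is prime with respect to all three expansions: deleting from $s_k$ the node(s) that a \emph{dot}, an open-\emph{path}, or a \emph{lollipop} expansion would have added leaves, respectively, an open path, a shorter open path, or a graph with an isolated node, none of which is leafless, hence none is a LETS; and since $s_k$ has no node of degree exceeding $2$ and contains no cycle other than itself, a closed-\emph{path} step or a \emph{lollipop} step is impossible as well. Consequently, any exhaustive characterization that starts from simple cycles must take every $s_k$ that is itself a LETS in the target range as an initial structure; the largest such cycle is $s_{\lfloor b_{max}/(d_{\mathrm{v}}-2)\rfloor}$ (by Proposition~\ref{progh} one may assume $b_{max}\le a_{max}(d_{\mathrm{v}}-2)$), which is exactly the value with which Algorithm~\ref{alg3} initializes $K$. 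Every later increment of $K$ occurs only after $b'_{max}$ has been raised because some target class is still uncovered; as argued next each such raise is forced, and since $K$ is incremented precisely when the current $b'_{max}$ equals $a(d_{\mathrm{v}}-2)$ for the size $a$ of a new simple cycle, the terminal value of $K$ equals $\lfloor B^{\star}/(d_{\mathrm{v}}-2)\rfloor$, where $B^{\star}$ is the terminal value of $b'_{max}$. Thus the minimality of $K$ reduces to that of $B^{\star}$.

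For the minimality of $b'_{max}$ I would use a monotonicity argument. For an integer $B\ge b_{max}$, let $\mathcal{C}(B)$ denote the closure of $\{s_{g/2},\dots,s_{\lfloor B/(d_{\mathrm{v}}-2)\rfloor}\}$ under \emph{dot}, \emph{path} and \emph{lollipop} expansions, restricted so that every intermediate LETS structure has its $b$ value at most $B$. Being a closure, $\mathcal{C}(B)$ is independent of the order in which expansions are applied, and clearly $\mathcal{C}(B)\subseteq\mathcal{C}(B+1)$. The heart of the argument is to check that when Algorithm~\ref{alg3} is at the stage with current bound $b'_{max}=B$, the set of structures it has generated of size at most $a_{max}$ is exactly the members of $\mathcal{C}(B)$ of size at most $a_{max}$: every output structure lies in $\mathcal{C}(B)$ by construction, and conversely the interleaving of \textbf{PaLoExpan} with the \textbf{DotExpansion} call that immediately follows it, together with the increasing order of the size variable inside the while-loops, drives the generated set up to the full closure. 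Granting this, let $B^{\star}$ be the value at which the algorithm terminates: then $\mathcal{C}(B^{\star})$ contains all target structures while $\mathcal{C}(B^{\star}-1)$ does not — this is precisely why the last increment of $b'_{max}$ was performed. If some competing characterization from simple cycles, using only the three expansions, were exhaustive for the target range with all intermediates of $b$ value at most $B'$ for some $B'<B^{\star}$, then every target structure would lie in $\mathcal{C}(B')\subseteq\mathcal{C}(B^{\star}-1)$, a contradiction. Hence $B^{\star}$, and with it $K$, is minimal.

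I expect the main obstacle to be the middle step of the last paragraph: showing that the specific control flow of Algorithm~\ref{alg3} (\emph{dot} expansions from the simple cycles first, then repeated rounds of \emph{path}/\emph{lollipop} immediately followed by \emph{dot}, with sizes processed in increasing order) actually reaches the \emph{entire} closure $\mathcal{C}(B)$ for the current $B$, rather than some proper subset; only then is $\mathcal{C}(B)$ legitimately an order-independent object on which the clean contradiction argument can be run. I would establish this by induction on the structure size $a$: assuming every member of $\mathcal{C}(B)$ of size $<a$ has already been generated, Proposition~\ref{lem:main} and the case analysis in Lemma~\ref{lemcu} show that each size-$a$ member of $\mathcal{C}(B)$ is a single \emph{dot}, \emph{path} or \emph{lollipop} expansion of a strictly smaller, already-generated LETS structure whose $b$ value is at most $B$ — which is exactly what the round of expansions at size $a$ produces. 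The one subtlety needing care is that a \emph{path} or \emph{lollipop} step may create a structure whose $b$ value falls back inside $[2,b_{max}]$, so that it must itself be \emph{dot}-expanded; this is handled by the \textbf{DotExpansion} call placed after every \textbf{PaLoExpan} call.
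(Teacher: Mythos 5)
Your proposal is correct, and its logical core coincides with the paper's own argument: the paper likewise proves minimality of $b'_{max}$ by observing that a failure to generate some target structure with all intermediate sub-structures satisfying $b \leq B$ means, via Lemma~\ref{lemcu} and Propositions~\ref{pro:dot}, \ref{pro:pa} and \ref{pro:lo}, that no minimal characterization of that structure within bound $B$ exists, so each unit increase of $b'_{max}$ is forced and the terminal value is minimal. The difference is one of completeness rather than of route. The paper's proof is three sentences long and silently assumes exactly the point you single out as the main obstacle, namely that the algorithm's particular interleaving of \textbf{DotExpansion} and \textbf{PaLoExpan}, with sizes processed in increasing order, generates \emph{everything} reachable within the current bound $B$; your order-independent closure $\mathcal{C}(B)$ and the induction on structure size supply that missing justification and make the contradiction argument clean. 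You also address the minimality of $K$ explicitly, via the observation that a simple cycle is prime with respect to all three expansions and must therefore be an initial structure, whereas the paper's proof says nothing about $K$ at all. The one loose end on your side is the final reduction of $K$-minimality to $B^{\star}$-minimality: a competing characterization could in principle reach $b$ values as large as $B^{\star}$ only through non-cycle intermediates while starting from shorter simple cycles, so the claim that every competitor is forced up to $K=\lfloor B^{\star}/(d_{\mathrm{v}}-2)\rfloor$ needs its own argument (note also the paper's own remark that a newly admitted $s_K$ generating nothing new is dropped from the table). Even so, your treatment of $K$ is strictly more than the paper offers.
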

\begin{proof}
Assume that Algorithm~\ref{alg3} fails to characterize all the non-isomorphic $(a,b)$ LETS structures with $a \leq a_{max}$ and $b \leq b_{max}$, in Lines \ref{a3}--\ref{rout:endwhile} of the algorithm. 
Consider one such structure ${\cal S}$. The failure of Algorithm~\ref{alg3} to characterize ${\cal S}$ implies, 
based on Lemma~\ref{lemcu} and Propositions ~\ref{pro:dot}, \ref{pro:pa} and \ref{pro:lo}, that there is no minimal characterization of ${\cal S}$
that starts from a simple cycle in the range of interest with all the sub-structures also in the range. 
To obtain a minimal characterization for ${\cal S}$, one has thus no choice other than increasing the range of $b$ values beyond $b_{max}$.
In  Lines \ref{rout:shorwhile2}-\ref{rout:endwhile2}, Algorithm~\ref{alg3} performs this task step by step, increasing the range of $b$ values only by one, at each step,
ensuring that the range increase is the minimum required to obtain a minimal characterization for all the LETS structures in the range.  
\end{proof}

\subsection{Dpl-based Characterization Tables}
\label{dpl-tables}
Given the values of $d_\mathrm{v}$ and $g$, and a target range $a \leq a_{max}$ and $b \leq b_{max}$, Algorithm~\ref{alg3}
can be used to characterize all the LETS structures of variable-regular Tanner graphs in $(a,b)$ classes of interest. For a given graph (code),
the target values $a_{max}$ and $b_{max}$, however, need to be selected such that the classes that have  the main contribution in
the error floor of the code, the so-called {\em dominant} classes, are included in the range. The proper selection of the range 
would depend on the code's rate and block length. In this section, a number of characterization tables for different values of 
$d_\mathrm{v}$, $g$, $a_{max}$ and $b_{max}$ are provided. The information regarding these tables are summarized in Table~\ref{tab:range}.
Table~\ref{tab:range} shows that there are in total 11 characterization tables (Tables~\ref{tab:3,6c1}-\ref{tab:5,6}), covering variable node degrees 3, 4, 5, and girths 
6 and 8. The fifth and sixth rows of Table~\ref{tab:range} show $a_{max}$ and $b_{max}$ for each characterization table, respectively.
Row seven of the table contains the maximum $b$ value of the LETS structures that need to be generated 
for the exhaustive coverage of the classes of interest. This corresponds to $b'_{max}$ in Algorithm~\ref{alg3}.
This row should be compared to  the last row of the table that shows the similar parameter when {\em dot} characterization is used.
The comparison of the two rows demonstrates the advantage of the dpl-based  approach compared to the dot-based approach
in covering the same range of LETS classes but by generating LETS structures with smaller $b$ values.
\begin{ex}
In dot-based (LSS-based) approach, to find all the LETS structures of  variable regular graphs 
with $d_\mathrm{v}=3$ and $g=6$, in the interest range of $a \leq 10$ and $b \leq 3$, all the LETS structures within the range
$a \leq 8,~b \leq 8$, need to be generated. However, in the proposed approach, only the LETS structures with $ b \leq 5$ need to be generated 
to guarantee the generation of all the LETS structures in the interest range. This corresponds to Characterization Table~\ref{tab:3,6c3}.
\label{ex-iu}
\end{ex}
\vspace{-20pt}
Rows eight and nine of Table~\ref{tab:range} contain the simple cycles and the expansion techniques that are used in each characterization table, respectively.
\begin{ex}
Consider the case discussed in Example~\ref{ex-iu}. Table~\ref{tab:range} shows that for this case, to characterize all the LETS structures in the range of interest,
simple cycles of length 3, 4 and 5, and expansions $dot, pa_2, pa_3, lo^3_3, lo^3_4$ and $lo^4_4$ are needed.
\end{ex}  
We note that while the study of the relative harmfulness of different LETS structures is beyond the scope of this work, our experimental results,
as well as those reported in the literature, show that the dominant LETS structures/classes depend not only on $d_\mathrm{v}$ and $g$, 
but also on the rate $R$ and block length $n$ of the code. As a result, the values of interest for $a_{max}$ and $b_{max}$ would 
also depend on all these parameters ($d_\mathrm{v}, g, R$ and $n$). In the fourth row of Table~\ref{tab:range}, we have indicated the ranges of rates and block lengths for which a characterization table
can be useful, i.e., for a code within the specified range of rate and block length, the table is likely to cover the dominant classes of LETSs.
Since the characterization tables are used as guidelines for dpl-based search algorithm to find instances of LETS structures in a given code (graph),
one is interested in selecting the smallest values for $a_{max}$ and $b_{max}$ to minimize the search complexity while ensuring that the dominant LETS structures are 
covered (found) in the search process. While this has been the main motivation for providing Characterization Tables~\ref{tab:3,6c1}-\ref{tab:5,6},
we make no claim that these tables are necessarily optimal in the sense just explained, nor we assert that such universally optimal tables
even exist for codes with the same $d_{\mathrm{v}}, g$, rate and block length. These tables should thus be only treated as suggestions rather than
definite guidelines.   

\begin{ex}
Consider the Characterization Table~\ref{tab:3,6c3}. The information in Table~\ref{tab:range} indicates that Table~\ref{tab:3,6c3} is likely to contain the dominant LETS classes 
of medium-rate codes ($0.3 < R < 0.7$) with short to medium block length ($n < 8000$), with $d_{\mathrm{v}} = 3$, and $g=6$.
\end{ex} 

\begin{table*}[]
\centering
\begin{threeparttable}
\setlength{\tabcolsep}{2 pt}
\caption{A List of (and Some Information about) Characterization Tables Provided in This Paper}
\label{tab:range}
\begin{tabular}{||c|c|c| c| c| c| c| c|c| c| c| c|| }
\cline{1-12}
Table&\ref{tab:3,6c1}&\ref{tab:3,6c2}&\ref{tab:3,6c3}&\ref{tab:3,6c4}&\ref{tab:3,8c1}&\ref{tab:3,8c2}&\ref{tab:4,6c1}&\ref{tab:4,6c2}&\ref{tab:4,6c3}&\ref{tab:4,8}&\ref{tab:5,6}\\
\cline{1-12}
$d_\mathrm{v}$&3&3&3&3&3&3&4&4&4&4&5\\
\cline{1-12}
$g$&6&6&6&6&8&8&6&6&6&8&6\\
\cline{1-12}
Rate\tnote{\textdagger}~-Length\tnote{\textdaggerdbl }~~&h-s&h-s,h-m&m-s,m-m&m-m,m-l&h-l,m-s&m-m,m-l&h-s&h-m,h-l&m-m,m-l&m-m,m-l&h-l,m-s\\
&h-m&h-l,m-s&&l-s,l-m&&l-s&&m-s&l-s&l-s&m-m\\
\cline{1-12}
$a_{max}$&6&8&10&12&10&12&6&8&10&11&9\\
\cline{1-12}
$b_{max}$&3&3&3&5&4&4&4&6&10&10&11\\
\cline{1-12}
$b'_{max} $(dpl)&4&4&5&6&5&6&6&8&12&12&13\\
\cline{1-12}
Cycle Primes&$s_3,s_4$&$s_3,s_4$&$s_3,s_4,s_5$&
$s_3,s_4,s_5,s_6$&$s_4,s_5$&$s_4,s_5,s_6$&
$s_3$&$s_3,s_4$&$s_3,s_4,s_5$&$s_3,s_4$&$s_3,s_4$\\
\cline{1-12}
Expansion&&$dot$&$dot,pa_2$&
$dot,pa_2,pa_3$&$dot,pa_2$&$dot,pa_2$&
&$dot$&$dot,pa_2,pa_3$&
$dot,pa_2$&$dot$\\
Techniques&$dot$&$pa_2$&
$pa_3$&
$pa_4,pa_5$&
$pa_3$&$pa_3,pa_4$&$dot$&
$pa_2$&$pa_4,pa_5,lo_3^3$&
$pa_3$&$pa_2$\\
&$pa_2$&$pa_3$&$lo_3^3,lo_4^3$&$lo_3^3,lo_4^3,lo_4^4,lo_5^3$&
$pa_4$&$lo_4^4$&&$pa_3$&
$lo_4^3,lo_4^4,lo_5^3$&$pa_4$&$pa_3$\\
&&$lo^3_3$&$lo_4^4$&$lo_6^3,lo_6^4,lo_6^5,lo_6^6$&$lo_4^4$&$lo_5^4,lo_5^5$&&$lo_3^3$&
$lo_5^4,lo_5^5$&$lo_4^4$&$lo_3^3$\\
\cline{1-12}
$b'_{max}$($dot$)&5&6&8&9&8&9&6&12&16&16&18\\
\cline{1-12}
\end{tabular}
\begin{tablenotes}
\item[\textdagger] For Rate: l = low $(R < 0.3)$, m = medium $(0.3 < R <0.7)$, h = high $(R >0.7)$.
\item[\textdaggerdbl ] For Length: s = short $( n < 2000)$, m = medium $(2000 < n < 8000)$, l = large $(n > 8000)$.
  \end{tablenotes}
\end{threeparttable}
\end{table*}

\begin{table}[]
\centering
\caption{Characterization (cycle prime sub-structures and expansion techniques) of Non-Isomorphic LETS Structures of $(a,b)$ Classes for Variable-Regular Graphs with $d_\mathrm{v}=3$ and $g=6$  for $a \leq a_{max}=6$ and $b \leq b_{max}=3$ }
\label{tab:3,6c1}
\begin{tabular}{||c|c|c| c| c|| }
\cline{1-5}
&$a = 3$&$a = 4$&$a = 5$&$a = 6$\\
\cline{1-5}
b = 0&-&$  \begin{array}{@{}c@{}}s_3(1)\\ \hdashline -\end{array} $&-&$  \begin{array}{@{}c@{}}s_4(2)\\ \hdashline -\end{array} $\\
\cline{1-5}
b = 1&-&-&$  \begin{array}{@{}c@{}}s_3(1)\\ \hdashline -\end{array} $&-\\
\cline{1-5}
b = 2&-&$  \begin{array}{@{}c@{}}s_3(1)\\ \hdashline dot,pa_2\end{array} $&-&$  \begin{array}{@{}c@{}}s_3(1),s_4(3)\\ \hdashline -\end{array} $\\
\cline{1-5}
b = 3&$\begin{array}{@{}c@{}}\boldsymbol{s_3(1)} \\ \hdashline dot \end{array} $&-&$  \begin{array}{@{}c@{}}s_4(2)\\ \hdashline dot\end{array} $& -\\
\cline{1-5}
\hline
\hline
b = 4&-&$  \begin{array}{@{}c@{}}\boldsymbol{s_4(1)}\\ \hdashline dot\end{array} $&-&$  \begin{array}{@{}c@{}}\\ \hdashline -\end{array} $\\
\cline{1-5}
\end{tabular}
\end{table}

 \begin{table}[]
\centering
\caption{Characterization (cycle prime sub-structures and expansion techniques) of Non-Isomorphic LETS Structures of $(a,b)$ Classes for Variable-Regular Graphs with $d_\mathrm{v}=3$ and $g=6$  for $a \leq a_{max}=8$ and $b \leq b_{max}=3$ }
\label{tab:3,6c2}
\begin{tabular}{||c|c|c| c| c| c| c|| }
\cline{1-7}
&$a = 3$&$a = 4$&$a = 5$&$a = 6$&$a = 7$&$a = 8$\\
\cline{1-7}
b = 0&-&$  \begin{array}{@{}c@{}}s_3(1)\\ \hdashline -\end{array} $&-&$  \begin{array}{@{}c@{}}s_4(2)\\ \hdashline -\end{array} $& -& $  \begin{array}{@{}c@{}}s_3(3),s_4(2)\\ \hdashline -\end{array} $\\
\cline{1-7}
b = 1&-&-&$  \begin{array}{@{}c@{}}s_3(1)\\ \hdashline lo^3_3\end{array} $&-&$  \begin{array}{@{}c@{}}s_3(3),s_4(1)\\ \hdashline -\end{array} $&-\\
\cline{1-7}
b = 2&-&$  \begin{array}{@{}c@{}}s_3(1)\\ \hdashline dot,pa_2,pa_3\end{array} $&-&$  \begin{array}{@{}c@{}}s_3(1),s_4(3)\\ \hdashline dot,pa_2\end{array} $& -& $  \begin{array}{@{}c@{}}s_3(11),s_4(8)\\ \hdashline -\end{array} $\\
\cline{1-7}
b = 3&$\begin{array}{@{}c@{}}\boldsymbol{s_3(1)} \\ \hdashline dot, pa_3,lo^3_3 \end{array} $&-&$  \begin{array}{@{}c@{}}s_4(2)\\ \hdashline dot\end{array} $& -& $  \begin{array}{@{}c@{}}s_3(6),s_4(4)\\ \hdashline dot\end{array} $&-\\
\cline{1-7}
\hline
\hline
b = 4&-&$  \begin{array}{@{}c@{}}\boldsymbol{s_4(1)}\\ \hdashline dot,pa_2\end{array} $&-&$  \begin{array}{@{}c@{}}\\ \hdashline dot\end{array} $& -& $  \begin{array}{@{}c@{}}\\ \hdashline -\end{array} $\\
\cline{1-7}
\end{tabular}
\end{table}

In each characterization table, columns and rows correspond to different values of $a$ and $b$, respectively. For each $(a,b)$ class of LETSs, the top entries in the table (separated by a dashed line from the bottom entries)
show the prime simple cycles that are parents of the LETS structures within that class, and the multiplicity of non-isomorphic structures in the class with those parents (multiplicity is given within brackets).    
The bottom entries show the expansion techniques applied to all the LETS structures within the class. 
For example, $s_i(x), s_j(y)$, as upper entries in a class means that there are $x+y$ non-isomorphic LETS structures in that class, $x$ number of such structures are generated from $s_i$ and $y$ of them are generated from $s_j$. 
Having $dot, pa_m, lo^c_m$, as lower entries in a class means that all the possible $dot_m$ expansions, both closed and open {\em path} expansions $(pa^o_m , pa^c_m)$, and $lo^c_m$  are used 
to expand all the LETS structures of that class. Having the symbol ``-", as the only entry for a class means that it is impossible to have LETS structures in that class. Having the symbol ``-" as the 
only bottom entry means that no expansion technique is applied to the structures in that class.
In each characterization table, the simple prime cycles that are needed to generate all the structures within the table are boldfaced.

\begin{ex}
Consider the LETS class $(7,5)$ of Table \ref{tab:3,6c4}. There are 6 non-isomorphic LETS structures in this class. Three, two and one of these structures are generated starting from $s_3, s_4$ and  $s_5$, respectively. 
To obtain all the LETS structures in the table, one needs to apply $dot, pa_2, pa_3$ and $lo_3^3$ expansions to each one of the 6 structures in this class. 
This results in the generation of LETS structures in $(8,2), (8,4), (9,5)$ and $(10,6)$ classes (based on Propositions \ref{pro:dot}, \ref{pro:pa} and \ref{pro:lo}).
\end{ex} 

\begin{table}[!htbp]
\centering
\setlength{\tabcolsep}{2pt}
\caption{Characterization (cycle prime sub-graphs and expansion techniques) of Non-Isomorphic LETS Structures of $(a,b)$ Classes for Variable-Regular Graphs with $d_\mathrm{v}=3$ and $g=6$  for $a \leq a_{max}=10$ and $b \leq b_{max}=3$}
\label{tab:3,6c3}
\begin{tabular}{||c|c| c| c| c| c| c| c| c|| }
\cline{1-9}
&$a = 3$&$a = 4$&$a = 5$&$a = 6$&$a = 7$&$a = 8$&$a = 9$&$a = 10$\\
\cline{1-9}
b = 0&-&$  \begin{array}{@{}c@{}}s_3(1)\\ \hdashline -\end{array} $&-&$  \begin{array}{@{}c@{}}s_4(2)\\ \hdashline -\end{array} $& -& $  \begin{array}{@{}c@{}}s_3(3),s_4(2)\\ \hdashline -\end{array} $&-&$  \begin{array}{@{}c@{}}s_3(13),s_4(5),s_5(1)\\ \hdashline -\end{array} $\\
\cline{1-9}
b = 1&-&-&$  \begin{array}{@{}c@{}}s_3(1)\\ \hdashline lo^3_4,lo^4_4\end{array} $&-&$  \begin{array}{@{}c@{}}s_3(3),s_4(1)\\ \hdashline lo^3_3\end{array} $&-&$  \begin{array}{@{}c@{}}s_3(15),s_4(4)\\ \hdashline - \end{array} $&-\\
\cline{1-9}
b = 2
&-&$  \begin{array}{@{}c@{}}s_3(1)\\ \hdashline dot,pa_2,pa_3\end{array} $
&-&$  \begin{array}{@{}c@{}}s_3(1),s_4(3)\\ \hdashline dot,pa_2,pa_3\end{array} $
&-& $ \begin{array}{@{}c@{}}s_3(12),s_4(7)\\ \hdashline dot,pa_2\end{array} $
&-&$  \begin{array}{@{}c@{}}s_3(75),s_4 (36),s_5(2)\\ \hdashline -\end{array} $\\
\cline{1-9}
b = 3&$  \begin{array}{@{}c@{}}\boldsymbol{s_3(1)}\\ \hdashline dot,pa_3\\lo^3_3,lo^3_4,lo^4_4\end{array} $&-
&$  \begin{array}{@{}c@{}}s_4(2)\\ \hdashline dot,pa_3\end{array} $& -
& $  \begin{array}{@{}c@{}}s_3(6),s_4(4)\\ \hdashline dot,pa_2\end{array} $&-
&$  \begin{array}{@{}c@{}}s_3(40)\\s_4(21),s_5(2)\\ \hdashline dot\end{array} $&-\\
\cline{1-9}
\hline
\hline
b = 4&-&$  \begin{array}{@{}c@{}}\boldsymbol{s_4(1)}\\ \hdashline dot,pa_2\end{array} $&-&$  \begin{array}{@{}c@{}}\\ \hdashline dot,pa_2\end{array} $& -& $  \begin{array}{@{}c@{}}\\ \hdashline dot\end{array} $&-&$  \begin{array}{@{}c@{}}\\ \hdashline -\end{array} $\\
\cline{1-9}
b = 5&-&-&$  \begin{array}{@{}c@{}}\boldsymbol{s_5(1)}\\ \hdashline pa_2\end{array} $& -& $  \begin{array}{@{}c@{}}\\ \hdashline dot\end{array} $&-&$  \begin{array}{@{}c@{}}\\ \hdashline -\end{array} $&-\\
\cline{1-9}
\end{tabular}
\end{table}

\begin{table}[!htbp]
\centering
\begin{threeparttable}[b]
\setlength{\tabcolsep}{.3pt}
\caption{Characterization (cycle prime sub-structures and expansion techniques) of Non-Isomorphic LETS Structures of $(a,b)$ Classes for Variable-Regular Graphs with $d_\mathrm{v}=3$ and $g=6$  for $a \leq a_{max}=12$ and $b \leq b_{max}=5$}
\label{tab:3,6c4}
\begin{tabular}{||c|c| c| c| c| c| c| c| c| c| c|| }
\cline{1-11}
&$a = 3$&$a = 4$&$a = 5$&$a = 6$&$a = 7$&$a = 8$&$a = 9$&$a = 10$&$a = 11$&$a = 12$\\
\cline{1-11}
b = 0&-&$ \begin{array}{@{}c@{}}s_3(1)\\ \hdashline -\end{array} $
&-&$  \begin{array}{@{}c@{}} s_4(2)\\ \hdashline -\end{array} $
& -& $  \begin{array}{@{}c@{}} s_3(3)\\ s_4(2)\\ \hdashline -\end{array} $&-&$  \begin{array}{@{}c@{}} s_3 (13)\\ s_4(5)\\ s_5 (1)\\ \hdashline -\end{array} $&-&$  \begin{array}{@{}c@{}} s_3 (63)\\ s_4 (20)\\s_5(2)\\ \hdashline -\end{array} $\\
\cline{1-11}
b = 1&-&-
&$  \begin{array}{@{}c@{}}s_3(1)\\ \hdashline lo^3_6,lo^4_6\\lo^5_6,lo^6_6\end{array} $&-
&$  \begin{array}{@{}c@{}} s_3(3), s_4(1) \\ \hdashline lo^3_5,lo^4_5,lo^5_5\end{array} $&-&$  \begin{array}{@{}c@{}} s_3(15)\\s_4(4)\\ \hdashline -\end{array} $&-&$  \begin{array}{@{}c@{}} s_3(91)\\ s_4(22)\\ s_5 (1)\\ \hdashline -\end{array} $&-\\
\cline{1-11}
b = 2&-&$ \begin{array}{@{}c@{}}s_3(1)\\ \hdashline dot,pa_2,pa_3\\pa_4,pa_5,lo^5_5\\lo^3_6,lo^4_6,lo^5_6,lo^6_6\end{array}$
&-&$  \begin{array}{@{}c@{}} s_3(1)\\ s_4(3)\\ \hdashline dot,pa_4\\lo^5_5,pa_5\end{array} $
& -& $ \begin{array}{@{}c@{}}s_3(14)\\ s_4(5)\\ \hdashline dot,pa_4\\lo^3_4,lo^4_4 \end{array}$&-&$  \begin{array}{@{}c@{}} s_3 (85)\\ s_4 (27)\\ s_5(1)\\ \hdashline dot\end{array} $&-&$  \begin{array}{@{}c@{}} s_3 (641)\\ s_4 (184)\\ s_5 (10)\\ \hdashline -\end{array} $\\
\cline{1-11}
b = 3&$\begin{array}{@{}c@{}}\boldsymbol{s_3(1)}\\ \hdashline dot,pa_2,pa_3\\pa_4,pa_5 ,lo^3_3\\lo^3_4,lo^4_4,lo^3_5,lo^4_5 \end{array} $&-
&$  \begin{array}{@{}c@{}} s_4(2)\\ \hdashline dot\\lo^4_4,lo^4_5\end{array} $& -
& $  \begin{array}{@{}c@{}} s_3(6), s_4(4)\\ \hdashline dot,pa_4,lo^3_3\\lo^3_4,lo^4_4\end{array} $&-
&$  \begin{array}{@{}c@{}} s_3(44)\\ s_4(17)\\ s_5 (2)\\ \hdashline dot,pa_3,lo^3_3\end{array} $&-&$  \begin{array}{@{}c@{}} s_3 (355)\\ s_4 (120)\\ s_5 (7)\\ \hdashline dot\end{array} $&-\\
\cline{1-11}
b = 4&-&$ \begin{array}{@{}c@{}}\boldsymbol{s_4(1)}\\ \hdashline dot,pa_2\\pa_3,pa_4\end{array}$
&-& $  \begin{array}{@{}c@{}} s_3(2)\\ s_4(2)\\ \hdashline dot,pa_2\\pa_3,pa_4 \end{array} $
& -&$  \begin{array}{@{}c@{}} s_3(14)\\s_4(10)\\ s_5(1)\\ \hdashline dot\\pa_2,pa_3\end{array} $&-&$  \begin{array}{@{}c@{}} s_3 (129)\\ s_4 (63)\\ s_5 (6)\\ \hdashline dot,pa_2\end{array} $&-&$  \begin{array}{@{}c@{}} s_3 (1315)\\s_4 (524)\\ s_5 (52)\\s_6(1) \\ \hdashline -\end{array} $\\
\cline{1-11}
b = 5&-&-
&$  \begin{array}{@{}c@{}}\boldsymbol{ s_5(1)}\\ \hdashline dot,pa_2\\pa_3\end{array} $& -
& $  \begin{array}{@{}c@{}} s_3(3)\\s_4(2), s_5(1)\\ \hdashline dot,pa_2\\pa_3,lo^3_3\end{array} $&-
&$  \begin{array}{@{}c@{}} s_3 (30)\\s_4 (18)\\s_5(4)\\ \hdashline dot,pa_2\end{array} $&-&$  \begin{array}{@{}c@{}} s_3 (328)\\ s_4 (180)\\ s_5 (27),s_6(1)\\ \hdashline dot\end{array} $&-\\
\cline{1-11}
\hline
\hline
b = 6&-&-&-& $  \begin{array}{@{}c@{}}\boldsymbol{ s_6(1)}\\ \hdashline dot,pa_2 \end{array} $& -&$  \begin{array}{@{}c@{}}\\ \hdashline dot,pa_2\end{array} $&-&$  \begin{array}{@{}c@{}}\\ \hdashline dot\end{array} $&-&$  \begin{array}{ c } \\ \hdashline -\end{array} $\\
\cline{1-11}
\end{tabular}
\end{threeparttable}
\end{table}

As discussed in Subsection~\ref{dpl}, to cover a given range of $a$ and $b$ values exhaustively, it is sometimes required to generate LETS structures with $b$ values
larger than $b_{max}$ and up to some $b'_{max} > b_{max}$. In the characterization tables, the LETS classes with $b > b_{max}$ are separated from the rest
by a double-line. For classes with $b > b_{max}$, the multiplicity of non-isomorphic LETS structures are not reported in the tables. 
 
\begin{rem}
Note that in Lines \ref{rout:shorwhile2}-\ref{rout:endwhile2} of Algorithm \ref{alg3}, if an expansion technique does not generate a new LETS structure within the range of interest, it will not be added to $\mathcal{EX}$.
For example, consider the characterization of the non-isomorphic LETS structures of $(a,b)$ classes for variable-regular graphs with $d_\mathrm{v}=3$ and $g=6$, in the range $a \leq a_{max}=10$ and $b \leq b_{max}=3$ (Table \ref{tab:3,6c3}). In Line \ref{rout:forremexam} of Algorithm \ref{alg3}, $pa_4$ is applied to the simple cycle $s_3$ in the $(3,3)$ class and generates one new LETS structure in the $(7,5)$ class. However, the new LETS structure in the $(7,5)$ class does not generate any new non-isomorphic LETS structure
in the interest range of $a \leq 10$ and $b \leq 3$. Therefore, $pa_4$ expansion can be removed from the list of expansion techniques applied to $s_3$ in the $(3,3)$ class. 
\end{rem}

\begin{table}[!htbp]
\centering
\caption{  Characterization (cycle prime sub-structures and expansion techniques) of Non-Isomorphic LETS Structures of $(a,b)$ Classes for Variable-Regular Graphs with $d_\mathrm{v}=3$ and $g=8$ for $a \leq a_{max}=10$ and $b \leq b_{max}=4$}
\label{tab:3,8c1}
\begin{tabular}{||c| c| c| c| c| c| c| c|| }
\cline{1-8}
&$a = 4$&$a = 5$&$a = 6$&$a = 7$&$a = 8$&$a = 9$&$a = 10$\\
\cline{1-8}
b = 0&-&-&$ \begin{array}{@{}c@{}}s_4(1)\\ \hdashline -\end{array}  $& -& $ \begin{array}{c }s_4(2)\\ \hdashline -\end{array}  $& -&  $ \begin{array}{@{}c@{}}s_4 (5), s_5(1)\\ \hdashline -\end{array}  $\\
\cline{1-8}
b = 1&-&-&-&$ \begin{array}{@{}c@{}}s_4(1)\\ \hdashline -\end{array}  $&-&$ \begin{array}{c }s_4(4)\\ \hdashline -\end{array}  $&-\\
\cline{1-8}
b = 2&-&-&$ \begin{array}{@{}c@{}}s_4(1)\\ \hdashline dot, pa_3,pa_4\end{array}  $& -& $ \begin{array}{c }s_4(5)\\ \hdashline dot\end{array}  $&-&$ \begin{array}{@{}c@{}}s_4 (27), s_5(1)\\ \hdashline -\end{array}  $\\
\cline{1-8}
b = 3&-&$ \begin{array}{@{}c@{}}s_4(1)\\ \hdashline dot,pa_4,lo^4_4\end{array}  $& -& $ \begin{array}{@{}c@{}}s_4(3)\\ \hdashline dot\end{array}  $&-&$ \begin{array}{@{}c@{}}s_4 (16),s_5(1)\\ \hdashline dot\end{array}  $&-\\
\cline{1-8}
b = 4&$ \begin{array}{@{}c@{}}\boldsymbol{s_4(1)}\\ \hdashline dot ,pa_2, pa_3\end{array}  $&-& $ \begin{array}{@{}c@{}}s_4(2)\\ \hdashline dot ,pa_2, pa_3\end{array}  $& -&$ \begin{array}{@{}c@{}}s_4(9),s_5(1)\\ \hdashline dot ,pa_2 \end{array}  $&-& $ \begin{array}{@{}c@{}}s_4(57), s_5(6)\\ \hdashline - \end{array}  $\\
\cline{1-8}
\hline
\hline
b = 5&-&$ \begin{array}{@{}c@{}}\boldsymbol{s_5(1)}\\ \hdashline pa_2\end{array}  $& -& $ \begin{array}{@{}c@{}}\\ \hdashline  dot,pa_2\end{array}  $&-&$ \begin{array}{@{}c@{}}\\ \hdashline dot\end{array}  $&-\\
\cline{1-8}
\end{tabular}
\end{table}

\begin{rem}
In Lines \ref{rout:shorwhile2}-\ref{rout:endwhile2} of Algorithm \ref{alg3}, if there are still some LETS structures within the range of interest that are not generated, 
the range of $b$ values will be increased to include some new LETS structures that can generate the missing LETS structures through expansions.
We note that, in this process, some LETS structures, which are already generated in Lines \ref{rout:shorwhile}-\ref{rout:endwhile} of Algorithm \ref{alg3}, may be regenerated. 
In such cases, some path or  lollipop expansions, which were added in Lines \ref{rout:shorwhile}-\ref{rout:endwhile} of Algorithm \ref{alg3}, could be removed. 
For example, consider the characterization of the non-isomorphic LETS structures of $(a,b)$ classes for variable-regular graphs with  $d_\mathrm{v}=3$ and $g=6$, in the range $a \leq a_{max}=6$ and $b \leq b_{max}=3$ (Table \ref{tab:3,6c1}). 
In Lines \ref{rout:shorwhile}-\ref{rout:endwhile} of Algorithm \ref{alg3}, $pa_2$ is applied to the simple cycle $s_3$ in the $(3,3)$ class and generates one new LETS structure in the $(5,3)$ class. 
However, since there are still some structures missing by the time that the algorithm reaches Line \ref{rout:shorwhile2}, the range of $b$ values will be increased to $b'_{max}=4$. 
In Lines \ref{rout:shorwhile2}-\ref{rout:endwhile2} of Algorithm \ref{alg3}, by applying  the dot expansion to the simple cycle $s_4$, both of the non-isomorphic LETS structures in the $(5,3)$ class will be generated. 
Therefore, $pa_2$ expansion can be removed from the list of expansion techniques applied to $s_3$ in the $(3,3)$ class. 
\end{rem}

\begin{table}[!htbp]
\setlength{\tabcolsep}{3 pt}
\centering
\caption{  Characterization (cycle prime sub-structures and expansion techniques) of Non-Isomorphic LETS Structures of $(a,b)$ Classes for Variable-Regular Graphs with $d_\mathrm{v}=3$ and $g=8$ for $a \leq a_{max}=12$ and $b \leq b_{max}=4$}
\label{tab:3,8c2}
\begin{tabular}{||c| c| c| c| c| c| c| c|c| c|| }
\cline{1-10}
&$a = 4$&$a = 5$&$a = 6$&$a = 7$&$a = 8$&$a = 9$&$a = 10$&$a = 11$&$a = 12$\\
\cline{1-10}
b = 0&-&-&$ \begin{array}{@{}c@{}}s_4(1)\\ \hdashline -\end{array}  $& -& $ \begin{array}{@{}c@{}}s_4(2)\\ \hdashline -\end{array}  $& -&  $ \begin{array}{@{}c@{}}s_4 (5), s_5(1)\\ \hdashline -\end{array}  $& - & $ \begin{array}{@{}c@{}}s_4(20), s_5(2)\end{array}  $\\
\cline{1-10}
b = 1&-&-&-&$ \begin{array}{@{}c@{}}s_4(1)\\ \hdashline lo^4_5,lo^5_5\end{array}  $&-&$ \begin{array}{@{}c@{}}s_4(4)\\ \hdashline -\end{array}  $&-&$ \begin{array}{@{}c@{}}s_4(22),s_5(1)\\ \hdashline -\end{array}  $& -\\
\cline{1-10}
b = 2&-&-&$ \begin{array}{@{}c@{}}s_4(1)\\ \hdashline dot, pa_3\\pa_4\end{array}  $& -& $ \begin{array}{@{}c@{}}s_4(5)\\ \hdashline dot,pa_3\\pa_4,lo^4_4\end{array}  $&-&$ \begin{array}{@{}c@{}}s_4 (27), s_5(1)\\ \hdashline dot\end{array}  $& -& $ \begin{array}{@{}c@{}}s_4(179)\\ s_5(11)\\ \hdashline -\end{array}  $\\
\cline{1-10}
b = 3&-&$ \begin{array}{@{}c@{}}s_4(1)\\ \hdashline dot,pa_4\\lo^4_4,lo^4_5\end{array}  $& -& $ \begin{array}{@{}c@{}}s_4(3)\\ \hdashline dot,pa_4\end{array}  $&-&$ \begin{array}{@{}c@{}}s_4 (16)\\s_5(1)\\ \hdashline dot,pa_3\end{array}  $&-&$ \begin{array}{@{}c@{}}s_4 (115)\\ s_5(7)\\ \hdashline dot\end{array}  $& -\\
\cline{1-10}
b = 4&$ \begin{array}{@{}c@{}}\boldsymbol{s_4(1)}\\ \hdashline dot,pa_2 \\ pa_3\end{array}  $&-& $ \begin{array}{@{}c@{}}s_4(2)\\ \hdashline dot ,pa_2 \\ pa_3,lo^4_4\end{array}  $& -&$ \begin{array}{@{}c@{}}s_4(9),s_5(1)\\ \hdashline dot ,pa_2\end{array}  $&-& $ \begin{array}{@{}c@{}}s_4(57), s_5(6)\\ \hdashline dot ,pa_2 \end{array}  $& -&$ \begin{array}{@{}c@{}}s_4 (481)\\ s_5(48) , s_6 (10)\\  \hdashline -\end{array}  $\\
\cline{1-10}
\hline
\hline
b = 5&-&$ \begin{array}{@{}c@{}}\boldsymbol{s_5(1)}\\ \hdashline pa_2\end{array}  $& -& $ \begin{array}{@{}c@{}}\\ \hdashline  dot,pa_2,pa_3\end{array}  $&-&$ \begin{array}{@{}c@{}}\\ \hdashline dot,pa_2\end{array}  $&-&$ \begin{array}{@{}c@{}}\\ \hdashline dot\end{array}  $& -\\
\cline{1-10}
b = 6&-&-& $ \begin{array}{@{}c@{}}\boldsymbol{s_6(1)}\\ \hdashline pa_2\end{array}  $& -&$ \begin{array}{@{}c@{}}\\ \hdashline dot,pa_2 \end{array}  $&-& $ \begin{array}{@{}c@{}}\\ \hdashline dot  \end{array}  $& -&$ \begin{array}{@{}c@{}}\\ \hdashline -\end{array}  $\\
\cline{1-10}
\end{tabular}
\end{table}
 
\begin{table}[!htbp]
\centering
\caption{ Characterization (cycle prime sub-structures and expansion techniques) of Non-Isomorphic LETS Structures of $(a,b)$ Classes for Variable-Regular Graphs with $d_\mathrm{v}=4$ and $g=6$ for $a \leq a_{max}=6$ and $b \leq b_{max}=4$} 
\label{tab:4,6c1}
\begin{tabular}{||c|c| c| c| c|| }
\cline{1-5}
&$a = 3$&$a = 4$&$a = 5$&$a = 6$\\
\cline{1-5}
b = 0&-&-&$  \begin{array}{@{}c@{}}s_3(1)\\ \hdashline -\end{array}  $&$  \begin{array}{@{}c@{}}s_3(1)\\ \hdashline -\end{array}  $ \\
\cline{1-5}
b = 1&-&-&-&-\\
\cline{1-5}
b = 2&-&-&$  \begin{array}{@{}c@{}}s_3(1)\\ \hdashline dot\end{array}  $&$  \begin{array}{@{}c@{}}s_3(3)\\ \hdashline -\end{array}  $ \\
\cline{1-5}
b = 3&-&-&-&-\\
\cline{1-5}
b = 4&-&$  \begin{array}{@{}c@{}}s_3(1)\\ \hdashline dot\end{array}  $&$  \begin{array}{@{}c@{}}s_3(2)\\ \hdashline dot\end{array}  $&$  \begin{array}{@{}c@{}}s_3(7)\\ \hdashline -\end{array}  $\\
\cline{1-5}
\hline
\hline
b = 5&-&-&-&-\\
\cline{1-5}
b = 6&$  \begin{array}{@{}c@{}}\boldsymbol{s_3(1)}\\ \hdashline dot\end{array}  $&$  \begin{array}{@{}c@{}}\\ \hdashline dot\end{array}  $&$  \begin{array}{@{}c@{}}\\ \hdashline dot\end{array}  $&$  \begin{array}{@{}c@{}}\\ \hdashline - \end{array}  $\\
\cline{1-5}
\end{tabular}
\end{table}

\begin{table}[!htbp]
\centering
\caption{ Characterization (cycle prime sub-structures and expansion techniques) of Non-Isomorphic LETS Structures of $(a,b)$ Classes for Variable-Regular Graphs with $d_\mathrm{v}=4$ and $g=6$ for $a \leq a_{max}=8$ and $b \leq b_{max}=6$} 
\label{tab:4,6c2}
\begin{tabular}{||c|c| c| c| c| c| c|| }
\cline{1-7}
&$a = 3$&$a = 4$&$a = 5$&$a = 6$&$a = 7$&$a = 8$\\
\cline{1-7}
b = 0&-&-&$  \begin{array}{@{}c@{}}s_3(1)\\ \hdashline -\end{array}  $&$  \begin{array}{@{}c@{}}s_3(1)\\ \hdashline -\end{array}  $& $  \begin{array}{@{}c@{}}s_3(2)\\ \hdashline -\end{array}  $& $  \begin{array}{@{}c@{}}s_3(5),s_4(1)\\ \hdashline -\end{array}  $ \\
\cline{1-7}
b = 1&-&-&-&-&-&-\\
\cline{1-7}
b = 2&-&-&$  \begin{array}{@{}c@{}}s_3(1)\\ \hdashline dot,pa_3\end{array}  $&$  \begin{array}{@{}c@{}}s_3(3)\\ \hdashline dot\end{array}  $& $  \begin{array}{@{}c@{}}s_3(9)\\ \hdashline dot\end{array}  $& $  \begin{array}{@{}c@{}}s_3(34),s_4(1) \\ \hdashline - \end{array}  $ \\
\cline{1-7}
b = 3&-&-&-&-&-&-\\
\cline{1-7}
b = 4&-&$  \begin{array}{@{}c@{}}s_3(1)\\ \hdashline dot,lo^3_3\end{array}  $&$  \begin{array}{@{}c@{}}s_3(2)\\ \hdashline dot\end{array}  $&$  \begin{array}{@{}c@{}}s_3(7)\\ \hdashline dot,pa_2\end{array}  $& $  \begin{array}{@{}c@{}}s_3(27),s_4(1)\\ \hdashline dot\end{array}  $& $  \begin{array}{@{}c@{}}s_3(122),s_4(2) \\ \hdashline -\end{array}  $\\
\cline{1-7}
b = 5&-&-&-&-&-&-\\
\cline{1-7}
b = 6&$  \begin{array}{@{}c@{}}\boldsymbol{s_3(1)}\\ \hdashline dot\end{array}  $&$  \begin{array}{@{}c@{}}s_3(1)\\ \hdashline dot,pa_2\end{array}  $&$  \begin{array}{@{}c@{}}s_3(3)\\ \hdashline dot,pa_2\end{array}  $&$  \begin{array}{@{}c@{}}s_3(10),s_4(1)\\ \hdashline dot \end{array}  $& $  \begin{array}{@{}c@{}}s_3(43),s_4(1) \\ \hdashline dot\end{array}  $& $  \begin{array}{@{}c@{}}s_3(226),s_4(5) \\ \hdashline -\end{array}  $\\
\cline{1-7}
\hline
\hline
b = 7&-&-&-&-&-&-\\
\cline{1-7}
b = 8&-&$  \begin{array}{@{}c@{}}\boldsymbol{s_4(1)}\\ \hdashline dot\end{array}  $&$  \begin{array}{@{}c@{}}\\ \hdashline dot\end{array}  $&$  \begin{array}{@{}c@{}}\\ \hdashline dot \end{array}  $& $  \begin{array}{@{}c@{}} \\ \hdashline dot\end{array}  $& $  \begin{array}{@{}c@{}} \\ \hdashline -\end{array}  $\\
\cline{1-7}
\end{tabular}
\end{table}

\begin{table}[!htbp]
\centering
\setlength{\tabcolsep}{2pt}
\caption{ Characterization (cycle prime sub-structures and expansion techniques) of Non-Isomorphic LETS Structures of $(a,b)$ Classes for Variable-Regular Graphs with $d_\mathrm{v}=4$ and $g=6$ for $a \leq a_{max}=10$ and $b \leq b_{max}=10$}
\label{tab:4,6c3}
\begin{tabular}{||c|c| c| c| c| c| c| c| c|| }
\cline{1-9}
&$a = 3$&$a = 4$&$a = 5$&$a = 6$&$a = 7$&$a = 8$&$a = 9$&$a = 10$\\
\cline{1-9}
b = 0&-&-
&$  \begin{array}{@{}c@{}}s_3(1)\\ \hdashline -\end{array}  $
&$  \begin{array}{@{}c@{}}s_3(1)\\ \hdashline -\end{array}  $
& $  \begin{array}{@{}c@{}}s_3(2)\\ \hdashline -\end{array}  $
& $  \begin{array}{@{}c@{}}s_3(5),s_4(1)\\ \hdashline -\end{array}  $
& $  \begin{array}{@{}c@{}}s_3(16)\\ \hdashline -\end{array}  $ 
& $  \begin{array}{@{}c@{}}s_3(57),s_4(2)\\ \hdashline -\end{array}  $\\
\cline{1-9}
b = 1&-&-&-&-&-&-&-&-\\
\cline{1-9}
b = 2&-&-&
$  \begin{array}{@{}c@{}}s_3(1)\\ \hdashline dot,pa_5\end{array}  $&
$  \begin{array}{@{}c@{}}s_3(3)\\ \hdashline dot\end{array}  $
& $  \begin{array}{@{}c@{}}s_3(9)\\ \hdashline dot\end{array}  $
& $  \begin{array}{@{}c@{}}s_3(34),s_4(1) \\ \hdashline dot\end{array}  $
& $  \begin{array}{@{}c@{}}s_3 (152),s_4(2)\\ \hdashline dot\end{array}  $
&  $  \begin{array}{@{}c@{}}s_3(840),s_4(5) \\ \hdashline -\end{array}  $ \\
\cline{1-9}
b = 3&-&-&-&-&-&-&-&-\\
\cline{1-9}
b = 4
&-
&$  \begin{array}{@{}c@{}}s_3(1)\\ \hdashline dot,pa_4\\lo^3_5,lo^4_5,lo^5_5\end{array}  $
&$  \begin{array}{@{}c@{}}s_3(2)\\ \hdashline dot\end{array}  $
&$  \begin{array}{@{}c@{}}s_3(7)\\ \hdashline dot,pa_3\\pa_4,lo^4_4\end{array}  $
& $  \begin{array}{@{}c@{}}s_3(26)\\s_4(1),s_5(1)\\ \hdashline dot\end{array}  $
& $  \begin{array}{@{}c@{}}s_3(122)\\s_4(2) \\ \hdashline dot\end{array}  $
&  $  \begin{array}{@{}c@{}}s_3(656)\\s_4(7) \\ \hdashline dot\end{array}  $
&  $  \begin{array}{@{}c@{}}s_3(4140)\\s_4(33) \\ \hdashline -\end{array}  $ \\
\cline{1-9}
b = 5&-&-&-&-&-&-&-&-\\
\cline{1-9}
b = 6
&$  \begin{array}{@{}c@{}}\boldsymbol{s_3(1)}\\ \hdashline dot,pa_2,pa_3\\lo^3_3,lo^3_4,lo^4_4 \end{array}$ 
&$  \begin{array}{@{}c@{}}s_3(1)\\ \hdashline dot\\pa_3,pa_4\end{array}  $
&$  \begin{array}{@{}c@{}}s_3(3)\\ \hdashline dot,pa_3\\pa_4,lo^3_4\end{array}  $
&$  \begin{array}{@{}c@{}}s_3(10)\\s_4(1)\\ \hdashline dot,pa_2,pa_3 \end{array}  $
& $  \begin{array}{@{}c@{}}s_3(42)\\s_4(2) \\ \hdashline dot,pa_3\end{array}  $
& $  \begin{array}{@{}c@{}}s_3(224)\\s_4(7) \\ \hdashline dot,pa_2\end{array}  $
& $  \begin{array}{@{}c@{}}s_3(1360)\\s_4(19)\\ \hdashline dot \end{array}  $
&  $  \begin{array}{@{}c@{}}s_3(9382)\\s_4(111) \\ \hdashline -\end{array}  $  \\
\cline{1-9}
b = 7&-&-&-&-&-&-&-&-\\
\cline{1-9}
b = 8
&-
&$  \begin{array}{@{}c@{}}\boldsymbol{s_4(1)}\\ \hdashline dot,pa_2\end{array}  $
&$  \begin{array}{@{}c@{}}s_3(2),s_4(1)\\ \hdashline dot,pa_2\\pa_3\end{array}  $
&$  \begin{array}{@{}c@{}}s_3(7)\\s_4(2),s_5(1)\\ \hdashline dot,pa_2 \end{array}  $
& $  \begin{array}{@{}c@{}}s_3(39)\\s_4(4),s_5(1)\\ \hdashline dot,pa_2\end{array}  $
& $  \begin{array}{@{}c@{}}s_3(236)\\s_4(14) \\ \hdashline dot,pa_2\end{array}  $
& $  \begin{array}{@{}c@{}}s_3(1561)\\s_4(52)\\ \hdashline dot \end{array}  $
&  $  \begin{array}{@{}c@{}}s_3(11719)\\s_4(286) \\ \hdashline -\end{array}  $  \\
\cline{1-9}
b = 9&-&-&-&-&-&-&-&-\\
\cline{1-9}
b = 10
&-
&-
&$  \begin{array}{@{}c@{}}\boldsymbol{s_5(1)}\\ \hdashline dot,pa_2\end{array}  $
&$  \begin{array}{@{}c@{}}s_3(3)\\s_4(2)\\ \hdashline dot,pa_2 \end{array}  $
& $  \begin{array}{@{}c@{}}s_3(21)\\s_4(6)\\ \hdashline dot \end{array}  $
& $  \begin{array}{@{}c@{}}s_3(142)\\s_4(19),s_5(1)  \\ \hdashline dot\end{array}  $
& $  \begin{array}{@{}c@{}}s_3(1060)\\s_4(82),s_5(2) \\ \hdashline dot \end{array}  $
&  $  \begin{array}{@{}c@{}}s_3(8767)\\s_4(476),s_5(1) \\ \hdashline -\end{array}  $  \\
\cline{1-9}
\hline
\hline
b = 11&-&-&-&-&-&-&-&-\\
\cline{1-9}
b = 12
&-
&-
&-
&$  \begin{array}{@{}c@{}}\\ \hdashline - \end{array}  $
& $  \begin{array}{@{}c@{}}\\ \hdashline dot \end{array}  $
& $  \begin{array}{@{}c@{}}\\ \hdashline dot\end{array}  $
& $  \begin{array}{@{}c@{}}\\ \hdashline dot \end{array}  $
&  $  \begin{array}{@{}c@{}} \\ \hdashline -\end{array}  $ \\
\cline{1-9}
\end{tabular}
\end{table}

\begin{rem}
In characterization tables for the same values of $d_\mathrm{v}$ and $g$, but different ranges of $a$ and $b$ values, there may be some classes,
for which, in different tables, different expansion techniques are used. For example, for the class of $(6,2)$ LETS structures in Tables \ref{tab:3,6c2} and \ref{tab:3,6c3},
the expansions are different, and are respectively, $\{dot, pa_2\}$ and $\{dot, pa_2, pa_3\}$. Similarly, there may be cases where a LETS structure is generated starting from a different
prime simple cycle. For example, comparison of the entries of the $(8,2)$ class in Tables \ref{tab:3,6c2} and \ref{tab:3,6c3} reveals that such LETS structures exist in this class.
\end{rem}

\begin{table}[!htbp]
\centering
\caption{ Characterization (cycle prime sub-structures and expansion techniques) of Non-Isomorphic LETS Structures of $(a,b)$ Classes for Variable-Regular Graphs with $d_\mathrm{v}=4$ and $g=8$ for $a \leq a_{max}=11$ and $b \leq b_{max}=12$}
\label{tab:4,8}
\begin{tabular}{||c| c| c| c| c| c| c| c| c|| }
\cline{1-9}
&$a = 4$&$a = 5$&$a = 6$&$a = 7$&$a = 8$&$a = 9$&$a = 10$&$a = 11$\\
\cline{1-9}
b = 0&-&-&-& -& $  \begin{array}{@{}c@{}}s_4(1)\\ \hdashline -\end{array}  $& -& $  \begin{array}{@{}c@{}}s_4(2)\\ \hdashline -\end{array}  $ & $  \begin{array}{@{}c@{}}s_4(2)\\ \hdashline -\end{array}  $\\
\cline{1-9}
b = 1&-&-&-&-&-&-&-&-\\
\cline{1-9}
b = 2&-&-&-& -& $  \begin{array}{@{}c@{}}s_4(1)\\ \hdashline dot\end{array}  $& $  \begin{array}{@{}c@{}}s_4(2)\\ \hdashline dot\end{array}  $& $  \begin{array}{@{}c@{}}s_4(5)\\ \hdashline dot\end{array}  $ & $  \begin{array}{@{}c@{}}s_4(19)\\ \hdashline -\end{array}  $\\
\cline{1-9}
b = 3&-&-&-&-&-&-&-&-\\
\cline{1-9}
b = 4&-&-&-& $  \begin{array}{@{}c@{}}s_4(1)\\ \hdashline dot,pa_4,lo_4^4\end{array}  $& $  \begin{array}{@{}c@{}}s_4(2)\\ \hdashline dot\end{array}  $& $  \begin{array}{@{}c@{}}s_4(7)\\ \hdashline dot\end{array}  $& $  \begin{array}{c@{}c@{}}s_4(33)\\ \hdashline dot\end{array}  $ & $  \begin{array}{@{}c@{}}s_4(164)\\ \hdashline -\end{array}  $\\
\cline{1-9}
b = 5&-&-&-&-&-&-&-&-\\
\cline{1-9}
b = 6&-&-&$  \begin{array}{@{}c@{}}s_4(1)\\ \hdashline dot\end{array}  $& $  \begin{array}{@{}c@{}}s_4(1)\\ \hdashline dot,pa_3\end{array}  $& $  \begin{array}{@{}c@{}}s_4(5)\\ \hdashline dot,pa_3\end{array}  $& $  \begin{array}{@{}c@{}}s_4(19)\\ \hdashline dot,pa_2\end{array}  $& $  \begin{array}{@{}c@{}}s_4(111)\\ \hdashline dot\end{array}  $ & $  \begin{array}{@{}c@{}}s_4(706)\\ \hdashline -\end{array}  $\\
\cline{1-9}
b = 7&-&-&-&-&-&-&-&-\\
\cline{1-9}
b = 8&$  \begin{array}{@{}c@{}}\boldsymbol{s_4(1)}\\ \hdashline dot,pa_2,pa_3\end{array}  $&$  \begin{array}{@{}c@{}}s_4(1)\\ \hdashline dot\end{array}  $&$  \begin{array}{@{}c@{}}s_4(2)\\ \hdashline dot,pa_2\end{array}  $& $  \begin{array}{@{}c@{}}s_4(3)\\ \hdashline dot\end{array}  $& $  \begin{array}{@{}c@{}}s_4(14)\\ \hdashline dot,pa_2\end{array}  $& $  \begin{array}{@{}c@{}}s_4(50)\\ \hdashline dot,pa_2\end{array}  $& $  \begin{array}{@{}c@{}}s_4(286)\\ \hdashline dot\end{array}  $ & $  \begin{array}{@{}c@{}}s_4(1902)\\ \hdashline -\end{array}  $\\
\cline{1-9}
b = 9&-&-&-&-&-&-&-&-\\
\cline{1-9}
b = 10&-&$  \begin{array}{@{}c@{}}\boldsymbol{s_5(1)}\\ \hdashline pa_2 \end{array}  $&$  \begin{array}{@{}c@{}}s_4(2)\\ \hdashline dot\end{array}  $& $  \begin{array}{c }s_4(6)\\ \hdashline dot\end{array}  $& $  \begin{array}{@{}c@{}}s_4(19)\\ \hdashline dot\end{array}  $&  $  \begin{array}{@{}c@{}}s_4(82)\\ \hdashline dot\end{array}  $ & $  \begin{array}{@{}c@{} }s_4(475),s_5(1)\\ \hdashline dot\end{array}  $& $  \begin{array}{@{}c@{}}s_4(3223)\\ \hdashline -\end{array}  $\\
\cline{1-9}
\hline
\hline
b = 11&-&-&-&-&-&-&-&-\\
\cline{1-9}
b = 12&-&-&$  \begin{array}{@{}c@{}}\\ \hdashline - \end{array}  $& $  \begin{array}{@{}c@{}}\\ \hdashline dot\end{array}  $& $  \begin{array}{@{}c@{}}\\ \hdashline dot\end{array}  $& $  \begin{array}{@{}c@{}}\\ \hdashline dot\end{array}  $& $  \begin{array}{@{}c@{}}\\ \hdashline dot\end{array}  $ &  $  \begin{array}{@{}c@{}}\\ \hdashline -\end{array}  $\\
\cline{1-9}
\end{tabular}
\end{table}

\begin{rem}
When Algorithm~\ref{alg3} will have to go beyond LETS structures with $b \leq b_{max}$, and to use simple cycles of size larger than
$\lfloor b_{max}/(d_{\mathrm{v}} - 2) \rfloor$, it may happen that the new cycle does not result in any new LETS structure in the range of interest. In such cases, the new cycle is not included as a prime structure in the table.
An example of this can be seen in Table~\ref{tab:4,6c3}, where $s_6$ does not generate any new non-isomorphic LETS structure
in the interest range of $a \leq 10$ and $b \leq 10$, and is thus not added as a prime structure in the table.
\end{rem}

\begin{table}[!htbp]
\centering
\caption{ Characterization (cycle prime sub-structures and expansion techniques) of Non-Isomorphic LETS Structures of $(a,b)$ Classes for Variable-Regular Graphs with $d_\mathrm{v}=5$ and $g=6$ for $a \leq a_{max}=9$ and $b \leq b_{max}=11$}
\label{tab:5,6}
\begin{tabular}{||c|c| c| c| c| c| c |c|| }
\cline{1-8}
&$a = 3$&$a = 4$&$a = 5$&$a = 6$&$a = 7$&$a = 8$&$a = 9$\\
\cline{1-8}
b = 0&-&-&-&$  \begin{array}{@{}c@{}}s_3(1)\\ \hdashline -\end{array}  $& -& $  \begin{array}{@{}c@{}}s_3(3)\\ \hdashline -\end{array}  $&-\\
\cline{1-8}
b = 1&-&-&-&-&$  \begin{array}{@{}c@{}}s_3(1)\\ \hdashline -\end{array}  $&-&$  \begin{array}{@{}c@{}}s_3(28)\\ \hdashline -\end{array}  $\\
\cline{1-8}
b = 2&-&-&-&$  \begin{array}{@{}c@{}}s_3(1)\\ \hdashline dot,pa_3\end{array}  $& -& $  \begin{array}{@{}c@{}}s_3(16)\\ \hdashline dot\end{array}  $&-\\
\cline{1-8}
b = 3&-&-&-&-&$  \begin{array}{@{}c@{}}s_3(6)\\ \hdashline dot\end{array}  $&-&$  \begin{array}{@{}c@{}}s_3(289)\\ \hdashline -\end{array}  $\\
\cline{1-8}
b = 4&-&-&-&$  \begin{array}{@{}c@{}}s_3(2)\\ \hdashline dot,pa_3,lo^3_3\end{array}  $& -& $  \begin{array}{@{}c@{}}s_3(75)\\ \hdashline dot\end{array}  $&-\\
\cline{1-8}
b = 5&-&-&$  \begin{array}{@{}c@{}}s_3(1)\\ \hdashline dot,pa_3,lo^3_3\end{array}  $&-&$  \begin{array}{@{}c@{}}s_3(18)\\ \hdashline dot,pa_2\end{array}  $&-&$  \begin{array}{@{}c@{}}s_3(1356),s_4(1)\\ \hdashline - \end{array}  $\\
\cline{1-8}
b = 6&-&-&-&$  \begin{array}{@{}c@{}}s_3(5)\\ \hdashline dot\end{array}  $& -& $  \begin{array}{@{}c@{}}s_3(223)\\ \hdashline dot\end{array}  $&-\\
\cline{1-8}
b = 7&-&-&$  \begin{array}{@{}c@{}}s_3(1)\\ \hdashline dot,pa_2\end{array}  $&-&$  \begin{array}{@{}c@{}}s_3(37)\\ \hdashline dot,pa_2\end{array}  $&-&$  \begin{array}{@{}c@{}}s_3(3786),s_4(1) \\ \hdashline -\end{array}  $\\
\cline{1-8}
b = 8&-&$  \begin{array}{@{}c@{}}s_3(1)\\ \hdashline dot\end{array}  $&-&$  \begin{array}{@{}c@{}}s_3 (8)\\ \hdashline dot,pa_2\end{array}  $& -&$  \begin{array}{@{}c@{}}s_3(460),s_4(1) \\ \hdashline dot\end{array}  $&-\\
\cline{1-8}
b = 9&$  \begin{array}{@{}c@{}}\boldsymbol{s_3(1)}\\ \hdashline dot,pa_2\end{array}  $&-&$  \begin{array}{@{}c@{}}s_3(2)\\ \hdashline dot,pa_2\end{array}  $&-&$  \begin{array}{@{}c@{}}s_3(62)\\ \hdashline dot\end{array}  $&-&$  \begin{array}{@{}c@{}}s_3(7086),s_4(6)\\ \hdashline -\end{array}  $\\
\cline{1-8}
b = 10&-&$  \begin{array}{@{}c@{}}s_3(1)\\ \hdashline dot\end{array}  $&-&$  \begin{array}{@{}c@{}}s_3(12)\\ \hdashline dot\end{array}  $& -& $  \begin{array}{@{}c@{}}s_3(691), s_4(2)\\ \hdashline dot\end{array}  $&-\\
\cline{1-8}
b = 11&-&-&$  \begin{array}{@{}c@{}}s_3(3)\\ \hdashline dot\end{array}  $&-&$  \begin{array}{@{}c@{}}s_3(81),s_4 (1)\\ \hdashline dot\end{array}  $&-&$  \begin{array}{@{}c@{}}s_3(9527),s_4(19)\\ \hdashline -\end{array}  $\\
\cline{1-8}
\hline
\hline
b = 12&-&$  \begin{array}{@{}c@{}}\boldsymbol{s_4(1)}\\ \hdashline dot\end{array}  $&-&$  \begin{array}{@{}c@{}}\\ \hdashline dot\end{array}  $& -& $  \begin{array}{@{}c@{}}\\ \hdashline dot\end{array}  $&-\\
\cline{1-8}
b = 13&-&-&$  \begin{array}{@{}c@{}}\\ \hdashline dot\end{array}  $&-& $  \begin{array}{@{}c@{}}\\ \hdashline dot\end{array}  $& -&$  \begin{array}{@{}c@{}}\\ \hdashline -\end{array}  $\\
\cline{1-8}
\end{tabular}
\end{table}
 
\subsection{Dpl-based Search Algorithm}
\label{dpl-search}
The characterization of LETS structures described in Subsections~\ref{dpl} and \ref{dpl-tables} corresponds to an efficient search algorithm to find the instances of all $(a,b)$ LETS structures of a 
variable-regular Tanner graph for a given range of $a$ and $b$ values in a guaranteed fashion. The efficiency of the search is implied
by Theorem~\ref{thm-ue}. The pseudo-code of the proposed search algorithm is given in Algorithm \ref{tab:search al2}. 
The algorithm is similar to Algorithm \ref{tab:search al1}, except that, in addition to the {\em dot} expansion, {\em path} and {\em lollipop} expansions are also used
to find the instances of some of the LETS structures. The search algorithm starts by the enumeration of simple cycles of length up to $K$. These are the cycles that are 
identified in the characterization table as the required prime structures. For the sake of completeness, the pseudo-code for cycle enumeration is provided 
in Routine~\ref{tab:search cycle}.  

\begin{algorithm}
\centering
\caption{{\bf (Dpl-based Search Algorithm)} Finds all the instances of $(a,b)$ LETS structures of a variable-regular Tanner graph $G$ with girth $g$, for $a \leq a_{max}$ and $b \leq b_{max}$. The inputs are 
the maximum size $K$ of the simple cycles,  and the table $\mathcal{EX}$ of expansion techniques required in {\em dpl} characterization. The outputs are the sets ${\cal I}_k, k=g/2,\ldots,K$, which contain all the instances 
of LETS structures of size up to $a_{max}$ that are expansions of simple cycles of length $k$ in $G$.}
\label{tab:search al2}
\begin{algorithmic}[1]
\State  \textbf{Inputs:} $K$, $\mathcal{EX}$, $g$, $a_{max}$, $b_{max}$. 
\State  \textbf{Initializations:} $\mathcal{I} \gets \emptyset$.
\For {$k=g/2,\ldots,K$}
\For {$a = k, \ldots, a_{max}$}
\State $\mathcal{I}_k^{a}\gets \emptyset$.
\EndFor
\EndFor
\For {$k = g/2, \dots, K$}
\State $\mathcal{I}_k^k$= \textbf{CycSrch}$(s_k)$,  $a=k$. 
\While {$a < a_{max}$} \label{shorouWhile3}
\State $\mathcal{I}_{tem}^{a+1}$= \textbf{DotSrch}$( \mathcal{I}_k^{a},\mathcal{I})$.
\For {any $pa_m $ in the characterization table}
\State $\mathcal{I}_{tem}^{a+m}$=\textbf{PathSrch}$(\mathcal{I}_k^a, \mathcal{I},  \mathcal{EX}, m)$.
\EndFor
\For {any $lo^c_m$ in the characterization table}
\State $\mathcal{I}'$=\textbf{LolliSrch}$(\mathcal{I}_k^a, \mathcal{I}_c^{c}, \mathcal{I},  \mathcal{EX}, m)$.
\State $\mathcal{I}_{tem}^{a+m}=\mathcal{I}_{tem}^{a+m}\cup \mathcal{I}'$.
\EndFor
\For{$ t=a+1,\dots, a_{max}$}
\State $\mathcal{I}_k^{t}=\mathcal{I}_k^{t} \cup \mathcal{I}_{tem}^{t}$.
\State $\mathcal{I}=\mathcal{I} \cup \mathcal{I}_{tem}^{t}$.
\EndFor
\State $a=a+1$.
\EndWhile \label{EndWhile3}
\State $\mathcal{I}_k=\bigcup\limits _{a=k}^{a_{max}} \mathcal{I}_k^{a}$
\EndFor
\State \textbf{Outputs:} $\{ \mathcal{I}_{g/2},\dots,\mathcal{I}_K\}$.
\end{algorithmic}
\end{algorithm}
\begin{routine}
\centering
\caption{{\bf (CycSrch)} Finds all the instances $\mathcal{I}_k^k$ of simple cycles of length $k$ in a given Tanner graph. $\mathcal{I}_k^k$= CycSrch$(s_k)$  }
\label{tab:search cycle}
\begin{algorithmic}[1]
\State  \textbf{Initializations:} $\mathcal{I}_k^k \gets \emptyset$.
\For {each variable node $v_l$ in the Tanner graph}
\For{each check node $c_i$ in the neighborhood of $v_l$}
\parState {Find all the paths $\mathcal{PA}_{i,l}$ of length $k-1$ in the Tanner graph, starting from $c_i$ that do not contain $v_l$.} 
\vspace{-6pt}
\EndFor
\parFor {any pair of check nodes $c_i$ and $c_j$ in the neighborhood of $v_l$, where $i \neq j$}
\For {any path $pa \in \mathcal{PA}_{i,l}$, and any path $pa' \in \mathcal{PA}_{j,l}$} \label{cyc-comp}
\parIf {the two paths end with the same node and that node is their only common node}
\parState {Let $v$ and $v'$ denote the last variable nodes of $pa$ and $pa'$, respectively.}
\parIf{variable nodes in $pa \setminus v$ and $pa' \setminus v'$ do not have any common check node}
\parState {$\mathcal{S}=v_l \cup$\{set of variable nodes in $pa \cup pa'$\}.}
\State $\mathcal{I}_k^k=\mathcal{I}_k^k \cup \{\mathcal{S}\}$.
\EndparIf
\EndparIf
\EndFor
\EndparFor
\EndFor
\State \textbf{Output:} $\mathcal{I}_k^k$.
\end{algorithmic}
\end{routine}
After the enumeration of all instances of a simple cycle, these instances are expanded in the while loop (Lines \ref{shorouWhile3}-\ref{EndWhile3}) to find instances of other LETS structures up to size $a_{max}$, $\mathcal{I}_k$. 
Notation $\mathcal{I}^a_k$ is used for the set of instances of size $a$ found by starting from the instances of the simple cycle of length $k$,
and $\mathcal{I}$  is the set of instances of all LETSs which are found so far in the algorithm.
Finding the instances of LETS structures using {\em dot}, {\em path} and {\em lollipop} expansion techniques are explained in Routines \ref{tab:search $dot$}, \ref{tab:search $path$} and \ref{tab:search lolli}, respectively.
\begin{routine}
\centering
\caption{{\bf (PathSrch)} Expanding the LETS instances of size $a$ in $\mathcal{I}_k^{a}$ using $pa_m$ to find instances of LETS structures of size $a+m$, excluding the already found instances $\mathcal{I}$,
and storing the rest in $\mathcal{I}'$. Only instances  in $\mathcal{I}_k^{a}$ whose structures in accordance to the expansion table $ \mathcal{EX}$ need to be expanded by $pa_m$ are expanded.  
$\mathcal{I}'$= PathSrch$( \mathcal{I}_k^{a}, \mathcal{I},  \mathcal{EX}, m)$ }
\label{tab:search $path$}
\begin{algorithmic}[1]
\State  \textbf{Initializations:} $\mathcal{I}_{tem} \gets \emptyset$.
\parFor{each LETS instance ${\cal S}$ in $\mathcal{I}_k^a$ that based on $\mathcal{EX}$ requires $pa_m$}
\parFor {each unsatisfied check node $c_k \in \Gamma_{o}{(\mathcal{S})}$}
\parState {Find all the paths $\mathcal{PA}_k$ of length $m$ in the Tanner graph, starting from $c_k$. }
\EndparFor
\parFor {any pair of unsatisfied check nodes $c_k$ and $c_j$ in $\Gamma_{o}{(\mathcal{S})}$, where $k \neq j$}
\parFor {any path $pa \in \mathcal{PA}_k$ and any path $pa' \in \mathcal{PA}_j$}
\parIf{$pa$ and $pa'$ end with the same node and this node is their only common node, and if variable and check nodes of $pa$ and $pa'$ are not in $G(\mathcal{S})$}
\parState {$\mathcal{I}_{tem}\gets \mathcal{I}_{tem}\cup \{\mathcal{S}~\cup$ \{set of variable nodes in $pa \cup pa'$\}\}.}
\EndparIf
\EndparFor
\EndparFor
\EndparFor
\State $\mathcal{I}' \gets \mathcal{I}_{tem}\setminus \mathcal{I}$.
\State \textbf{Output:} $\mathcal{I}'$.
\end{algorithmic}
\end{routine}

\begin{routine}
\centering
\caption{{\bf (LolliSrch)} Expanding the LETS instances of size $a$ in $\mathcal{I}_k^{a}$ using $lo^c_m$ to find instances of LETS structures of size $a+m$, excluding the already found instances $\mathcal{I}$,
and storing the rest in $ \mathcal{I}'$. Only instances  in $\mathcal{I}_k^{a}$ whose structures in accordance to the expansion table $ \mathcal{EX}$ need to be expanded by $lo^c_m$  are expanded. 
The set of instances of simple cycles of length $c$, ${\cal I}_c^c$, is also an input. $ \mathcal{I}'$= LolliSrch $( \mathcal{I}_k^a,\mathcal{I}_c^c, \mathcal{I}, \mathcal{EX}, m)$}
\label{tab:search lolli}
\begin{algorithmic}[1]
\State  \textbf{Initializations:} $\mathcal{I}_{tem} \gets \emptyset$,  $d=m+1-c$.
\parFor{each LETS instance ${\cal S}$ in $\mathcal{I}_k^a$ that based on $\mathcal{EX}$ requires $lo^c_m$}
\parState {Find all the paths ${\cal PA}$ of length $2(d-1)$ in the Tanner graph, starting from check nodes $c'$ in $\Gamma_{o}{(\mathcal{S})}$ that have no common nodes with $G({\cal S})$ other than $c'$. }
\parFor {each structure $\mathcal{S}' \in \mathcal{I}_c^c$, for which $G({\cal S}')$ has no common node with $G({\cal S})$, let $\Gamma_{o}(\mathcal{S}')$  denote the set of unsatisfied check nodes of $G(\mathcal{S}')$ and }
\vspace{-6pt}
\For {each path, $pa \in \mathcal{PA}$} \label{lolli-comp}
\parIf {$pa$ ends with a check node $c'$ in $\Gamma_{o}(\mathcal{S}')$ and if $c'$ is the only common node between $pa$ and $\Gamma_{o}(\mathcal{S}')$}
\parState {$\mathcal{I}_{tem}\gets \mathcal{I}_{tem}\cup \{\mathcal{S}~\cup$ \{set of variable nodes in $pa \cup G(\mathcal{S}'$\}\}.}
\EndparIf
\EndFor
\EndparFor
\EndparFor
\State $\mathcal{I}' \gets \mathcal{I}_{tem}\setminus \mathcal{I}$.
\State \textbf{Output:} $\mathcal{I}'$.
\end{algorithmic}
\end{routine}

\subsection{Complexity of the dpl-based Search Algorithm}
\label{complex}

The complexity of the search algorithm depends, in general, on the multiplicity of different instances of LETS structures and the expansion techniques used in different classes.
The total complexity of the proposed algorithm can be divided into two parts: 1) complexity of finding the instances of prime simple cycle structures, and 
2) complexity of expanding the instances of prime structures to find all the instances of every $(a,b)$ LETS structure within the interest range of $a$ and $b$ values. 
In this subsection, we assume that the LDPC codes are regular.

\subsubsection{Complexity of finding the instances of prime structures}
The proposed dpl-based search algorithm uses Routine \ref{tab:search cycle} to find the instances of simple cycles. 
Based on this approach, the order of complexity for finding the instances of a simple cycle of size $k$ is  $\mathcal{O}(n (d_\mathrm{v} d_\mathrm{c})^{k})$.
This complexity increases linearly, polynomially and exponentially with the block length, $n$, the check and variable degrees, $d_\mathrm{c}, d_\mathrm{v}$, and the size of the cycle, respectively. 
The complexity of finding the instances of a simple cycle thus becomes quickly impractical as the cycle length increases. 

Let $N_k$ denote the multiplicity of the instances of the simple cycle $s_k$ of length $k$ in the Tanner graph. Clearly, 
the complexity of finding the instances of a LETS structure that is an immediate child of $s_k$
is proportional to $N_k$. Moreover, all the instances of the prime cycles need to be stored for further processing throughout the search algorithm. The memory required to store all the instances of $s_k$ is also
proportional to $N_k$. To the best of our knowledge, there is no theoretical result on how $N_k$ scales with $n$, $k$, or the degree distribution of the Tanner graph. 
Empirical results in \cite{mehdi1}, however, suggest that $N_k$ is rather independent of $n$ and increases rather rapidly as the variable and check degrees, or $k$ are increased.
We note that, in comparison with the dot-based (LSS-based) search algorithm, the proposed dpl-based search algorithm requires smaller size prime cycles to be enumerated. This
translates to less computational complexity and memory requirements as will be discussed in details in Section~\ref{sec:numerical}. 

\subsubsection{Complexity of expansions}

Based on the characterization tables, a few expansion techniques (those in $\mathcal{EX}_{(a,b)}$) are used to expand the instances of LETS structures within each $(a,b)$ class. 
The complexity of this part of the algorithm depends on the expansion techniques, and the multiplicity of LETSs in each $(a,b)$ class, $N_{a,b}$. 
To the best of our knowledge, there is no theoretical result on how $N_{a,b}$ changes with $a$ and $b$, or with different code parameters.  

In the following, we discuss the complexity of the three expansion techniques.  
The expansion of an instance of a LETS structure in an $(a,b)$ class using {\em dot} expansion is explained in Routine \ref{tab:search $dot$}. 
For each instance, $bd_\mathrm{c}$ variable nodes should be checked to find the set $\mathcal{V}$ in Routine \ref{tab:search $dot$}.
The complexity for the expansion of all the instances of LETSs in an $(a,b)$ class using {\em dot} is thus $\mathcal{O}(N_{a,b}b^2d_\mathrm{c}^2)$. 
The expansion of an instance of a LETS structure in an $(a,b)$ class using the {\em path} expansion technique is explained in Routine \ref{tab:search $path$}.
Based on the approach of Routine \ref{tab:search $path$}, it is easy to see that the complexity of expanding all the instance of LETS structures in an $(a,b)$ class using $pa_m$ is $\mathcal{O}(N_{a,b}b^2(d_\mathrm{v}d_\mathrm{c})^m)$.
Similarly, the complexity of expanding all the instance of LETS structures in an $(a,b)$ class using $lo^c_m$, based on Routine \ref{tab:search lolli}, is $\mathcal{O}(N_{a,b} N_c b c (d_\mathrm{v} d_\mathrm{c})^d)$,
where $N_c$ is the multiplicity of simple cycles of size $c$ in the Tanner graph, and $d=m+1-c$.

Finally, we note that the memory required to store all the LETS instances of an $(a,b)$ class is $\mathcal{O}(a N_{a,b})$.  
The main advantage of using the dpl-based search algorithm, compared to dot-based search, is to avoid dealing with classes with large $a$ and $b$ values. 
Searching for LETS instances in these classes and expanding them imposes huge computational complexity and memory requirements on the search algorithm. 
Instead, in the dpl-based search algorithm, expansions are used in classes with relatively small $N_{a,b}$ values. 
Our experimental results in Section \ref{sec:numerical} show that the dpl-based search algorithm is significantly faster and requires much less memory compared to the 
$dot$ (LSS)-based search algorithm, for different codes with a wide range of variable degrees, rates, block lengths and girths. 

\section{Numerical Results}
\label{sec:numerical}
We have applied the dot-based and dpl-based search algorithms to a large number of variable-regular random and structured LDPC codes with a wide range of variable degrees, rates and block lengths. These codes are listed in Table \ref{tab:codelist}. 
 For all the runtimes reported in this paper, a desktop computer with 2.4-GHz CPU and 8-GB memory is used.
\begin{table}[]
\centering
\caption{List of LDPC Codes Used in This Paper}
\label{tab:codelist}
\begin{tabular}{||c|c|c|c|c|c|c|c|c|c|c|c|c|| }
\cline{1-13}
Code &$\mathcal{C}_1$&$\mathcal{C}_2$&$\mathcal{C}_3$&$\mathcal{C}_4$&
$\mathcal{C}_5$&$\mathcal{C}_6$&$\mathcal{C}_7$&$\mathcal{C}_8$&
$\mathcal{C}_9$&$\mathcal{C}_{10}$&$\mathcal{C}_{11}$&$\mathcal{C}_{12}$\\
\cline{1-13}
$n$&169&361&529&1057&16383&504&816&1008&4000&20000&155&504\\
\cline{1-13}
$R$&0.78&0.84&0.87&0.77&0.87&0.5&0.5&0.5&0.5&0.5&0.4&0.5\\
\cline{1-13}
$d_\mathrm{v}$&3&3&3&3&3&3&3&3&3&3&3&3\\
\cline{1-13}
$g$&6&6&6&6&6&6&6&6&6&6&8&8\\
\cline{1-13}
Ref.&\cite{fan2000}&\cite{fan2000}&\cite{fan2000}&\cite{mackayencyclopedia}
&\cite{mackayencyclopedia}&\cite{mackayencyclopedia}&\cite{mackayencyclopedia}
&\cite{mackayencyclopedia}&\cite{mackayencyclopedia}&\cite{mackayencyclopedia}
&\cite{michael2001class}&\cite{hu2005regular}\\
\cline{1-13}
\hline
\hline
Code&$\mathcal{C}_{13}$&$\mathcal{C}_{14}$&$\mathcal{C}_{15}$&
$\mathcal{C}_{16}$&$\mathcal{C}_{17}$&$\mathcal{C}_{18}$&
$\mathcal{C}_{19}$&$\mathcal{C}_{20}$&$\mathcal{C}_{21}$&
$\mathcal{C}_{22}$&$\mathcal{C}_{23}$&$\mathcal{C}_{24}$\\
\cline{1-13}
$n$&1008&2640&49&169&289&361&529&4095&16383&4000&8000&816\\
\cline{1-13}
$R$&0.5&0.5&0.49&0.71&0.77&0.8&0.83&0.82&0.87&0.5&0.5&0.5\\
\cline{1-13}
$d_\mathrm{v}$&3&3&4&4&4&4&4&4&4&4&4&5\\
\cline{1-13}
$g$&8&8&6&6&6&6&6&6&6&6&6&6\\
\cline{1-13}
Ref.&\cite{hu2005regular}&\cite{mackayencyclopedia}&\cite{fan2000}&
\cite{fan2000}&\cite{fan2000}&\cite{fan2000}&\cite{fan2000}&
\cite{mackayencyclopedia}&\cite{mackayencyclopedia}&\cite{mackayencyclopedia}&
\cite{mackayencyclopedia}&\cite{mackayencyclopedia}\\
\cline{1-13}
\end{tabular}
\end{table}
Except codes $\mathcal{C}_1$-$\mathcal{C}_3$,  $\mathcal{C}_{11}$, $\mathcal{C}_{14}$ and $\mathcal{C}_{15}$-$\mathcal{C}_{19}$, which are structured codes, the other LDPC codes are all randomly constructed.
For structured codes, their structure is not used to simplify the search, and all the runtimes are based on exhaustive search of the code's Tanner graph without taking into account the automorphisms that exist for structured codes.

Tables \ref{tab:3hi1}-\ref{tab:56} list the multiplicity of instances of LETSs in different classes for these codes, found by the dpl-based search algorithm, starting from the instances of simple cycles.
Each row of a table corresponds to a LETS class, and for each class, the total number of instances of LETSs, EASs and FEASs are listed. We note that for the codes with $d_\mathrm{v}=3$, the multiplicities of LETSs and EASs are the same.
We have also listed the break down of 
the LETS multiplicity of each class based on the prime structure involved in the LETS structure. For example, in Table~\ref{tab:3hi1}, for Code~$\mathcal{C}_1$, and for the $(6,2)$ class,
we notice that there are a total of $142,974$ instances of LETS in this class, from which $54,756$ are children of $s_3$, and the remaining $88,218$ are children of $s_4$.
The symbol ``-" as an entry of an $(a,b)$ class for a specific $s_k$ means that, based on the characterization table, $s_k$ is not a prime sub-structure of any LETS structure in the $(a,b)$ class.
In the last two rows of each table, the runtime of the dpl-based and the dot-based search algorithms is reported for comparison. The symbol ``-" for the runtime of the dot-based search algorithm means 
that the search takes more than one day, or needs more than 8-GB memory to be completed. For fair comparison, both algorithms are run to find all the instances of $(a,b)$ LETS structures within the 
range of interest for $a$ and $b$ values, exhaustively.

Table \ref{tab:3hi1} lists the multiplicity of instances of LETSs and FEASs in all the nonempty classes within the range $a \leq 6$ and $b \leq 3$, for the codes $\mathcal{C}_1$, $\mathcal{C}_2$ and $\mathcal{C}_3$. These are high-rate array-based codes \cite{fan2000} with $d_\mathrm{v} = 3$ and $g = 6$. Absorbing and fully absorbing sets of these structured codes were studied in \cite{dolecek2010analysis}. 
The search algorithm to find the instances of LETSs in the range $a \leq 6$ and $b \leq 3$, is based on the information provided in Table \ref{tab:3,6c1}. 
\begin{table*}[th]
\centering
\caption{Multiplicities of LETS and FEAS structures of  codes $\mathcal{C}_1$, $\mathcal{C}_2$ and $\mathcal{C}_3$ within the range $a \leq 6$ and $b \leq 3$}
\setlength{\tabcolsep}{2pt}
\label{tab:3hi1}
\begin{tabular}{||c|c|c||c|c|||c|c||c|c|||c|c||c|c|| } 
\cline{1-13}
 &\multicolumn{4}{c|||}{$\mathcal{C}_{1}$}& \multicolumn{4}{c|||}{$\mathcal{C}_{2}$}&\multicolumn{4}{c||}{$\mathcal{C}_{3}$}\\
\cline{1-13}
$(a,b)$&\multicolumn{2}{c||}{Primes} &Total &Total&\multicolumn{2}{c||}{Primes}&Total&Total&\multicolumn{2}{c||}{Primes}&Total&Total \\
\cline{2-3}\cline{6-7}\cline{10-11}
class&$s_3$&$s_4$&LETS&FEAS&$s_3$&$s_4$&LETS&FEAS&$s_3$&$s_4$&LETS&FEAS\\
\cline{1-13}
(3,3)&2028&-&2028&0&6498&-&6498&0&11638&-&11638&0\\
\cline{1-13}
(4,2)&3042&-&3042&3042&9747&-&9747&9747&17457&-&17457&17457\\
\cline{1-13}
(5,3)&-&83148&83148&0&-&422370&422370&0&-&942678&942678&0\\
\cline{1-13}
(6,0)&-&3718&3718&3718&-&18411&18411&18411&-&40733&40733&40733\\
\cline{1-13}
(6,2)&54756&88218&142974&142974&292410&458109&750519&750519&663366&1029963&1693329&1693329\\
\cline{1-13}
$dpl$&\multicolumn{4}{c|||}{38 sec.}&\multicolumn{4}{c|||}{196 sec.} &\multicolumn{4}{c||}{424 sec.}\\
\cline{1-13}
$dot$&\multicolumn{4}{c|||}{181 sec.}&\multicolumn{4}{c|||}{1447 sec.} &\multicolumn{4}{c||}{5749 sec.}\\
\cline{1-13}
\end{tabular}
\end{table*}
Since the variable degree is 3, all the  instances of LETSs found by the search algorithm are EASs. Therefore,  FEASs were found by examining the LETSs, and testing them for the definition of a fully elementary absorbing set. 
While it takes the proposed search algorithm 38, 196 and 424 seconds to find all the instances of  LETS structures of Table \ref{tab:3,6c1} for these three codes, respectively, it took the dot-based search algorithm 181, 1447 and 5749 seconds to find the same set of LETSs for the three codes, respectively. 

Table \ref{tab:3hi2} lists the multiplicity of  instances of LETSs and FEASs in all the nonempty classes within the range $a \leq 8$ and $b \leq 3$, for the codes $\mathcal{C}_4$ and $\mathcal{C}_5$. Both codes have $d_\mathrm{v} = 3$ and $g = 6$. 
\begin{table}[]
\centering
\caption{Multiplicities of LETS and FEAS structures of  codes $\mathcal{C}_4$ and $\mathcal{C}_5$ within the range $a \leq 8$ and $b \leq 3$}
\label{tab:3hi2}
\begin{tabular}{||c|c|c||c|c|||c|c||c|c|| }
\cline{1-9}
&\multicolumn{4}{c|||}{$\mathcal{C}_4$}& \multicolumn{4}{c||}{$\mathcal{C}_5$}\\
\cline{1-9}
$(a,b)$ &\multicolumn{2}{c||}{Primes}&Total&Total &\multicolumn{2}{c||}{Primes}&Total&Total \\
\cline{2-3} \cline{6-7}
class& $s_3$&$s_4$&LETS&FEAS& $s_3$&$s_4$&LETS&FEAS\\
\cline{1-9}
(3,3)&2288&-&2288&1740&14291&-&14291&13491\\
\cline{1-9}
(4,2)&296&-&296&267&404&-&404&394\\
\cline{1-9}
(5,1)&29&-&29&29&10&-&10&10\\
\cline{1-9}
(5,3)&-&17320&17320&12847&-&43904&43904&41382\\
\cline{1-9}
(6,2)&650&2435&3085&2813&343&1284&1627&1590\\
\cline{1-9}
(7,1)&246&23&269&269&34&3&37&37\\
\cline{1-9}
(7,3)&146694&89023&235717&175759&156742&93724&250466&235926\\
\cline{1-9}
(8,0)&16&0&16&16&0&0&0&0\\
\cline{1-9}
(8,2)&25452&13658&39110&35524&5913&3084&8997&8824\\
\cline{1-9}
$dpl$&\multicolumn{4}{c|||}{3 min.}&\multicolumn{4}{c||}{27 min.}\\
\cline{1-9}
$dot$&\multicolumn{4}{c|||}{-}&\multicolumn{4}{c||}{-}\\
\cline{1-9}
\end{tabular}
\end{table}
Code ${\mathcal C}_4$ is a high-rate code ($R=0.77$) with short block length ($n=1057$), and code $\mathcal{C}_5$ is a high-rate code ($R=0.87$) with large block length $(n=16,383)$. 
The search to find the  instances of LETSs in the range $a \leq 8$ and $b \leq 3$, is based on the information provided in Table \ref{tab:3,6c2}. 
It takes the proposed search algorithm only about 3 and 27 minutes to find LETSs reported in this table for codes $\mathcal{C}_4$ and $\mathcal{C}_5$, respectively. 
To obtain the results of Table~\ref{tab:3,6c2}, the dpl-based search algorithm only needs to enumerate the instances  of $s_3$ and $s_4$ in the Tanner graph of the codes. The dot-based search algorithm,
on the other hand, needs to enumerate the instances of $s_3$, $s_4$, $s_5$, $s_6$, $c_6$, $n_6$, $s_7$, $c_7$, and $n_7$, to find all the instances of  LETSs in the range $a \leq 8$ and $b \leq 3$, in a guaranteed fashion.
It took the dot-based search algorithm 43 minutes to only find the 16,710,009  instances of  $s_5$, and more than a day to find all the instances of LETSs reported in this table for code $\mathcal{C}_4$. 

The examination of Table \ref{tab:3hi2} shows  that for two classes with the same size, the class with a larger $b$ value has more instances of LETSs. For example, for $\mathcal{C}_5$, 
the  instances of  LETS structures in the $(7,1)$ and $(7,3)$ classes are 37 and 250,466, respectively. 
Following a similar trend, this code probably has millions of instances of LETSs in the $(7,5)$ class, and it would take the search algorithm a long time to find them.  
These trapping sets, however, would not be dominant in the presence of those in the $(7,1)$ and $(7,3)$ classes. This highlights the importance of having a flexible and 
efficient search algorithm that can be easily tailored to different values of  $a_{max}$ and $b_{max}$. 

Multiplicities of instances of LETSs and FEASs in all the nonempty classes within the range $a \leq 10$ and $b \leq 3$, for  Codes $\mathcal{C}_6$, $\mathcal{C}_7$ and $\mathcal{C}_8$  are listed in Table \ref{tab:3mod}.
\begin{table}[]
\centering
\caption{Multiplicities of LETS and FEAS Structures of Codes $\mathcal{C}_6$, $\mathcal{C}_7$ and $\mathcal{C}_8$ within the range $a \leq 10$ and $b \leq 3$}
\label{tab:3mod}
\begin{tabular}{||c|c|c|||c|c|||c|c|| }
\cline{1-7}
 &\multicolumn{2}{c|||}{$\mathcal{C}_6$}& \multicolumn{2}{c|||}{$\mathcal{C}_7$} &\multicolumn{2}{c||}{$\mathcal{C}_8$}\\
\cline{1-7}
$(a,b)$&Total &Total& 
 Total &Total&  Total&Total \\
class&LETS&FEAS&LETS&FEAS&LETS&FEAS\\
\cline{1-7}
(3,3)&169&159&132&126&165&153\\
\cline{1-7}
(4,2)&5&5&3&3&6&6\\
\cline{1-7}
(5,3)&214&180&90&86&100&92\\
\cline{1-7}
(6,2)&20&20&2&2&5&5\\
\cline{1-7}
(7,3)&418&374&110&108&116&110\\
\cline{1-7}
(8,2)&24&23&1&1&3&3\\
\cline{1-7}
(9,1)&1&1&0&0&0&0\\
\cline{1-7}
(9,3)&1127&992&195&166&169&161\\
\cline{1-7}
(10,2)&71&69&15&15&4&4\\
\cline{1-7}
$dpl$&\multicolumn{2}{c|||}{20 sec.}&\multicolumn{2}{c|||}{18 sec.}& \multicolumn{2}{c||}{20 sec.}\\
\cline{1-7}
$dot$&\multicolumn{2}{c|||}{948 sec.}&\multicolumn{2}{c|||}{918 sec.}& \multicolumn{2}{c||}{942 sec.}\\
\cline{1-7}
\end{tabular}
\end{table} 
These are three $(3,6)$-regular LDPC codes, all with $g = 6$, and with short block lengths: 504, 816, and 1008, respectively. 
The search algorithm to find all the instances of LETSs in  the range $a \leq 10$ and $b \leq 3$, is based on the information provided in Table \ref{tab:3,6c3}. 
For each code, all the  instances of  LETSs and FEASs in this range are enumerated in less than 20 seconds. In comparison, the runtime for dot-based search is larger by a factor of about 45.
It is worth mentioning that \cite{zhang2011efficient} only reports the multiplicity of FEAS structures in the $(3,3), (4,2)$ and $(5,3)$ classes of $\mathcal{C}_7$ and $\mathcal{C}_8$. 
The results of \cite{zhang2011efficient} match the results reported in the first three rows of Table \ref{tab:3mod}.

Table \ref{tab:3long1} lists the multiplicity of  instances of LETSs and FEASs in all the nonempty classes within the range $a \leq 12$ and $b \leq 5$, for Codes $\mathcal{C}_7$ and $\mathcal{C}_8$.
\begin{table*}[]
\centering
\begin{threeparttable}
\caption{Multiplicities of LETS and FEAS structures of Codes $\mathcal{C}_7$ and $\mathcal{C}_8$ within the range $a \leq 12$ and $b \leq 5$}
\setlength{\tabcolsep}{2pt}
\label{tab:3long1}
\begin{tabular}{||c|c|c|c|c||c|c|c|||c|c|c|c||c|c|c|| }
\cline{1-15}
 &\multicolumn{7}{c|||}{$\mathcal{C}_7$} &\multicolumn{7}{c||}{$\mathcal{C}_8$}\\
\cline{1-15}
$(a,b)$ &\multicolumn{4}{c||}{Primes}&Total &Total &\cite{kyung2012finding} &\multicolumn{4}{c||}{Primes}&Total&Total &\cite{kyung2012finding}  \\
\cline{2-5}\cline{9-12}
class&$s_3$&$s_4$&$s_5$&$s_6$&LETS&FEAS&FAS&$s_3$&
$s_4$&$s_5$&$s_6$&LEAS&FEAS&FAS\\
\cline{1-15}
(3,3)&132&-&-&-&132&126&126&165&-&-&-&165&153&153\\
\cline{1-15}
(4,2)&3&-&-&-&3&3&3&6&-&-&-&6&6&6\\
\cline{1-15}
(4,4)&-&1491&-&-&1491&1350&1350&-&1252&-&-&1252&1130&1130\\
\cline{1-15}
(5,3)&-&90&-&-&90&86&86&-&100&-&-&100&\textbf{92}&\textbf{8388} \\
\cline{1-15}
(5,5)&-&-&9169&-&9169&7286&7286&-&-&10019&-&10019&8388&8388\\
\cline{1-15}
(6,2)&0&2&-&-&2&2&2&1&4&-&-&5&5&5\\
\cline{1-15}
(6,4)&1084&1379&-&-&2463&\textbf{2235}&\textbf{2236}&1011&874&-&-&1885&\textbf{1666}&\textbf{1668}\\
\cline{1-15}
(7,3)&54&56&-&-&110&108&108&74&42&-&-&116&\textbf{110}&\textbf{111}\\
\cline{1-15}
(7,5)&13372&15957&4077&-&33406&26688&nr \tnote{\textdagger}&13462&12148&4126&-&29736&24876&24876\\
\cline{1-15}
(8,2)&1&0&-&-&1&1&1&2&1&-&-&3&3&3\\
\cline{1-15}
(8,4)&2145&2020&34&-&4199&3769&nr&1793&1136&32&-&2961&2617&nr\\
\cline{1-15}
(9,3)&111&89&0&-&195&166&nr&122&47&-&-&169&161&nr\\
\cline{1-15}
(9,5)&39306&39304&5768&-&84378&67558&nr&35996&23149&4642&-&63787&53596&nr\\
\cline{1-15}
(10,2)&8&7&0&-&15&15&15&3&1&0&-&4&4&4\\
\cline{1-15}
(10,4)&4389&3446&130&-&7965&7277&nr&3163&1609&97&-&4869&4421&nr\\
\cline{1-15}
(11,3)&184&103&3&-&290&270&nr&154&64&1&-&219&209&nr\\
\cline{1-15}
(11,5)&104491&95600&11066&116&211273&168214&nr&78596&47962&7619&59&134236&112213&nr\\
\cline{1-15}
(12,2)&10&5&-&-&15&15&15&5&1&0&-&6&6&6\\
\cline{1-15}
(12,4)&10163&7207&467&1&17838&15970&nr&6140&2669&231&1&9041&8168&nr\\
\cline{1-15}
$dpl$&\multicolumn{7}{c|||}{16 min.} &\multicolumn{7}{c||}{14 min.}\\
\cline{1-15}
$dot$&\multicolumn{7}{c|||}{-} &\multicolumn{7}{c||}{-}\\
\cline{1-15}
\end{tabular}
\begin{tablenotes}
    \item[\textdagger] ``nr" stands for \textit{not reported}.
  \end{tablenotes}
\end{threeparttable}
\end{table*}
It takes the proposed search algorithm 16 and 14 minutes to find all the  instances of LETSs in the range of $a \leq 12$, $b \leq 5$, for Codes $\mathcal{C}_7$ and $\mathcal{C}_8$, respectively. 
This is versus about 20 seconds that takes the algorithm to find all the instances of LETSs of these codes in the range of $a \leq 10$, $b \leq 3$. 
In Table \ref{tab:3long1}, we have also reported the multiplicity of  instances of FASs obtained by the exhaustive search algorithm of \cite{kyung2012finding}. 
There are only four cases in Table \ref{tab:3long1}, where the multiplicity of FEAS classes obtained here differ from the multiplicity of FAS classes reported in \cite{kyung2012finding}. 
These cases are boldfaced in the table. We believe that the source of these differences are either typographical error, or some FASs that are not elementary. In addition, it took the exhaustive search of
\cite{kyung2012finding} more than 5 hours to generate the fully absorbing sets of each of the two codes on a desktop
computer with an Intel Core2 2.4 GHz CPU with 2GB memory \cite{kyung2012finding}.

Table \ref{tab:3long2} lists the multiplicity of instances of LETSs in all the nonempty classes within the range $a \leq 12$ and $b \leq 5$, for Codes $\mathcal{C}_9$ and $\mathcal{C}_{10}$. 
These codes are $(3,6)$-regular, both with $g=6$, and with medium and large block lengths: $4000$ and $20,000$, respectively. The search algorithm to find 
instances of LETSs in the range $a \leq 12$ and $b \leq 5$, is based on the information provided in Table \ref{tab:3,6c4}. For each code, all the instances of LETSs in this range are 
enumerated in less than $9$ minutes. From these examples, it can be seen that the dpl-based search algorithm is able to enumerate the instances of 
dominant LETSs of codes with large block length in a guaranteed fashion, efficiently.  Interestingly, despite the large difference between the block lengths 
of the two codes, the runtimes of the dpl-based search for the two codes are not very different.  In comparison, for the dot-based algorithm to search  the instances of LETS structures of these codes with $a \leq 12$ and $b \leq 5$, in an exhaustive fashion, 
the algorithm requires the enumeration of $s_{10}$ and $s_{11}$ structures. This  imposes huge computational complexity and memory requirements on the dot-based search algorithm (more than a day of runtime and 8-GB memory). 
On the other hand, the dpl-search only needs to enumerate prime cycles up to $s_6$. Moreover, in the dpl-based search, for finding the instances of potentially dominant LETSs, 
most of the expansion techniques with relatively high complexity (large $m$) are needed to be applied to the instances of LETS in classes with small $b$, where such classes happen to be usually empty.
For example, in $\mathcal{C}_{10}$, there is no LETS in the $(4,2)$, $(5,1)$, $(6,2)$, $(7,1)$ and $(8,2)$ classes.  

\begin{table}[]
\centering
\caption{Multiplicities of LETS and FEAS structures of Codes $\mathcal{C}_9$ and $\mathcal{C}_{10}$ within the range $a \leq 12$ and $b \leq 5$}
\label{tab:3long2}
\begin{tabular}{||c|c|c|c|c||c|c|||c|c|c|c||c|c|| }
\cline{1-13}
 &\multicolumn{6}{c|||}{$\mathcal{C}_9$}&\multicolumn{6}{c||}{$\mathcal{C}_{10}$}\\
\cline{1-13}
$(a,b)$ &\multicolumn{4}{c||}{Primes}&Total& Total &\multicolumn{4}{c||}{Primes}& Total& Total \\
\cline{2-5}\cline{8-11}
class&$s_3$&$s_4$&$s_5$&$s_6$&LETS&FEAS&$s_3$&$s_4$&$s_5$&$s_6$&LETS&FEAS\\
\cline{1-13}
(3,3)&171&-&-&-&171&169&161&-&-&-&161&161\\
\cline{1-13}
(4,2)&1&-&-&-&1&1&-&-&-&-&0&0\\
\cline{1-13}
(4,4)&-&1219&-&-&1219&1192&-&1260&-&-&1260&1256\\
\cline{1-13}
(5,3)&-&21&-&-&21&21&-&2&-&-&2&2\\
\cline{1-13}
(5,5)&-&-&9935&-&9935&9526&-&-&10046&-&10046&9952\\
\cline{1-13}
(6,4)&278&198&-&-&476&456&49&46&-&-&95&95\\
\cline{1-13}
(7,3)&5&5&-&-&10&10&0&0&-&-&0&0\\
\cline{1-13}
(7,5)&3708&2911&1042&-&7661&7359&729&609&202&-&1540&1528\\
\cline{1-13}
(8,4)&109&74&0&-&183&173&2&2&0&-&4&4\\
\cline{1-13}
(9,3)&2&2&0&-&4&4&0&0&0&-&0&0\\
\cline{1-13}
(9,5)&2316&1360&325&-&4001&3847&79&71&17&-&167&165\\
\cline{1-13}
(10,4)&45&29&0&-&74&71&1&0&0&-&1&1\\
\cline{1-13}
(11,3)&0&1&0&-&1&1&0&0&0&-&0&0\\
\cline{1-13}
(11,5)&1371&781&123&3&2278&2189&2&6&0&0&8&8\\
\cline{1-13}
(12,4)&26&10&0&0&36&36&0&0&0&0&0&0\\
\cline{1-13}
$dpl$&\multicolumn{6}{c|||}{7 min.} &\multicolumn{6}{c||}{9 min.}\\
\cline{1-13}
$dot$&\multicolumn{6}{c|||}{-} &\multicolumn{6}{c||}{-}\\
\cline{1-13}
\end{tabular}
\end{table}

Multiplicities of instances of LETSs and FEAS in all the nonempty classes within the range $a \leq 12$ and $b \leq 4$, for Codes $\mathcal{C}_{11}$, $\mathcal{C}_{12}$ and $\mathcal{C}_{13}$  are listed in Table \ref{tab:38}. 
These codes have $d_\mathrm{v}=3$ and $g=8$. The search algorithm to find all the instances of LETSs in the range of interest is based on the information provided in Table \ref{tab:3,8c2}.
\begin{table*}[]
\centering
\begin{threeparttable}
\caption{Multiplicities of LETS and FEAS structures of Codes $\mathcal{C}_{11}$, $\mathcal{C}_{12}$ and $\mathcal{C}_{13}$ within the range $a \leq 12$ and $b \leq 4$}
\setlength{\tabcolsep}{1pt}
\label{tab:38}
\begin{tabular}{||c|c|c|c||c|c|c|||c|c|c||c|c|c|||c|c|c||c|c|c|| }
\cline{1-19}
 &\multicolumn{6}{c|||}{$\mathcal{C}_{11}$} &\multicolumn{6}{c|||}{$\mathcal{C}_{12}$}&\multicolumn{6}{c||}{$\mathcal{C}_{13}$}\\
\cline{1-19}
$(a,b)$&\multicolumn{3}{c||}{Primes}&Total &Total& \cite{kyung2012finding}&\multicolumn{3}{c||}{Primes}&Total&Total&\cite{kyung2012finding}&\multicolumn{3}{c||}{Primes}&Total&Total&\cite{kyung2012finding}\\
\cline{2-4}\cline{8-10}\cline{14-16}
class& $s_4$&$s_5$&$s_6$&LETS&FEAS&FAS& $s_4$&$s_5$&$s_6$&LETS&FEAS&FAS& $s_4$&$s_5$&$s_6$&LETS&FEAS&FAS\\
\cline{1-19}
(4,4)&465&-&-&465&0&0&802&-&-&802&760&760&2&-&-&2&2&2\\
\cline{1-19}
(5,3)&155&-&-&155&155&155&14&-&-&14&14&14&0&-&-&0&0&0\\
\cline{1-19}
(6,4)&930&-&-&930&0&0&985&-&-&985&849&849&1&-&-&1&1&1\\
\cline{1-19}
(7,3)&930&-&-&930&0&0&57&-&-&57&47&47&0&-&-&0&0&0\\
\cline{1-19}
(8,2)&465&-&-&465&465&465&5&-&-&5&4&4&0&-&-&0&0&0\\
\cline{1-19}
(8,4)&4650&465&-&5115&1395&1395&2414&215&-&2629&2270&2270&2&47&-&49&46&nr\\
\cline{1-19}
(9,1)&0&-&-&0&0&0&1&-&-&1&1&1&0&-&-&0&0&0\\
\cline{1-19}
(9,3)&1860&0&-&1860&930&930&152&4&-&156&146&146&0&1&-&1&1&nr\\
\cline{1-19}
(10,2)&1395&0&-&1395&1395&1395&6&0&-&6&6&6&0&0&-&0&0&0\\
\cline{1-19}
(10,4)&25575&3720&-&29295&17670&17670&7311&1557&-&8868&7399&nr&0&173&-&173&148&nr\\
\cline{1-19}
(11,3)&6200&0&-&6200&5270&5270&551&77&-&628&577&nr&0&9&-&9&9&nr\\
\cline{1-19}
(12,2)&930&0&-&930&930&930&25&2&-&27&26&26&0&0&-&0&0&0\\
\cline{1-19}
(12,4)&144615&30225&6045&180885&115320&nr\tnote{\textdagger}&24857&7062&630&32549&
26936&nr&1&461&50&512&466&nr \\
\cline{1-19}
$dpl$&\multicolumn{6}{c|||}{442 sec. / 28 sec. \tnote{*}}& \multicolumn{6}{c|||}{561 sec. / 38 sec. }&\multicolumn{6}{c||}{157 sec. / 17 sec. }\\
\cline{1-19}
$dot$&\multicolumn{6}{c|||}{- / 392 sec.}& \multicolumn{6}{c|||}{- / 1563 sec. }&\multicolumn{6}{c||}{- / 1466 sec. }\\
\cline{1-19}
\end{tabular}
\begin{tablenotes}
    \item[\textdagger] ``nr" stands for \textit{not reported}.
\item[*] Runtime to find all the instances of LETS structures reported in this table/runtime to find instances in the first 10 classes reported in this table.
  \end{tablenotes}
\end{threeparttable}
\end{table*}

Code $\mathcal{C}_{11}$ is the Tanner $(155,64)$ code \cite{michael2001class} with $d_\mathrm{c}=5$. The error-prone structures of this code have been widely investigated~\cite{mehdi2012},\cite{zhang2013controlling},\cite{kyung2012finding},\cite{declercq2013finite}, \cite{mehdi2014}. It is well-known that the LETS structures in the  $(8,2)$ and $(10,2)$ classes are the dominant LETS structures of this code. It takes the dpl-based search algorithm 442 seconds 
to find all the instances of LETS structures reported in Table \ref{tab:38} for this code. However, it took the dot-based algorithm 45 minutes to find almost all the instances of LETS structures of $\mathcal{C}_{11}$ within this range (prime structures of size $10$ are not considered), and more than a day to find the exhaustive list of instances of LETS structures in this table.
If we limit the range of search to $a \leq 10$ and $b \leq 4$, it takes the proposed algorithm only 28 seconds to find all the instances of LETS structures of $\mathcal{C}_{11}$ within this range, based on the information provided in Table \ref{tab:3,8c1}. However, it took the dot-based algorithm $392$ seconds to find all the instances of LETS structures of $\mathcal{C}_{11}$ within the range $a \leq 10$ and $b \leq 4$.
   
Codes $\mathcal{C}_{12}$ and $\mathcal{C}_{13}$ are $(3,6)$-regular LDPC codes constructed by the PEG algorithm \cite{hu2005regular}. 
The dpl-based search algorithm finds all the instances of LETSs for codes $\mathcal{C}_{12}$ and $\mathcal{C}_{13}$ within the range $a \leq 12$ and $b \leq 4$, in  561 and 157 seconds, respectively. 
Again, the exhaustive search of  LETSs for both codes within the range $a \leq 12, b \leq 4$, is out of the reach of the dot-based search algorithm.
If we limit the range of search to $a \leq 10$ and $b \leq 4$, it takes the proposed algorithm only 38 and 17 seconds to find all the instances of LETSs of $\mathcal{C}_{12}$ and $\mathcal{C}_{13}$ within this range, respectively.  It, however, takes the dot-based algorithm $1563$ and $1466$ seconds to find all the instances of LETSs of $\mathcal{C}_{12}$ and $\mathcal{C}_{13}$, respectively, within the range $a \leq 10$ and $b \leq 4$. 

In Table \ref{tab:38}, we have also reported the multiplicity of  instances of FASs obtained by the exhaustive search algorithm of \cite{kyung2012finding}. It can be seen that the results of \cite{kyung2012finding} match those obtained here. One should, however, note that while the range of classes reported in Table \ref{tab:38} is much larger than the range of classes reported in \cite{kyung2012finding}, it took the exhaustive
search of \cite{kyung2012finding} more than 5 hours, compared to only a few minutes for the proposed algorithm, to generate the fully
absorbing sets of each of these codes on a desktop computer with an
Intel Core2 2.4 GHz CPU with 2GB memory \cite{kyung2012finding}.

Code $\mathcal{C}_{14}$ is the Margulis $(2640, 1320)$  code~\cite{margulis}, with $g=8$. It is well-known that the most dominant LETS structures of this code are in (12,4) and (14,4) LETS classes. 
The proposed search algorithm finds all the instances of LETSs in Table \ref{tab:marg} for this code in about 42 minutes. It is worth mentioning that \cite{kyung2012finding} only reports the multiplicity of FAS structures in the $(4,4), (5,5)$ classes of $\mathcal{C}_{14}$. 
The results of \cite{kyung2012finding} match the results reported in the first two rows of Table \ref{tab:marg}.

\begin{table}[]
\centering
\caption{Multiplicities of LETS and FEAS structures of  $\mathcal{C}_{14}$ within the range $a \leq 14$ and $b \leq 6$}
\label{tab:marg}
\begin{tabular}{||c|c|c|c||c|c|| }
\cline{1-6}
 &\multicolumn{5}{c||}{$\mathcal{C}_{14}$}\\
\cline{1-6}
$(a,b)$&\multicolumn{3}{c||}{Primes}&Total &Total\\
\cline{2-4}
class& $s_4$&$s_5$&$s_6$&LETS&FEAS\\
\cline{1-6}
(4,4)&1320&-&-&1320&1320\\
\cline{1-6}
(5,5)&-&11088&-&11088&11088\\
\cline{1-6}
(6,6)&-&-&106920&106920&104280\\
\cline{1-6}
(7,5)&5280&2640&-&7920&7920\\
\cline{1-6}
(8,6)&80520&51480&14520&146520&138600\\
\cline{1-6}
(9,5)&0&2640&-&2640&2640\\
\cline{1-6}
(10,6)&100320&39600&7920&147840&129360\\
\cline{1-6}
(11,5)&5280&0&0&5280&0\\
\cline{1-6}
(12,4)&1320&0&0&1320&1320\\
\cline{1-6}
(12,6)&73040&52800&9240&135080&128480\\
\cline{1-6}
(13,5)&2640&0&0&2640&0\\
\cline{1-6}
(14,4)&1320&0&0&1320&1320\\
\cline{1-6}
$dpl$&\multicolumn{5}{c||}{42 min.}\\
\cline{1-6}
$dot$&\multicolumn{5}{c||}{-}\\
\cline{1-6}
\end{tabular}
\end{table}

Table \ref{tab:3hi} lists the multiplicity of instances of LETSs, EASs and FEASs in all the nonempty classes within the range $a \leq 6$ and $b \leq 4$, for Codes $\mathcal{C}_{15}$-$\mathcal{C}_{19}$. 
\begin{table*}[]
\centering
\caption{Multiplicities of LETS, EAS and FEAS structures of Codes  $\mathcal{C}_{15}$, $\mathcal{C}_{16}$, $\mathcal{C}_{17}$, $\mathcal{C}_{18}$, $\mathcal{C}_{19}$ within the range $a \leq 6$ and $b \leq 4$}
\setlength{\tabcolsep}{1pt}
\label{tab:3hi}
\begin{tabular}{||c|c|c|c|||c|c|c|||c|c|c|||c|c|c|||c|c|c|| } 
\cline{1-16}
&\multicolumn{3}{c|||}{$\mathcal{C}_{15}$} &\multicolumn{3}{c|||}{$\mathcal{C}_{16}$}&\multicolumn{3}{c|||}{$\mathcal{C}_{17}$}&\multicolumn{3}{c|||}{$\mathcal{C}_{18}$}&\multicolumn{3}{c||}{$\mathcal{C}_{19}$}\\
\cline{1-16}
$(a,b)$ &Total&Total&Total &Total&Total&Total &Total&Total&Total&Total&Total&Total&Total&Total&Total\\
class&LETS&EAS&FEAS&LETS&EAS&FEAS&LETS&EAS&FEAS&LETS&EAS&FEAS&LETS&EAS&FEAS\\
\cline{1-16}
(4,4)&294&294&0&0&0&0&0&0&0&0&0&0&0&0&0\\
\cline{1-16}
(5,4)&1176&0&0&0&0&0&0&0&0&12996&12996&0&0&0&0\\
\cline{1-16}
(6,2)&588&588&588&0&0&0&0&0&0&0&0&0&0&0&0\\
\cline{1-16}
(6,4)&4116&1764&0&30420&30420&0&23120&23120&18496&110466&32490&25992&58190&58190&46552\\
\cline{1-16}
$dpl$&\multicolumn{3}{c|||}{1 min.}&\multicolumn{3}{c|||}{2 min.}&\multicolumn{3}{c|||}{4 min.}&\multicolumn{3}{c|||}{7 min.}&\multicolumn{3}{c||}{11 min.}\\
\cline{1-16}
$dot$&\multicolumn{3}{c|||}{1 min.}&\multicolumn{3}{c|||}{2 min.}&\multicolumn{3}{c|||}{4 min.}&\multicolumn{3}{c|||}{7 min.}&\multicolumn{3}{c||}{11 min.}\\
\cline{1-16}
\end{tabular}
\end{table*}
These are five high-rate array-based codes with $g=6$, $d_\mathrm{v}=4$, and  $d_\mathrm{c}=7, 13, 17, 19, 23$, respectively. It was shown in \cite{dolecek2010analysis} 
that array-based codes with $d_\mathrm{v}=4$ and $d_\mathrm{c}>7$, do not contain any absorbing sets (ASs) in the $(4,4)$ class. This is matched with our results for $(4,4)$ EASs. 
Moreover, from Table \ref{tab:3hi}, one can see that the array-based code with $d_\mathrm{v}=4$ and $d_\mathrm{c}=7$, has $294$ EASs, but no FEAS in the $(4,4)$ class. 
Lemmas 6 and 7  in \cite{dolecek2010analysis} indicate that for the array-based codes with $d_\mathrm{v}=4$ and $d_\mathrm{c} >19$, there is no $(5,b)$ and $(6,2)$ absorbing sets. 
Based on Table V of \cite{mehdi2014}, for a variable-regular code with $d_\mathrm{v}=4$ and $g=6$, the $(5,0)$, $(5,2)$ and $(5,4)$ classes are the only classes 
with size 5 that can have elementary absorbing sets (EASs). The results of Table \ref{tab:3hi} thus reveal that $\mathcal{C}_{18}$ with $d_\mathrm{c}=19$ is the only array-based code with EASs of size 5,
and that the only class of size 5 for which this code has some EASs is the $(5,4)$ class.  
Also, our results in Table~\ref{tab:3hi}  complement those of \cite{dolecek2010analysis}, and demonstrate that the array-based codes with $d_\mathrm{v}=4$ and $d_\mathrm{c} > 7$ do not contain any EASs in the $(6,2)$ class. 
Since the {\em dot} expansion is the only expansion technique used in the dpl-based search algorithm for the array-based codes, this algorithm is the 
same as the dot-based search algorithm for these codes.

Table \ref{tab:4hi} lists the multiplicity of instances of LETSs, EASs and FEASs in all the nonempty classes within the range $a \leq 8$ and $b \leq 6$, for Codes $\mathcal{C}_{20}$ and $\mathcal{C}_{21}$. These are two high-rate codes with $d_\mathrm{v}=4$, $g=6$ and block lengths $n=4096$ and $n=16,383$, respectively. The search algorithm to find all the instances of LETSs in the range $a \leq 8$ and $b \leq 6$, is based on the information provided in Table \ref{tab:4,6c2}. 
\begin{table}[]
\centering
\caption{Multiplicities of LETS, EAS and FEAS structures of Codes $\mathcal{C}_{20}$ and $\mathcal{C}_{21}$ within the range $a \leq 8$ and $b \leq 6$}
\label{tab:4hi}
\begin{tabular}{||c| c| c||c|c|c|||c|c||c|c|c|| }
\cline{1-11}
&\multicolumn{5}{c|||}{$\mathcal{C}_{20}$} &\multicolumn{5}{c||}{$\mathcal{C}_{21}$}\\
\cline{1-11}
$(a,b)$&\multicolumn{2}{c||}{Primes}&Total &Total&Total&\multicolumn{2}{c||}{Primes}&Total &Total&Total\\
\cline{2-3}\cline{7-8}
class& $s_3$&$s_4$&LETS&EAS&FEAS& $s_3$&$s_4$&LETS&EAS&FEAS\\
\cline{1-11}
(3,6)&43531&-&43531&0&0&119233&-&119233&0&0\\
\cline{1-11}
(4,4)&15&-&15&15&13&10&-&10&10&6\\
\cline{1-11}
(4,6)&21383&-&21383&0&0&29001&-&29001&0&0\\
\cline{1-11}
(5,4)&9&-&9&7&4&5&-&5&1&1\\
\cline{1-11}
(5,6)&14622&-&14622&0&0&9796&-&9796&0&0\\
\cline{1-11}
(6,4)&10&-&10&7&5&2&-&2&2&0\\
\cline{1-11}
(6,6)&11553&1403&12956&913&246&3815&530&4345&307&161\\
\cline{1-11}
(7,4)&10&0&10&6&5&2&0&2&0&0\\
\cline{1-11}
(7,6)&10941&1837&12778&1779&476&1859&337&2196&288&167\\
\cline{1-11}
(8,4)&12&1&13&11&6&1&0&1&0&0\\
\cline{1-11}
(8,6)&11479&2659&14138&2612&741&970&188&1158&191&102\\
\cline{1-11}
$dpl$&\multicolumn{5}{c|||}{68 min.} &\multicolumn{5}{c||}{133 min.}\\
\cline{1-11}
$dot$&\multicolumn{5}{c|||}{-} &\multicolumn{5}{c||}{-}\\
\cline{1-11}
\end{tabular}
\end{table}
All the instances of LETS structures in this range for $\mathcal{C}_{20}$ and $\mathcal{C}_{21}$ are enumerated in about 68 and 133 minutes, respectively. 
For the dot-based search algorithm to find all the instances of LETSs in the range of interest in a guaranteed fashion, it requires to enumerate $s_6$ structures. 
This enumeration alone imposes huge computational complexity and memory requirements to the dot-based search algorithm (more than a day of runtime and 8-GB of memory).

Table \ref{tab:4c1} lists the multiplicity of instances of LETSs, EASs and FEASs in all the nonempty classes within the range $a \leq 10$ and $b \leq 10$, for $\mathcal{C}_{22}$ and $\mathcal{C}_{23}$. 
These are $(4,8)$-regular LDPC codes, with $g=6$, and with block lengths of $n=4000$ and $n=8000$, respectively. The search algorithm to find all the instances of LETSs in this range is based on the information provided in Table \ref{tab:4,6c3}.
\begin{table*}[]
\centering
\caption{Multiplicities of LETS, EAS and FEAS structures of Codes $\mathcal{C}_{22}$ and $\mathcal{C}_{23}$ within the range $a \leq 10$ and $b \leq 10$}
\setlength{\tabcolsep}{2pt}
\label{tab:4c1}
\begin{tabular}{||c|c|c|c||c|c|c|||c|c|c||c|c|c|| }
\cline{1-13}
&\multicolumn{6}{c|||}{$\mathcal{C}_{22}$} &\multicolumn{6}{c||}{$\mathcal{C}_{23}$}\\
\cline{1-13}
$(a,b)$&\multicolumn{3}{c||}{Primes}&Total &Total&Total&\multicolumn{3}{c||}{Primes}&Total &Total&Total\\
\cline{2-4}\cline{8-10}
class& $s_3$&$s_4$&$s_5$&LETS&EAS&FEAS& $s_3$&$s_4$&$s_5$&LETS&EAS&FEAS\\
\cline{1-13}
(3,6)&1563&-&-&1563&0&0&1620&-&-&1620&0&0\\
\cline{1-13}
(4,6)&91&-&-&91&0&0&55&-&-&55&0&0\\
\cline{1-13}
(4,8)&-&24269&-&24269&0&0&-&24107&-&24107&0&0\\
\cline{1-13}
(5,6)&2&-&-&2&0&0&2&-&-&2&0&0\\
\cline{1-13}
(5,8)&4080&560&-&4640&0&0&2011&292&-&2303&0&0\\
\cline{1-13}
(5,10)&-&-&402513&402513&0&0&-&-&406289&406289&0&0\\
\cline{1-13}
(6,8)&495&28&65&588&0&0&164&5&27&196&0&0\\
\cline{1-13}
(6,10)&111540&74004&-&185544&0&0&50073&37172&-&96525&0&0\\
\cline{1-13}
(7,8)&51&2&0&53&0&0&10&0&0&10&0&0\\
\cline{1-13}
(7,10)&33205&6589&-&39794&0&0&9009&1689&-&10698&0&0\\
\cline{1-13}
(8,8)&5&0&-&5&0&0&0&0&-&0&0&0\\
\cline{1-13}
(8,10)&6248&754&4&7006&0&0&1030&84&2&1116&0&0\\
\cline{1-13}
(9,8)&1&0&-&1&0&0&0&0&-&0&0&0\\
\cline{1-13}
(9,10)&1038&107&0&1145&0&0&96&5&0&101&0&0\\
\cline{1-13}
(10,10)&170&15&0&185&1&1&13&0&0&13&0&0\\
\cline{1-13}
$dpl$&\multicolumn{5}{c}{63 min.}& &\multicolumn{5}{c}{51 min.}&\\
\cline{1-13}
$dot$&\multicolumn{5}{c}{-}& &\multicolumn{5}{c}{-}&\\
\cline{1-13}
\end{tabular}
\end{table*}
It can be observed that $\mathcal{C}_{22}$ and $\mathcal{C}_{23}$ have one and no instance of EAS in this range, respectively. 
The dominant LETSs of  $\mathcal{C}_{23}$, which are included in those reported in Table \ref{tab:4c1}, are used in \cite{sina1} to 
accurately estimate the error floor under quantized iterative decoders. The accurate estimation reveals that error-prone 
structures of this code are not absorbing sets. If we limit the range of interest to $a \leq 9$ and $b \leq 8$, it  takes the proposed search 
algorithm only 4 minutes to find all the LETSs of $\mathcal{C}_{22}$ within this range. 
It is worth mentioning that to find all the instances of LETSs of $\mathcal{C}_{22}$ within the range $a \leq 9$ and $b \leq 8$, by the dot-based search algorithm, 
all the instances of $s_6$, $s_7$ and $s_8$, need to be enumerated. It takes more than 7 hours to only enumerate instances of $s_7$  in the Tanner graph of this code,
and enumerating instances of $s_8$ is not even feasible on the desktop computer. 

Table \ref{tab:56} lists the multiplicity of instances of LETSs, EASs and FEASs in all the nonempty classes within the range $a \leq 9$ and $b \leq 11$, for $\mathcal{C}_{24}$. 
This is a $(5,10)$-regular LDPC code with $g=6$, and block length $n=816$. 
\begin{table}[]
\centering
\caption{Multiplicities of LETS, EAS and FEAS Structures of Code $\mathcal{C}_{24}$ within the range $a \leq 9$ and $b \leq 11$}
\label{tab:56}
\begin{tabular}{||c|c|c||c|c|c|| }
\cline{1-6}
 &\multicolumn{5}{c||}{$\mathcal{C}_{24}$}\\
\cline{1-6}
$(a,b)$&\multicolumn{2}{c||}{Primes}&Total &Total&Total\\
\cline{2-3}
class& $s_3$&$s_4$&LETS&EAS&FEAS\\
\cline{1-6}
(3,9)&7876&-&7876&0&0\\
\cline{1-6}
(4,8)&30&-&30&30&30\\
\cline{1-6}
(4,10)&8446&-&8446&0&0\\
\cline{1-6}
(5,9)&98&-&98&36&34\\
\cline{1-6}
(5,11)&14873&-&14873&0&0\\
\cline{1-6}
(6,8)&1&-&1&1&1\\
\cline{1-6}
(6,10)&321&-&321&118&116\\
\cline{1-6}
(7,9)&5&-&5&1&1\\
\cline{1-6}
(7,11)&1132&9&1141&371&337\\
\cline{1-6}
(8,8)&1&0&1&1&1\\
\cline{1-6}
(8,10)&31&0&31&20&18\\
\cline{1-6}
(9,9)&4&0&4&1&1\\
\cline{1-6}
(9,11)&165&2&167&76&68\\
\cline{1-6}
$dpl$&\multicolumn{5}{c||}{6 min.} \\
\cline{1-6}
$dot$&\multicolumn{5}{c||}{-} \\
\cline{1-6}
\end{tabular}
\end{table}
All the instances of LETSs of this code in the range $a \leq 9$ and $b \leq 11$, are enumerated in about $6$ minutes. 
It can be seen that for this code, as a result of relatively large $d_\mathrm{v}$, no LETS with $b < a$ exists in the range of interest. 
Moreover, there is no FEAS for this code in the range $b < a$. It is reported in \cite{sina1} that using dominant LETSs of $\mathcal{C}_{24}$, included in Table \ref{tab:56}, 
one can obtain a more accurate estimation of the error floor in comparison to using only EASs. 

To the best of our knowledge, there are only two exhaustive search algorithms in the literature for finding the error-prone structures of variable-regular LDPC codes 
which cover a relatively wide range of trapping set classes, rates, block lengths  and variable degrees: the algorithm of \cite{kyung2012finding}, and that of \cite{mehdi2014}, \cite{yoones2015}. 
The work in \cite{kyung2012finding} is limited to only fully absorbing sets (FASs). Also, the regular LDPC codes which were studied in \cite{kyung2012finding} are all codes with  $d_\mathrm{v}=3$, short  block lengths $(n <2000)$ and rates equal to or less than $0.5$. The variable-regular LDPC codes studied in \cite{kyung2012finding} are Codes $\mathcal{C}_{7}$, $\mathcal{C}_{8}$, $\mathcal{C}_{11}$, $\mathcal{C}_{12}$, $\mathcal{C}_{13}$ and $\mathcal{C}_{14}$ in this paper. While, it takes the dpl-based search algorithm less than 42 minutes to find all the dominant 
LETSs and FEASs of each of these codes, the runtime to find FASs of these codes in \cite{kyung2012finding} (in smaller ranges of $a$ and $b$ values than those considered in this work) 
ranges from 5 to 48 hours using a desktop computer with an Intel core2 2.4-GHz CPU with 2-GB memory. Compared to the LSS- or dot-based algorithm of \cite{mehdi2014}, \cite{yoones2015}, 
as we discussed in the tables, the runtime and memory requirements of dpl-based search are up to three orders of magnitude smaller. 

To the best of our knowledge, the most versatile algorithm for finding LETSs of LDPC codes is that of \cite{mehdi2012}. 
This algorithm, however, is not exhaustive in the sense that it does not provide any guarantee that all the instances of $(a,b)$ LETS
structures with $a$ and $b$ values within a given range of interest are found. For example, Codes ${\cal C}_{11}$, ${\cal C}_{12}$, 
and ${\cal C}_{14}$ have also been studied in \cite{mehdi2012}, and their LETSs (EASs) are presented in Tables II, I and III 
of \cite{mehdi2012}, respectively. The comparison of the multiplicity of LETSs in different classes reported in this paper, and 
that reported in \cite{mehdi2012}, for each of these codes, reveals that for each code, there are some classes for which the multiplicity
reported in  \cite{mehdi2012} is smaller than that of this work. This means that for such classes, the algorithm of \cite{mehdi2012}
has not been able to find all the LETSs in that class. These classes for Codes ${\cal C}_{11}$, ${\cal C}_{12}$, 
and ${\cal C}_{14}$ are $\{(9,3),(11,3)\}, \{(8,4),(10,4),(11,3),(12,2),(12,4)\}$, and $\{(8,6),(10,6)\}$, respectively.

\section{Conclusion}
\label{conclude}
In this paper, we proposed a novel hierarchical graph-based expansion approach to characterize leafless elementary trapping sets (LETS) of variable-regular low-density parity-check (LDPC) codes.
The proposed characterization is based on three basic expansion techniques, dubbed, $dot$, $path$ and $lollipop$, used in the space of normal graphs. Each LETS structure ${\cal S}$ is characterized 
as a sequence of embedded LETS structures that starts from a simple cycle, and grows in each step by using one of the three expansions, until it reaches 
${\cal S}$. It was demonstrated that the new characterization, called $dpl$, is minimal, in that, none of the expansions in the sequence can be divided into smaller expansions
and still maintain the property that all the embedded sub-structures are LETSs.  Moreover, it was proved that any minimal characterization based on embedded LETS structures must only use one of the three 
expansions, {\em dot}, {\em path} and {\em lollipop}, at each step. The minimality of the characterization, allowed us to devise search algorithms that are provably efficient in
finding all the instances of $(a,b)$ LETS structures with $a \leq a_{max}$ and $b \leq b_{max}$, for any choice of $a_{max}$ and $b_{max}$, in a guaranteed fashion.
The efficiency of the search is a consequence of the fact that to enumerate the instances of larger LETS structures, the proposed
algorithm searches for instances of smaller LETS structures that are of interest themselves, i.e., their $a$ and $b$ values are in the range of interest.
This is unlike the so-called LSS-based search algorithm of \cite{mehdi2014}, where a large number of instances of LETS structures with large $b$ values need to be enumerated,
not because they are of direct interest, but because they happen to be parents of other LETS structures of interest.  As shown through extensive numerical results,
this reduces the search complexity of the dpl-based approach compared to LSS-based approach significantly.

To the best of our knowledge, the proposed dpl-based algorithm is the most efficient exhaustive search algorithm available for finding LETSs of variable-regular LDPC codes.
It is also the most general, in that, it is applicable to codes with any variable degree, girth, rate and block length.


\begin{thebibliography}{10}

\bibitem{abu2010trapping}
S.~Abu-Surra, D.~DeClercq, D.~Divsalar, and W.~E. Ryan, ``Trapping set
  enumerators for specific {LDPC} codes,'' in {\em Proc. Inf. Theory   Appl. Workshop (ITA), 2010}, SanDiego, CA, USA, Jan./Feb. 2010, pp.~1--5.
\bibitem{Asvadi}
R. Asvadi, A. H. Banihashemi, and M. Ahmadian-Attari, ``Lowering
the error floor of LDPC codes using cyclic liftings,'' {\em IEEE Trans. Inf. Theory}, vol. 57, no. 4, pp. 2213--2224, Apr. 2011.
\bibitem{ABA-2012}
R. Asvadi, A. H. Banihashemi and M. Ahmadian-Attari, ``Design of Finite-Length Irregular Protograph Codes with Low Error Floors over the Binary-Input AWGN Channel Using Cyclic Liftings," {\em IEEE Trans. Comm.}, vol. 60, no. 4, pp. 902 - 907, Apr. 2012.
\bibitem{butler2014}
B.~Butler and P.~Siegel, ``Error floor approximation for {LDPC} codes in the
  {AWGN} channel,'' {\em IEEE Trans. Inf. Theory}, vol.~60, pp.~7416--7441, Dec.
  2014.
\bibitem{cavus}
 E. Cavus and B. Daneshrad, ``A performance improvement and error
floor avoidance technique for belief propagation decoding of LDPC
codes,'' in {\em Proc. 16th IEEE Int. Symp. Personal, Indoor Mobile Radio Commun.}, Los Angeles, CA, USA, Sep. 2005, pp. 2386--2390.
\bibitem{cole}
C.~Cole, S.~Wilson, E. Hall, and T.~Giallorenzi, ``A general method
for finding low error rates of LDPC codes,'' [Online]. Available: 
http://arxiv.org/abs/cs/0605051.
\bibitem{declercq2013finite}
D.~Declercq, B.~Vasic, S.~K. Planjery, and E.~Li, ``Finite alphabet iterative
  decoders—part ii: Towards guaranteed error correction of {LDPC} codes via
  iterative decoder diversity,'' {\em IEEE Trans. Commun.}, vol.~61, no.~10,
  pp.~4046--4057, Oct. 2013.
\bibitem{Diao}
Q. Diao, Y. Y. Tai, S. Lin, and K. Abdel-Ghaffar, ``Trapping set structure
of finite geometry LDPC codes,"  in {\em Proc. IEEE Int. Symp. Inform.
Theory (ISIT)}, Davis, CA, USA, Jul. 2012, pp. 3088--3092.
\bibitem{Dolecek}
 L. Dolecek, Z. Zhang, V. Anantharam, M. Wainwright, and B. Nikolic,
``Analysis of absorbing sets for array-based LDPC codes,"  in {\em Proc. Int.
Conf. Commun. (ICC)}, Glasgow, Scotland, Jun. 2007, pp. 6261--6268.
\bibitem{dolecek2010analysis}
L.~Dolecek, Z.~Zhang, V.~Anantharam, M.~J. Wainwright, and B.~Nikolic,
  ``Analysis of absorbing sets and fully absorbing sets of array-based {LDPC}
  codes,'' {\em IEEE Trans. Inf. Theory}, vol.~56, no.~1, pp.~181--201, Jan. 2010.
\bibitem{fan2000}
J.~L. Fan, ``Array-codes as low-density parity-check codes”,'' in {\em Proc. 2nd Int. Symp. Turbo Codes and Related Topics}, Brest, France, Sep. 2000,
 pp.~543--546.
\bibitem{Han}
Y. Han and W. E. Ryan, ``LDPC decoder strategies for achieving low
error floors,'' in {\em Proc. Inform. Theory Appl. Workshop}, San Diego, CA,
USA, Jan. 2008, pp. 277--286.
\bibitem{yoones2015}
Y.~Hashemi and A.~Banihashemi, ``On characterization and efficient exhaustive
  search of elementary trapping sets of variable-regular {LDPC} codes,'' {\em IEEE
  Comm. Lett.}, vol.~19, pp.~323--326, Mar. 2015.
\bibitem{hashemi2015corrections}
Y.~Hashemi and A.~H. Banihashemi, ``Corrections to “on characterization of
  elementary trapping sets of variable-regular {LDPC} codes”,'' {\em IEEE
  Trans. Inf. Theory}, vol.~61, no.~3, pp.~1508--1508, Mar. 2015.
\bibitem{hu2005regular}
X.-Y. Hu, E.~Eleftheriou, and D.-M. Arnold, ``Regular and irregular progressive
  edge-growth tanner graphs,'' {\em IEEE Trans. Inf. Theory}, vol.~51, no.~1,
  pp.~386--398, Jan. 2005.
\bibitem{Ivkovic}
M. Ivkovic, S. K. Chilappagari, and B. Vasic, ``Eliminating trapping sets
in low-density parity-check codes by using Tanner graph covers,'' {\em IEEE Trans. Inf. Theory}, vol. 54, no. 8, pp. 3763--3768, Aug. 2008.
\bibitem{mehdi2014}
M.~Karimi and A.~H. Banihashemi, ``On characterization of elementary trapping
  sets of variable-regular {LDPC} codes,'' {\em IEEE Trans. Inf. Theory}, vol.~60,
  no.~9, pp.~5188--5203, Sep 2014.
\bibitem{mehdi2012}
M.~Karimi and A.~H. Banihashemi, ``Efficient algorithm for finding dominant
  trapping sets of {LDPC} codes,'' {\em IEEE Trans. Inf. Theory}, vol.~58,
  no.~11, pp.~6942--6958, Nov. 2012.
\bibitem{mehdi1}
M.~Karimi and A.~H. Banihashemi, ``Message-passing algorithms for counting
  short cycles in a graph,'' {\em IEEE Trans. Commun.}, vol.~61, no.~2,
  pp.~485--495, Feb. 2013.
\bibitem{KAB-2012}
S. Khazraie, R. Asvadi and A. H. Banihashemi, ``A PEG Construction of Finite-Length LDPC Codes with Low Error Floor,'' {\em IEEE Comm. Lett.}, vol. 16, pp. 1288 - 1291, Aug. 2012.
\bibitem{kyung2012finding}
G.~B. Kyung and C.-C. Wang, ``Finding the exhaustive list of small fully
  absorbing sets and designing the corresponding low error-floor decoder,''
  {\em IEEE Trans. Commun.}, vol.~60, no.~6, pp.~1487--1498, Jun. 2012.
\bibitem{laendner2010characterization}
S.~Laendner, O.~Milenkovic, and J.~B. Huber, ``Characterization of small
  trapping sets in {LDPC} codes from steiner triple systems,'' in {\em Proc. 6th Int. Symp. Turbo
  Codes and Iterative Inf. Process.}, Brest, France, Sep. 2010, pp.~93--97.
\bibitem{laendner2009}
S.~Laendner, T.~Hehn, O.~Milenkovic, and J.~B. Huber, ``The trapping redundancy
  of linear block codes,'' {\em IEEE Trans. Inf. Theory}, vol.~55, no.~1,
  pp.~53--63, Jan. 2009.
\bibitem{mackayencyclopedia}
D.~Mackay, ``Encyclopedia of sparse graph codes,''  URL:
  http://www.inference.phy.cam.ac.uk/mackay/codes/data.html.
\bibitem{margulis}
G. A. Margulis, `` Explicit constructions of graphs without short cycles and low density codes," {\em Combinatorica}, vol. 2, no. 1, pp. 71--78, 1982.
\bibitem{mckay2014practical}
B.~D. McKay and A.~Piperno, ``Practical graph isomorphism, {II},'' {\em J.
  Symbolic Computation}, vol.~60, pp.~94--112, 2013.
\bibitem{milenkovic2007asymptotic}
O.~Milenkovic, E.~Soljanin, and P.~Whiting, ``Asymptotic spectra of trapping
  sets in regular and irregular {LDPC} code ensembles,'' {\em IEEE Trans. Inf.
  Theory}, vol.~53, no.~1, pp.~39--55, Jan. 2007.
\bibitem{nguyen}
D.~V. Nguyen, S.~K. Chilappagari, M.~W. Marcellin, and B.~Vasic, ``On the
  construction of structured {LDPC} codes free of small trapping sets,'' {\em
  IEEE Trans. Inf. Theory}, vol.~58, no.~4, pp.~2280--2302, Apr. 2012.
\bibitem{richardson}
T.~Richardson, ``Error floors of {LDPC} codes,'' in {\em Proc. 41th annual
  Allerton conf. on commun. control and computing}, Monticello, IL, USA, 
  Oct. 2003, pp.~1426--1435.
\bibitem{Rosnes}
 E. Rosnes and O. Ytrehus, ``An efficient algorithm to find all small-size
stopping sets of low-density parity-check matrices," {\em IEEE Trans. Inf. Theory}, vol. 55, no. 9, pp. 4167--4178, Sep. 2009.
\bibitem{schlegel2010dynamics}
C.~Schlegel and S.~Zhang, ``On the dynamics of the error floor behavior in
  (regular) {LDPC} codes,'' {\em IEEE Trans. Inf. Theory}, vol.~56, no.~7,
  pp.~3248--3264, Jul. 2010.
\bibitem{michael2001class}
R.~M.~ Tanner, T.~D. Sridhara, and T.~Fuja, ``A class of group-structured {LDPC}
  codes,'' in {\em Proc. Int. Symp. Commun. Theory Appl.}, Ambleside, U.K., Jul. 2001, pp.~365--370.
\bibitem{sina1}
S.~Tolouei and A.~H. Banihashemi, ``Fast and accurate error floor estimation of
  quantized iterative decoders for variable-regular {LDPC} codes,'' {\em
 IEEE Comm. Lett.}, vol.~18, pp.~1283--1286, Aug. 2014.
\bibitem{vasic2009trapping}
B.~Vasic, S.~K. Chilappagari, D.~V. Nguyen, and S.~K. Planjery, ``Trapping set
  ontology,'' in {\em Proc. 47th Annu. Allerton Conf. Commun., Control Comput.}, Monticello, IL, USA, Sep. 2009, pp.~1--7.
\bibitem{wang2009finding}
C.-C. Wang, S.~R. Kulkarni, and H.~V. Poor, ``Finding all small error-prone
  substructures in {LDPC} codes,'' {\em IEEE Trans. Inf. Theory}, vol.~55,
  no.~5, pp.~1976--1999, May 2009.
\bibitem{Wang}
C.-C. Wang, ``On the exhaustion and elimination of trapping sets:
Algorithms \& the suppressing effect," in {\em Proc. IEEE Int. Symp. Inform.
Theory (ISIT)}, Nice, France, Jun. 2007, pp. 2271--2275.


\bibitem{zhang2009toward}
Y.~Zhang and W.~E. Ryan, ``Toward low {LDPC}-code floors: a case study,'' {\em
  IEEE Trans. Commun.}, vol.~57, no.~6, pp.~1566--1573, Jun. 2009.

\bibitem{zhang2013controlling}
S.~Zhang and C.~Schlegel, ``Controlling the error floor in {LDPC} decoding,''
  {\em IEEE Trans. Commun.}, vol.~61, no.~9, pp.~3566--3575, Sep. 2013.

\bibitem{zhang2011efficient}
X.~Zhang and P.~H. Siegel, ``Efficient algorithms to find all small error-prone
  substructures in {LDPC} codes,'' in {\em Proc. Global Telecommun. Conf.}, Houston, TX, USA, Dec. 2011, pp.~1--6.


\end{thebibliography}
\end{document}